\newtheoremstyle{break}
  {\topsep}{\topsep}%
  {\itshape}{}%
  {\bfseries}{}%
  {\newline}{}%
\theoremstyle{break}
\newtheorem*{theorem*}{Theorem}
\newtheorem{theorem}{Theorem}[section]
\newtheorem{proposition}[theorem]{Proposition}
\newtheorem{corollary}[theorem]{Corollary}
\newtheorem{remark}[theorem]{Remark}
\newtheorem{lemma}[theorem]{Lemma}
\newtheorem{definition}[theorem]{Definition}
\numberwithin{equation}{section}
\DeclarePairedDelimiter\autobracket{(}{)}
\newcommand{\br}[1]{\autobracket*{#1}}
\title{Mass inflation from rough initial data 
 for the spherically symmetric Einstein-Maxwell-scalar field system with $\Lambda$}
\author{
Flavio Rossetti\thanks{\textit{E-mail address}: \texttt{flavio.rossetti@gssi.it}}
}
\affil{Gran Sasso Science Institute, Viale Francesco Crispi 7, L'Aquila (AQ), 67100, Italy}
\newcommand{\Ric}{\text{Ric}}
\newcommand{\R}{\text{R}}
\date{} 
\begin{document}
\maketitle

\begin{abstract}
Recent rigorous results on black hole interiors, combined with numerical experiments on black hole exteriors, clearly  suggest that the strong cosmic censorship conjecture fails in its most fundamental, i.e.\ weak,  formulation: violations are expected for a class $\mathcal{B}_{e, M, \Lambda}$ of spherically symmetric charged black holes in the presence of a positive cosmological constant $\Lambda$ near extremality. 

These results require sufficiently regular solutions. 
Conversely, when non-smooth, finite-energy initial data are prescribed for linear waves propagating on a fixed black hole background $\mathcal{B}_{e, M, \Lambda}$ belonging to the aforementioned family, it was shown  \cite{DafYak18} that the local energy of these linear waves blows up at the Cauchy horizon, hence hinting that non-smooth initial data may suppress the possible violations of the $H^1$ formulation of strong cosmic censorship.

In line with this intuition,  we prove that rough initial data  can also trigger an instability at the Cauchy horizon  in the non-linear setting, via mass inflation.
In particular, we analyse a characteristic initial value problem for the spherically symmetric Einstein-Maxwell-real scalar field system describing the interior of a black hole. Our results show that, when prescribing:
\begin{itemize}
    \item  initial data asymptotically approaching those of a sub-extremal Reissner-Nordstr{\"o}m-de Sitter solution $\mathcal{B}_{e, M, \Lambda}$, via an exponential Price law bound, and
    \item initial data belonging to $W^{1, 2}(\underline{C}_{v_0}; \mathbb{R})\setminus W^{1, q}(\underline{C}_{v_0}; \mathbb{R})$, for every $q > 2$, along the initial ingoing compact segment $\underline{C}_{v_0}$, 
\end{itemize}
then the Hawking mass (that, in this low-regularity setting, is an $H^1$ geometric quantity) diverges at the Cauchy horizon of the (non-smooth) black hole solution we construct, for every parameter choice $(e, M, \Lambda)$ of the reference black hole. 

In this larger family of configurations, we prove that the  smooth data suggesting violations of strong cosmic censorship are non-generic in a ``positive co-dimension'' sense, conditionally to the validity of the expected Price law bounds. 

Moreover, we illustrate the transition between smooth and rough initial data. If more regularity is assumed, then possible violations of strong cosmic censorship reemerge. Namely,  given $p > 2$, we further consider,  along the initial ingoing segment, data belonging to $W^{1, p} (\underline{C}_{v_0}; \mathbb{R}) \setminus W^{1, q}(\underline{C}_{v_0}; \mathbb{R})$, for every $q > p$. We find that these non-smooth initial data are still sufficiently regular to yield $H^1$ extensions beyond the Cauchy horizon, for choices of $(e, M, \Lambda)$ corresponding to  near-extremal black holes. In the large-$p$ limit, the set of $(e, M, \Lambda)$ yielding $H^1$ extensions  approaches, in a suitable sense, the set of black hole parameters giving rise to $H^1$ extensions in the smooth setting.
\end{abstract}

\tableofcontents

\section{Introduction} \label{section:intro}

The strong cosmic censorship conjecture, first proposed by Penrose \cite{PenroseBattelle}, is a  statement 
 of global uniqueness of solutions to the initial value problem for the Einstein equations. When global uniqueness fails, spacetime solutions can be extended in a non-unique way beyond the maximal globally hyperbolic development of the prescribed initial data, hence leading to a breakdown of Laplacian determinism in general relativity (we refer to \cite{ChristodoulouIVP, Dafermos_2003} for related analyses of the conjecture). 
 
Remarkably, recent results (see already section \ref{section:related_works} for an overview of past works) in the mathematics and physics literature  suggest a failure of  strong cosmic censorship  for charged black holes in the presence of a  positive cosmological constant, at least when considering smooth initial data.

However, the strong cosmic censorship conjecture can be formulated, by definition, in any space of functions in which the initial value problem for the Einstein equations is well-posed. By working in a non-smooth setting, the linear analysis in \cite{DafYak18} (see also the numerical results in \cite{Reall+Rough}) showed that low-regularity initial data lead to solutions to the linear wave equation, 
 on fixed backgrounds given by charged cosmological black holes, having unbounded $H^1$ norm. 
 By interpreting the linear wave equation as a proxy for the Einstein equations, this linear blow-up hints that global uniqueness may hold for metric solutions evolved from rough initial data, in stark contrast to the scenario emerging in the smooth setting.
 
 In particular, \cite{DafYak18} proved that, if non-smooth  initial data of Sobolev regularity  are prescribed along a suitable initial hypersurface, then solutions to the linear wave equation 
\begin{equation} \label{eqn_0}
\square_g \phi = 0
\end{equation}
 with respect to the metric\footnote{Here $\sigma_{S^2}$ is the standard metric on $S^2$.}
\begin{equation} \label{RNdS_metric}
g = -\br{1-\frac{2M}{r} + \frac{e^2}{r^2} - \frac{\Lambda}{3}r^2}dt^2+\br{1-\frac{2M}{r} + \frac{e^2}{r^2} - \frac{\Lambda}{3}r^2}^{-1}dr^2 + r^2 \sigma_{S^2},
\end{equation}
 blow up in $H^1$ (i.e.\ energy) norm at the Cauchy horizon of the sub-extremal Reissner-Nordstr{\"o}m-de Sitter spacetimes described by $g$, for every choice of the spacetime mass $M > 0$, spacetime charge $e \in \mathbb{R} \setminus \{0\}$ and cosmological constant $\Lambda > 0$.
 Note that:
 \begin{itemize}
     \item  the Reissner-Nordstr{\"o}m-de Sitter family is the unique family of spherically symmetric, asymptotically de Sitter (in a suitable sense) solutions to the Einstein-Maxwell equations with $\Lambda > 0$, and
     \item $H^1$ regularity is the minimum requirement needed for a metric to solve the Einstein equations in a weak sense.\footnote{See already section \ref{section:rough_LWP} for the definition of  weak solutions that we employ.}
 \end{itemize}
Seeing \eqref{eqn_0} as a \textbf{linear} surrogate of the quasi-linear system of the Einstein equations, it is natural to contemplate whether an $H^1$ instability result should similarly hold in any non-smooth, non-linear setting describing perturbations of Reissner-Nordstr{\"o}m-de Sitter black hole interiors. Far from being a purely technical remark, results in this direction would provide further insights to support  resolutions of the strong cosmic censorship conjecture in a non-smooth setting.

 In general, non-linear instability results of this kind cannot be inferred from the linear case alone. Indeed, it is a challenging task to determine a priori how non-linear terms could affect linear instability results. A paramount example is given by the long-standing problem of quantifying and describing the non-linear consequences of the blueshift effect, which is one of the main mechanisms driving the linear instabilities at Cauchy horizon of the black holes that we consider in this article (see already section \ref{section:related_works}).

In the present work, we initiate the analysis of the  \textbf{non-linear}   instability and stability of Cauchy horizons of cosmological black holes under the assumption of rough initial data.
More precisely, we investigate the spherically symmetric Einstein-Maxwell-real scalar field system
\begin{equation} \label{main_system}
\begin{cases}
\Ric(g)-\frac{1}{2}\R(g) \,g + \Lambda g = 2 T, &\textit{\normalfont (Einstein's equations)}\\[0.3em]
dF=0, \quad d \star F = 0, &\textit{\normalfont (Maxwell's equations)}\\[0.3em]
\square_g \phi = 0, &\textit{\normalfont (Wave equation)}
\end{cases}
\end{equation}
with $\Lambda > 0$, electromagnetic field tensor $F$ and energy-momentum tensor
\[
\tensor{T}{_\mu_\nu} = \tensor{\partial}{_\mu} \phi \tensor{\partial}{_\nu} \phi - \frac12 \tensor{g}{_\mu_\nu} \tensor{\partial}{_\alpha} \phi \tensor{\partial}{^\alpha} \phi + \tensor{F}{_\mu_\alpha} \tensor{F}{_\nu^\alpha} - \frac14 \tensor{F}{_\alpha_\beta} \tensor{F}{^\alpha^\beta} \tensor{g}{_\mu_\nu}. 
\]
We focus on the spherically symmetric setting as a proxy for the more general case of the vacuum Einstein equations without symmetry assumptions. A detailed description of the Einstein-Maxwell-scalar field model in the context of strong cosmic censorship can be found, e.g.\ in \cite{Dafermos_2003}.

In the present work, non-trapped initial data are prescribed along two transversal characteristic hypersurfaces, such that the outgoing initial hypersurface is identified with the event horizon of a black hole asymptotically approaching a reference Reissner-Nordstr{\"o}m-de Sitter solution having parameters $e, M, \Lambda$. This condition is described by an exponential Price law upper bound, i.e.\  the tangential derivative of the scalar field along the event horizon $C_0 = \mathcal{H}^+$ satisfies
\begin{equation} \label{price_law_intro}
    |\partial_v \phi|_{|C_0} \le C e^{-s v},
\end{equation}
for some $C, s > 0$ and where $v$ is an Eddington-Finkelstein type of coordinate. The exponential character of the scalar field decay is a consequence of the presence of a positive cosmological constant.

The null coordinate chart $(u, v)$, where $u$ is a Kruskal coordinate, will be properly introduced in section \ref{subsection:defs}. In the following, we will represent the initial characteristic hypersurfaces as $C_0 = \mathcal{H}^+= \{0\} \times [0, +\infty)$ and $\underline{C}_0 = [0, U] \times \{0\}$, for some $U > 0$. 

 We recall that the results in \cite{CGNS4} and \cite{Rossetti} imply that smooth initial data prescribed for  system \eqref{main_system} can lead to black hole  solutions presenting mass inflation\footnote{That is, the geometric quantity $\varpi$ that generalizes the black hole mass diverges.} along the Cauchy horizon $\mathcal{CH}^+$ determined by the initial hypersurfaces, or, alternatively, to solutions admitting metric extensions of $H^1$ regularity beyond $\mathcal{CH}^+$. The latter scenario occurs when the reference black hole is near-extremal, and is interpreted as a violation of the $H^1$ formulation of strong cosmic censorship for system \eqref{main_system}, conditionally to the validity of the assumed Price law. See also fig.\ \ref{fig:smooth_results}. We emphasize that  $H^1$  regularity is the minimum requirement for the metric to solve the Einstein equations in a weak sense, hence $H^1$ extensions beyond Cauchy horizons constitute violations of Laplacian determinism at a fundamental level of regularity.
\begin{figure}[h]
    \centering
    \includegraphics[width=0.5\linewidth]{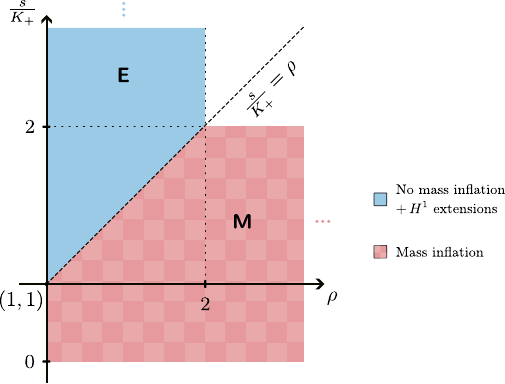}
    \caption{Known results for system \eqref{main_system} when \textbf{smooth} initial data are prescribed (see \cite{CGNS4} and \cite{Rossetti}). Here $s$ encodes the decay dictated by Price's law upper bound (see \eqref{price_law_intro}) and $\rho = K_{-}/K_+$ is the ratio of the surface gravities of the reference black hole. The mass inflation results require  exponential Price law upper and lower bounds, whereas the stability result requires an upper bound only. No result is available in the region $\{\rho \ge 2\} \cap \{ s \ge 2K_+ \}$ and in $\{s = \rho K_+\}$. See already remark \ref{remark:comparison}, where this point is discussed.}
    \label{fig:smooth_results}
\end{figure}

In the current paper we show that, for \textbf{every choice} of smooth initial data  for \eqref{main_system} among those leading to $H^1$ extensions, if we perturb such data in a larger class of functions for which less regularity is assumed, then the perturbed data yield mass inflation along the Cauchy horizon (i.e.\ an $H^1$ geometric quantity diverges).  In other terms, we show that \textbf{the smooth data associated to the failure of global uniqueness for system \eqref{main_system} should be regarded as \textit{exceptional} in a larger moduli space of well-posed initial data}. In the following, we present a more precise description of this  result, together with a summary of the main non-linear instability and stability results that we obtain.

Given $p \ge 2$, let us  consider the following space of initial data\footnote{We perturb the smooth initial data for $\partial_u \phi$ and $\partial_v \phi$ in order to construct rough initial data. Since the remaining PDE quantities are assumed to be smooth and unperturbed, here we focus only on the moduli space of initial data for these two functions.} for $\partial_u \phi$ and $\partial_v \phi$:
    \[
    \mathcal{R}^p_{\text{i.d.}} \coloneqq L^p_u([0, U]) \times C^0_v([0, +\infty))
    \]
    and its subset 
    \[
    \mathcal{R}^{p, \text{FE}}_{\text{i.d.}} \coloneqq \Big ( L^p_u([0, U])\setminus \bigcup_{q > p} L^q_u([0, U]) \Big ) \times  C^0_v([0, +\infty))    \subset \mathcal{R}^p_{\text{i.d.}}.
    \]
    When $p=2$, we denote these two sets by the names \textit{rough data} and \textit{finite-energy data}, respectively. For the sake of simplicity, we keep an analogous notation  for $p > 2$ as well.
    Moreover, consider the space of smooth initial data 
    \[
    \mathcal{S}_{\text{i.d.}} \coloneqq C^0_u([0, U]) \times C^0_v([0, +\infty)).
    \]
    Note that $\mathcal{S}_{\text{i.d.}} \subset \mathcal{R}^p_{\text{i.d.}}$ but $\mathcal{S}_{\text{i.d.}} \cap \mathcal{R}^{p, \text{FE}}_{\text{i.d.}} = \emptyset$ for any $p \ge 2$.

We prove the following non-linear \textbf{instability and stability results} (see also section \ref{section:main_results} for a more precise statement of our theorem and see fig.\ \ref{fig:moduli_space} for a depiction of the following statements):
\begin{enumerate}
    \item \textbf{Mass inflation for finite-energy initial data:} Let $p=2$ and assume that
    \begin{itemize}
        \item the smooth initial data in $\mathcal{S}_{\text{i.d.}}$ satisfy, along $C_0 = \{0\} \times [0, +\infty)$,  the pointwise exponential Price law upper bound \eqref{price_law_intro}, and
        \item the finite-energy initial data in $\mathcal{R}_{\text{i.d.}}^{2, \text{FE}}$ satisfy, along $C_0 = \{0\} \times [0, +\infty)$,  a pointwise exponential Price law upper and lower bound.\footnote{The Price law lower bound assumption is expected to hold along the event horizon in the near-extremal scenarios yielding $H^1$ extensions. In fact, the near-extremal modes determining the decay of real scalar fields satisfying the wave equation in the exterior of Reissner-Nordstr{\"o}m-de Sitter do not present any oscillatory behaviour. See also the numerical work \cite{QNMandSCCC} and the recent rigorous proof  in \cite{HintzQNM}. }
    \end{itemize}
    Moreover, for every $Q=((\partial_u \phi)_0, (\partial_v \phi)_0) \in \mathcal{S}_{\text{i.d}}$ yielding $H^1$ extensions for \eqref{main_system} we define a line
    \[
    L_Q \coloneqq \{ Q + \omega \bm{\mathcal{Q}} \colon \omega \in \mathbb{R}, \, \bm{\mathcal{Q}}=\bm{\mathcal{Q}}(Q) \in \mathcal{R}_{\text{i.d.}}^{2, FE}\}, 
    \]
    where $L_Q \setminus \{Q\}$ is contained in $\mathcal{R}_{\text{i.d.}}^{2, FE}$. Then, every choice of initial data in $L_Q$ (except for $Q$ itself) yields mass inflation in the non-smooth black hole solution we construct. The choice of $\bm{\mathcal{Q}}=\bm{\mathcal{Q}}(Q)$ in the definition of $L_Q$ is highly non-unique, i.e.\ we are able to construct infinitely many lines passing through $Q$. Given two different points $Q, R \in \mathcal{S}_{\text{i.d.}}$, the lines $L_Q$ and $L_R$ do not intersect. 
    \item \textbf{Mass inflation and $\bm{H^1}$ extensions between finite-energy and smooth initial data:} Let $p > 2$ and assume the exponential Price law bounds of the previous point.\footnote{Strictly speaking, we require an exponential Price law lower bound only when we show mass inflation. This assumption can be dropped when constructing $H^1$ extensions.} Moreover, for every $Q=((\partial_u \phi)_0, (\partial_v \phi)_0) \in \mathcal{S}_{\text{i.d}}$ yielding $H^1$ extensions, we consider a  line 
    \[
    L_Q \coloneqq \{ Q + \omega \bm{\mathcal{Q}}\colon \omega \in \mathbb{R}, \, \bm{\mathcal{Q}}=\bm{\mathcal{Q}}(Q) \in \mathcal{R}_{ \text{i.d.}}^{p, FE}\}, 
    \]
    where $L_Q \setminus \{ Q \}$ is contained in $\mathcal{R}_{ \text{i.d.}}^{p, FE}$. Then, two mutually  exclusive scenarios occur:\footnote{Except for a  choice of black hole parameters  corresponding to the boundary, in the $(s, \rho)$ parameter space, of the set of parameters yielding $H^1$ extensions. For this specific choice we cannot prove mass inflation nor the existence of $H^1$ extensions.} either every choice of initial data in $L_Q$ (except for $Q$ itself) yields mass inflation, or every choice of initial data in $L_Q$ gives $H^1$ extensions. In particular, the latter scenario occurs if enough regularity is assumed. Namely, for each $Q \in \mathcal{S}_{\text{i.d.}}$ yielding $H^1$ extensions in the smooth case, there exists $p > 2$ sufficiently large such that $L_Q \setminus \{Q\} \subset \mathcal{R}_{\text{i.d.}}^{p, \text{FE}}$ and such that every choice of initial data in $L_Q$ yields $H^1$ extensions. In the large-$p$ limit, we can construct lines of data yielding $H^1$ extensions for all near-extremal initial data giving $H^1$ extensions in the smooth setting. Moreover, any two distinct lines $L_Q$ and $L_R$ corresponding to different points $Q, R \in \mathcal{S}_{\text{i.d.}}$ do not intersect.
\end{enumerate}
\begin{figure}[h]
    \centering
    \includegraphics[width=\linewidth]{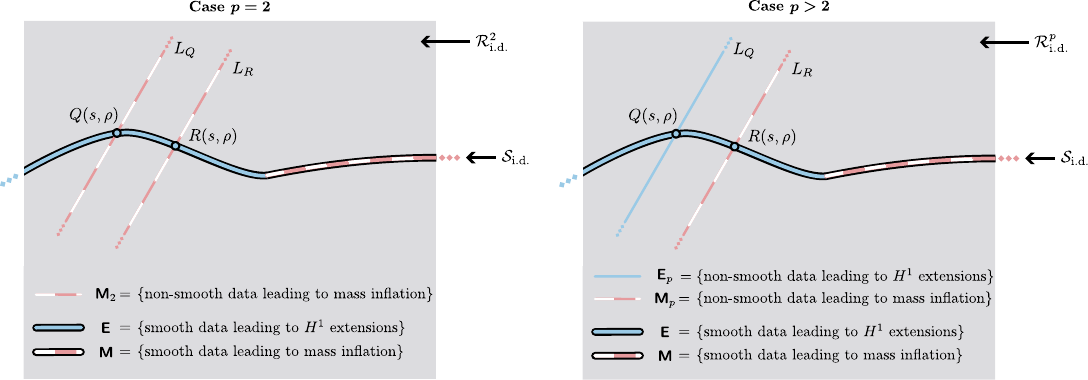}
    \caption{A schematic representation of the moduli space of initial data, illustrating the main results of this paper. When $p=2$, for each choice $Q$ of smooth initial data (prescribed in the black hole interior) leading to $H^1$ extensions for system \eqref{main_system}, we can find a line $L_Q$ of rough initial data that, except for the point $Q$, lead to mass inflation. When $p>2$, either we have lines of non-smooth data yielding mass inflation, or lines of non-smooth data yielding $H^1$ extensions. Lines corresponding to different initial data do not intersect. Here $\mathcal{R}^p_{\text{i.d.}}$ and $\mathcal{S}_{\text{i.d.}}$ are the spaces of non-smooth and smooth initial data, respectively, considered in the present paper. Each choice of smooth initial data can be associated with certain values of $s$ and $\rho= K_{-}/K_+$, see also fig.\ \ref{fig:smooth_results} for a representation of this parameter space. Different initial data may potentially be associated with the same values of $s$ and $\rho$.   We are assuming initial data that satisfy a Price law upper and lower bound along the event horizon, as expected from the near-extremal scenario for real scalar fields.}
    \label{fig:moduli_space}
\end{figure}
As a consequence of the above, we conclude that the smooth initial data for \eqref{main_system} yielding $H^1$ extensions beyond the Cauchy
horizon are non-generic (in a ``positive co-dimension'' sense) in a large set of finite-energy initial data. In the latter, well-posedness of the initial value problem holds (see already section \ref{section:rough_LWP}). We refer the reader to \cite{ChristodoulouIVP} for an analogous
discussion of \textit{exceptional} initial data in the context of the weak cosmic censorship conjecture.
 
  Furthermore, the rough initial data we consider are regular enough so that we are able to prove continuous stability results (even without assuming an exponential Price law lower bound):
\begin{enumerate}
    \setcounter{enumi}{2}
    \item \textbf{$\bm{C^0}$ extensions:} Given $p \ge 2$, every choice of initial data in $\mathcal{R}^p_{\text{i.d.}}$ yields a  black hole interior solution, endowed with a Cauchy horizon, solving the Einstein-Maxwell-scalar field system \eqref{main_system} in an integrated sense. The metric solution and the scalar field $\phi$ are continuously extendible up to the Cauchy horizon.
\end{enumerate}

The above results concern perturbations of smooth initial data leading to $H^1$ extensions (i.e.\ points belonging to $\mathsf{\textbf{E}}$ in the parameter space of fig.\ \ref{fig:smooth_results}). This is the most relevant case of the present paper, since such extensions suggest a negative resolution to the $H^1$ formulation of the strong cosmic censorship conjecture for system \eqref{main_system}, in the smooth setting. 

On the other hand, we also obtain additional mass inflation results when perturbing smooth initial data in the complement of the set $\mathsf{\textbf{E}}$ (including the region in which neither mass inflation nor $H^1$ extensions are proved). These results are presented in detail in section \ref{section:main_results}. We also stress that smooth initial data in  $\mathsf{\textbf{E}}$ are regarded as near-extremal initial data: the exponential Price law lower bound that we assume for $\partial_v \phi$ is expected to hold in this case (even though such lower bound is  not required to prove the existence of $H^1$ extensions, see \cite{Rossetti}). Conversely, smooth data for $\partial_v \phi$ in the complement of $\mathsf{\textbf{E}}$ are expected to potentially exhibit an oscillatory behaviour.\footnote{In the language of \cite{QNMandSCCC}, this is due to the presence of photon sphere modes dictating an oscillating behaviour for solutions to the linear wave equation on a fixed Reissner-Nordstr{\"o}m-de Sitter exterior.}

\subsection{The Einstein-Maxwell-scalar field system} \label{subsection:defs}

Motivated by our symmetry assumptions, we require that the scalar field $\phi$ is spherically symmetric and that every spacetime solution $(\mathcal{M}, g)$ to \eqref{main_system} is such that its Lorentzian metric can be expressed as
\begin{equation} \label{sph_symm_metric}
g = -\Omega^2(u, v)du dv + r^2(u, v) \sigma_{S^2},
\end{equation}
for some real-valued function $\Omega$, where $\sigma_{S^2}$ is the standard metric on $S^2$ and $(u, v)$ are null coordinates\footnote{Here $\partial_u$ is null, ingoing and future-pointing and $\partial_v$ is null, outgoing and future-pointing. The remaining gauge freedom in the $(u, v)$ chart will be fixed in the following sections.} in the $(1 + 1)-$dimensional manifold $\mathcal{M}/SO(3)$.  The function $r$ measures the area of the orbits of the $SO(3)$ action, i.e.\ if $\Pi \colon \mathcal{M} \to \mathcal{M}/SO(3)$ is the canonical projection, then:
\[
r(u, v) = \sqrt{\frac{\text{Area}(\Pi^{-1}(u, v))}{4\pi}}.
\]
For ease of reading, in our notation we will not distinguish between quantities defined on $\mathcal{M}$ and their push-forward through the canonical projection. 

Since the scalar field $\phi$ is real-valued, the Maxwell equations decouple and yield (see e.g.\ \cite{CGNS1} for  the next lines):
\[
F = -\frac{e \Omega^2(u, v)}{2 r^2(u, v)} du \wedge dv,
\]
where $e \in \mathbb{R} \setminus \{0\}$ is interpreted as the black hole charge.

In this coordinate system, the Einstein-Maxwell-real-scalar field system reads
\begin{align} 
\partial_u \partial_v r &= -\frac{\Omega^2}{4r} -\frac{\partial_u r \partial_v r}{r} + \frac{\Omega^2 e^2}{4 r^3} + \frac{\Omega^2 \Lambda r}{4}, \label{wave_r} \\
r\partial_u \partial_v \phi &= -\partial_u r \partial_v \phi - \partial_v r \partial_u \phi, \label{wave_phi} \\
\partial_v \partial_u \log \Omega^2 &= -2\partial_u \phi \partial_v\phi - \frac{\Omega^2 e^2} { r^4} + \frac{\Omega^2}{2 r^2} + \frac{2 \partial_u r \partial_v r}{r^2}, \label{wave_Omega} \\
\partial_u \br{\frac{\partial_u r}{\Omega^2}} &= - r \frac{(\partial_u \phi)^2}{\Omega^2}, \label{raych_u}\\
\partial_v \br{\frac{\partial_v r}{\Omega^2}} &= -r \frac{(\partial_v \phi)^2}{\Omega^2}, \label{raych_v}
\end{align}
namely it is composed by wave equations for $r, \phi$ and $\log \Omega^2$, respectively, and two Raychaudhuri equations. 
The unique set of (smooth and asymptotically de Sitter) solutions to the above system when $\phi \equiv 0$ is given by the Reissner-Nordstr{\"o}m-de Sitter family of metrics, characterized by a black hole mass parameter $M > 0$, black hole charge $e \in \mathbb{R} \setminus \{0\}$ and cosmological constant $\Lambda > 0$. We will take a member of this family as a \textit{reference} solution to compare our dynamical solution with.

We also define 
\begin{align*}
    \nu &\coloneqq \partial_u r, \\
    \lambda &\coloneqq \partial_v r
\end{align*}
and, furthermore:
\begin{align}
    \varpi &\coloneqq \frac{e^2}{2r} + \frac{r}{2} - \frac{\Lambda}{6}r^3 + \frac{2r}{\Omega^2} \nu \lambda, \label{def_varpi} \\
    \kappa &\coloneqq -\frac{\Omega^2}{4 \nu}, \label{def_kappa}\\
    \mu &\coloneqq \frac{2 \varpi}{r} - \frac{e^2}{r^2} + \frac{\Lambda}{3}r^2, \label{def_mu} \\
    K &\coloneqq \frac{\varpi}{r^2} - \frac{e^2}{r^3} - \frac{\Lambda}{3}r, \label{def_K}
\end{align}
whenever these quantities are well-defined, 
where $\varpi$ is the (renormalized) Hawking mass and $K$ generalizes the quantities
\begin{equation} \label{def_Kplus}
K_+ \coloneqq \frac{M}{r_+^2} - \frac{e^2}{r_+^3} - \frac{\Lambda}{3}r_+ > 0, 
\end{equation}
and 
\begin{equation} \label{def_Kminus}
-K_{-} \coloneqq  \frac{M}{r_{-}^2} - \frac{e^2}{r_{-}^3} - \frac{\Lambda}{3}r_{-} < 0, 
\end{equation}
that are the surface gravities of the event horizon and of the Cauchy horizon, respectively, of the reference 
 Reissner-Nordstr{\"o}m-de Sitter black hole. The values $r_{-}$ and $r_+$ are radii of the two Killing horizons.

Local well-posedness results for \eqref{wave_r}--\eqref{raych_v}, in a low-regularity setting, will be discussed in section \ref{section:rough_LWP}. The main equations stemming from the integral formulation of the system are listed in appendix \ref{appendix:useful_expressions}.

\subsection{Main results} \label{section:main_results}
\begin{theorem} \label{thm:main}
    Let $e \in \mathbb{R} \setminus \{0\}$, $M > 0$ and $\Lambda > 0$ be, respectively, the charge, mass and cosmological constant associated to a reference sub-extremal Reissner-Nordstr{\"o}m-de Sitter black hole. Let us consider the future maximal globally hyperbolic development $(\mathcal{M}, g, F, \phi)$ of \eqref{main_system} with respect to  spherically symmetric initial data prescribed on two transversal null hypersurfaces $ \underline{C}_{v_0} \cup C_0$. Let $(u, v)$ be the system of null coordinates given in \eqref{sph_symm_metric}.

    Assume that the initial data to the characteristic initial value problem (IVP) satisfy the assumptions in section \ref{section:assumptions}, that require, in particular, that:
    \begin{itemize}
        \item the initial data asymptotically approach those of the reference sub-extremal Reissner-Nordstr{\"o}m-de Sitter black hole along the initial outgoing hypersurface,
        \item  the following exponential Price law upper bound holds along the event horizon $\mathcal{H}^+ \equiv C_0 = [v_0, +\infty)$:
         \begin{equation} \label{thm_Price_law}
    |\partial_v \phi|_{|\mathcal{H}^+}(0, v) <  e^{-sv},
    \end{equation}
    where  $v$ is an Eddington-Finkelstein type of coordinate,
    \item given $p$ fixed, $2 \le p < +\infty$, along the initial (compact) ingoing hypersurface $\underline{C}_{v_0} = [0, U]$ we assume that
    \[
    (\partial_u \phi)_{|\underline{C}_{v_0}}(u) = (\partial_u \phi)_0(u) + \omega f_1(u), \quad \forall\, u \in [0, U],
    \]
    where $u$ is a Kruskal type of null coordinate,  $\omega \in \mathbb{R} \setminus \{0\}$,  $(\partial_u \phi)_0 \in C^0(\underline{C}_{v_0})$ and
    \begin{equation} \label{hR_rule}
    f_1 \in  L^{p}_u(\underline{C}_{v_0}) \setminus  \bigcup_{q > p} L^{q}_u(\underline{C}_{v_0}).
    \end{equation}
    \item along the initial outgoing hypersurface $C_0 = [v_0, +\infty)$ we assume that
    \[
    \partial_v \phi_{|C_0}(v) = (\partial_v \phi)_0(v) +\omega f_2(v), \quad \forall\, v \ge v_0,
    \]
    where $(\partial_v \phi)_0, f_2 \in C^0({C}_0)$ and $\omega \in \mathbb{R} \setminus \{ 0 \}$.
\end{itemize}
    Then, for $U > 0$ sufficiently small compared to the initial data, there exists a unique solution (in a suitable integrated sense\footnote{See already definition \ref{def_W1psolution}.}) to the characteristic IVP, belonging to $W_u^{1, p} \setminus W_u^{1, q}$ (for every $q$ such that $q > p$), defined in $[0, U] \times [v_0, +\infty)$ (which is the future domain of dependence  of $ \underline{C}_{v_0} \cup C_0$ expressed in our coordinates) and described by  metric \eqref{sph_symm_metric}. Moreover:
    \[
    u \mapsto \lim_{v \to +\infty} r(u, v) \ge \mathfrak{r} > 0
    \]
    is a continuous function satisfying
    \[
    \lim_{u \to 0} \lim_{v \to +\infty} r(u, v) = r_{-},
    \]
    where $r_{-}$ is the radius of the Cauchy horizon of the reference black hole.

    Furthermore, we have the following stability results:
  \begin{enumerate}
      \item $(\mathcal{M}, g, F, \phi)$ can be extended up to the Cauchy horizon with continuous regularity of the metric and of the scalar field.
      \item If $p > 2$, consider the quantity
      \[
      \rho \coloneqq \frac{K_{-}}{K_+} > 1,
      \]
      where $K_{-}$ and $K_+$ are the absolute values of the surface gravities of the Cauchy horizon and of the event horizon, respectively, of the reference sub-extremal Reissner-Nordstr{\"o}m-de Sitter black hole.
      If:
      \begin{equation} \label{sufficient_H1_extensions}
      s > K_{-} \quad \text{ and } \quad \rho < 2 \br{1 - \frac{1}{p}},
      \end{equation}
     then the Hawking mass $\varpi$ remains bounded\footnote{Notably, there exists a subset of black hole parameters for which $H^1$ extensions are allowed but the Ricci scalar diverges. See already remark \ref{no_C2_extensions}.} up to the Cauchy horizon $\mathcal{CH}^+$ and there exists a coordinate system in which $g$ can be extended continuously up to $\mathcal{CH}^+$, with Christoffel symbols in $L^2_{\text{loc}}$ and $\phi \in H^1_{\text{loc}}$. In this context, $\varpi$ is a $C^0_u C^1_v \cap W^{1,p}_u W^{1, p}_v$ geometric quantity.
    \end{enumerate}

    Additionally assume that, for some $\gamma > \frac12$:
    \begin{equation} \label{chosen_initial_data}
        f_1(u) = \begin{cases}
        u^{-\frac{1}{p}} |\log(u)|^{-\frac{2 \gamma}{p}}  , &\text{if }u \in (0, U],\\
        0, &\text{if } u = 0,
        \end{cases}
    \end{equation}
    so that $f_1 \in 
 L^p_u(\underline{C}_{v_0}) \setminus \bigcup_{q > p} L^q_u(\underline{C}_{v_0})$,
    and assume that:    
    \begin{align} 
         f_2(v) &\ge C \, e^{-l(s)v}, \label{thm_Price_law_lower}\\
         \limsup_{v \to +\infty} \frac{|\partial_v \phi|(0, v)}{f_2(v)} &= 0,
      \end{align}
      with $s < l(s) < 3s$ and $C > 0$.    
    Then, under these further assumptions, we have the following instability results:
    \begin{enumerate}   
      \setcounter{enumi}{2}
      \item If $p = 2$, mass inflation occurs at the Cauchy horizon  for every value  of $s$. In other terms, the $C^0_u C^1_v  \cap H^1_u H^1_v$ geometric quantity\footnote{See already remark \ref{remark:signs}.}
      \[
      \varpi = \frac{r}{2}\br{1 - g(dr^{\sharp}, dr^{\sharp})} + \frac{e^2}{2r} - \frac{\Lambda}{6}r^3
      \]
      diverges at the Cauchy horizon:
      \[
      \lim_{v \to +\infty} \varpi(u, v) = +\infty,
      \]
      for every $u \in (0, U]$. 
      \item  If $p > 2$ 
      then mass inflation occurs if
      \[
      \text{either } \quad l(s) < K_{-}  \quad \text{ or } \quad \rho > 2 \br{1 - \frac{1}{p}}.
      \]       
  \end{enumerate}
\end{theorem}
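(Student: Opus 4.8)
The plan is to first construct the solution and establish the background estimates, then reduce every assertion to the asymptotics of a single scalar quantity. First I would recast \eqref{wave_r}--\eqref{raych_v} in the integral form of Appendix \ref{appendix:useful_expressions} and solve the characteristic IVP by a contraction mapping on a small rectangle $[0,U]\times[v_0,V]$, working in $W^{1,p}_u$ in the $u$-variable and $C^1_v$ in $v$; the smallness of $U$ relative to the data yields the contraction and, crucially, propagates the borderline integrability $f_1\in L^p_u\setminus\bigcup_{q>p}L^q_u$ from the datum to $\partial_u\phi$ and hence to the solution. A continuation argument in $v$, fed by the monotonicity of $\nu/\Omega^2$ and $\lambda/\Omega^2$ coming from the Raychaudhuri equations \eqref{raych_u}--\eqref{raych_v} and by the Price law \eqref{thm_Price_law}, then extends the solution to $[0,U]\times[v_0,+\infty)$, keeps $r$ bounded below by some $\mathfrak{r}>0$, forces $r(u,v)\to r_\infty(u)$ as $v\to+\infty$ for a continuous profile $r_\infty$, and gives $r_\infty(u)\to r_-$ as $u\to0$ by the exact matching with the reference black hole on $\mathcal{H}^+$.

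The next step is the reduction. Differentiating \eqref{def_varpi} along $\partial_v$ and using \eqref{wave_r}, \eqref{raych_v} and $\Omega^2=-4\nu\kappa$, all the potential terms cancel and one is left with the clean monotonicity formula
\begin{equation*}
\partial_v\varpi=\frac{r^2(\partial_v\phi)^2}{2\kappa}\ge0,
\end{equation*}
so that, for fixed $u\in(0,U]$, $\varpi(u,\cdot)$ is non-decreasing and
\begin{equation*}
\lim_{v\to+\infty}\varpi(u,v)=+\infty\quad\Longleftrightarrow\quad\int_{v_0}^{+\infty}\frac{r^2(\partial_v\phi)^2}{\kappa}(u,v)\,dv=+\infty.
\end{equation*}
Thus mass inflation is exactly the divergence of this weighted flux; boundedness of $\varpi$ (hence the $H^1$ extension, after the change of gauge that de-singularizes $\Omega^2$ and places the Christoffel symbols in $L^2_{\mathrm{loc}}$ and $\phi$ in $H^1_{\mathrm{loc}}$) is its convergence; while the weaker $C^0$ extension of item 1 needs only the bounded variation of $r$, $\phi$ and $\log\Omega^2$, with no threshold at all.

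The core is therefore to estimate the numerator $r^2(\partial_v\phi)^2$ and the weight $\kappa$ near the corner $(0,+\infty)$, which I would do through two coupled propagation mechanisms. The numerator is governed by the wave equation: writing $\theta\coloneqq r\partial_v\phi$ and $\zeta\coloneqq r\partial_u\phi$, the transport equations $\partial_u\theta=-\tfrac{\lambda}{r}\zeta$ and $\partial_v\zeta=-\tfrac{\nu}{r}\theta$ transfer the event-horizon decay \eqref{thm_Price_law} (and the lower bound \eqref{thm_Price_law_lower}) into the interior, yielding $\theta(u,v)\lesssim e^{-sv}$ from above and $\gtrsim e^{-l(s)v}$ from below. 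The weight $\kappa$ obeys two laws that must be closed self-consistently: the blueshift law $\partial_v\log|\nu|=2K\kappa$ (from $\partial_v\nu=\partial_u\lambda$ together with \eqref{wave_r}, after substituting $\mu-e^2/r^2-\Lambda r^2=2rK$), in which $K\to-K_-$ drives $|\nu|$ to zero at a rate set by $K_-$, and the Raychaudhuri law $\partial_u\log\kappa=\zeta^2/(r\nu)$, whose integration $\log\kappa(u,v)=\log\kappa(0,v)+\int_0^u \zeta^2/(r\nu)\,du'$ shows that the vanishing of $\nu$ amplifies, through the borderline singular profile of $\zeta$, the decay of $\kappa$ in the $u$-direction. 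Writing $\theta^2/\kappa\sim e^{-(2s-\alpha(u))v}$ with $\alpha(u)$ the self-consistent decay rate of $\kappa$, the $v$-decay of the field contributes the threshold $s$ versus $K_-$ and the $u$-singularity contributes the threshold $\rho$ versus $2(1-1/p)$: the integral converges (stability, items 1--2) when both are benign, i.e.\ $s>K_-$ and $\rho<2(1-1/p)$, and diverges (mass inflation, items 3--4) as soon as one is violent, i.e.\ $l(s)<K_-$ or $\rho>2(1-1/p)$, with the explicit datum \eqref{chosen_initial_data} making $(\partial_u\phi)^2\sim u^{-2/p}$ precisely at the borderline. For $p=2$ one has $2(1-1/p)=1<\rho$ always, so the unconditional mass inflation of item 3 is automatic, while letting $p\to+\infty$ sends $2(1-1/p)\to2$ and recovers the smooth threshold of \cite{CGNS4,Rossetti}.

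The step I expect to be the main obstacle is closing the coupled system for $\kappa$ and $\nu$ and extracting the sharp self-consistent rate $\alpha(u)$ uniformly as $u\to0$. The difficulty is genuinely non-linear: a faster decay of $\kappa$ slows the decay of $|\nu|$ through $\partial_v\log|\nu|=2K\kappa$, which in turn alters the singular $u$-integral $\int_0^u\zeta^2/(r\nu)\,du'$ that feeds back into $\kappa$, so the rate cannot be read off either law in isolation and must be determined self-consistently while only $W^{1,p}$ regularity of $\partial_u\phi$ is available. Pinning down how $\nu(u',v)$ degenerates in $u'$ near $u'=0$ — an exponent set by the ratio of surface gravities through $\rho$ — and balancing it against the borderline singularity $u'^{-2/p}$ of the rough datum, so that the crossover occurs exactly at $\rho=2(1-1/p)$ and reproduces both the $p=2$ and the $p\to+\infty$ limits, is where the low-regularity analysis concentrates and where the bulk of the work lies.
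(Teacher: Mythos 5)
Your proposal reproduces the correct thresholds, but the route you sketch for the instability part would not close, and the mechanism you assign to the $p$-dependence is not the one that actually operates. The genuine gap is this: mass inflation cannot be obtained by a \emph{direct} estimate of the flux $\int r^2(\partial_v\phi)^2/\kappa\,dv$, because the divergence requires the weight $1/\kappa\sim 1/|\lambda|$ to grow at the sharp blueshift rate $e^{2K_-v}$, i.e.\ the upper bound $-\lambda\lesssim e^{-(2K_-+O(\beta))v}$ of \eqref{UB_lambda}. That bound is \emph{not} available unconditionally: the interior construction only yields \eqref{blueshift:lambda}, in which $-\lambda$ may decay as slowly as $e^{-\min\{2c_p(s)-2\tau,\,2K_--\tau\}v}$. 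Precisely in the regime where you want to prove inflation ($c_p(s)<K_-$), inserting the unconditional bounds into your flux gives at best
\begin{equation*}
\int^{+\infty} e^{(2c_p(s)-\frac{4K_+}{p'})v}\,dv<+\infty \qquad \text{(since } c_p(s)\le 2K_+/p'\text{)},
\end{equation*}
and the argument stalls. The paper's proof (theorem \ref{thm:mass_inflation}, appendix C) is a proof by \emph{contradiction}: assuming mass inflation fails, the Dafermos zigzag argument forces $\varpi(u,+\infty)\to M$, hence $K\to -K_-$, and then \eqref{appendix_lognumu_eqn} together with $I(u)<\infty$ upgrades the $\lambda$ bound to \eqref{UB_lambda}; only \emph{then} do the lower bounds on $r\partial_v\phi$ — obtained from the rough datum swept in through \eqref{appendix_rdvphi}, or from the Price-law lower bound — force $I(u)=+\infty$, the contradiction. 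This self-improvement under the no-inflation hypothesis, plus the sign-propagation lemma \ref{lemma:positive_trapped} and the monotonicity lemma \ref{lemma:monotonicity} which your transport heuristics silently use, is the missing structural idea.

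Second, your ``self-consistent rate $\alpha(u)$'' for $\kappa$ via $\partial_u\log\kappa=\zeta^2/(r\nu)$ is a misattribution. Thanks to redshift, $\int_0^u \zeta^2/(r|\nu|)\,du'$ is exponentially small in $v$ (of order $e^{-4K_+v/p'}$), so the $u$-direction Raychaudhuri effect on $\kappa$ is negligible; the weight blows up at the $p$-\emph{independent} blueshift rate $2K_-$. The $p$-dependence enters elsewhere: (i) in the stability direction, through the degraded decay of $\partial_v\phi$ itself — your claimed upper bound $\theta\lesssim e^{-sv}$ is false in general; what propagates (section \ref{section:construction_interior}, proposition \ref{prop:late_blueshift}) is $e^{-(c_p(s)-\tau)v}$ with $c_p(s)=\min\{s,\,2K_+/p'-C(\eta)\}$ as in \eqref{def_cps}, and comparing $2c_p(s)$ with $2K_-$ is exactly what yields $s>K_-$ \emph{and} $\rho<2(1-1/p)$ in theorem \ref{thm:no_mass_inflation}; (ii) in the instability direction, through the \emph{lower} bound $r\partial_v\phi\gtrsim e^{-(2K_+/p'+O(\beta))v}$ obtained by integrating $-\lambda\,\partial_u\phi$ against the monotonically-preserved rough datum (using \eqref{appendix_rduphi}). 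So even granting your reduction via \eqref{appendix_v_varpi_equation}, estimating the numerator by $e^{-2sv}$ and pushing the $p$-dependence into the weight produces the wrong balance, and the crossover at $\rho=2(1-1/p)$ would not emerge. Relatedly, the ``continuation argument fed by monotonicity'' you propose for the global construction is far from sufficient: the four-region bootstrap (redshift, no-shift, early-blueshift, late-blueshift), performed without any pointwise control of $\partial_u\phi$ and losing only an $\varepsilon$ of exponential decay, is precisely what furnishes $c_p(s)$ and the two-sided $\lambda$, $\nu$ bounds on which both halves of the theorem rest.
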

We refer to fig.\ \ref{fig:main_results} for a representation of these statements. To the best of our knowledge, these are the first rigorous mass inflation results for the Einstein-Maxwell-scalar field system in the case in which non-smooth initial data are prescribed.

\begin{figure}[h]
    \centering
    \includegraphics[width=0.85\linewidth]{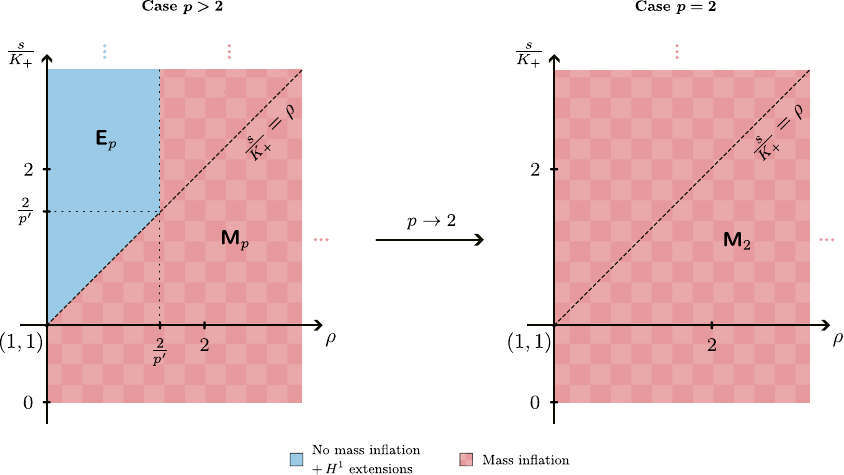}
    \caption{
    The main results of this paper in terms of the parameters of the reference Reissner-Nordstr{\"o}m-de Sitter black hole and of the decay rate given by the assumed Price law, when \textbf{non-smooth} initial data are considered, i.e.\ for a choice $\bm{\mathcal{Q}} = \bm{\mathcal{Q}}(s, \rho) \in \mathcal{R}_{\text{i.d.}}^{p, \text{FE}}$ of initial data (see also fig.\ \ref{fig:moduli_space}). 
    Here, we describe the role of $s$ and $\rho$ in yielding mass inflation or $H^1$ extensions for the non-smooth black hole we construct. 
    The conjugate exponent $p'$ is defined by $pp'=p+p'$.     
    In this figure, we are assuming the Price law upper and lower bounds from \cite{CGNS4}, i.e.\ \eqref{thm_Price_law} and \eqref{thm_Price_law_lower} with $l(s) = s + \varepsilon$, for some $\varepsilon > 0$ small. The stability results in the case $p > 2$ do not require any lower bound. The mass inflation results for $ p > 2$ hold (in a smaller region) even  when assuming a weaker Price law lower bound, see also fig.\ \ref{fig:inflation_lowerbound}. }
    \label{fig:main_results}
\end{figure}

\begin{remark} \label{remark:id_example}
    Initial data of the form of \eqref{chosen_initial_data} were previously taken into account in \cite{Luk}, to construct a class of solutions to the vacuum Einstein equations having weak null singularities.

    We stress that different choices of $\gamma$ in \eqref{chosen_initial_data} give rise to different lines (passing through the same point belonging to $\mathcal{S}_{\text{i.d.}}$) in fig.\ \ref{fig:moduli_space}.
\end{remark}

\subsection{Strategy of the proof}
We proceed as follows:
\begin{enumerate}
    \item First, we study the local well-posedness of \eqref{wave_r}--\eqref{raych_v} in our low-regularity setting (section \ref{section:rough_LWP}).
    \item In section \ref{section:construction_interior}, we construct a black hole interior solution (locally near timelike infinity $i^+$) by adapting the bootstrap methods of \cite{Rossetti}.\footnote{During the proof, we show that fig.\ 4 in \cite{Rossetti} describes our setting as well, with the minor difference that the apparent horizon $\mathcal{A}$ is now a $C^0$, rather than $C^1$, curve.} These were developed in the smooth case to deal with the exponential bounds stemming from the presence of a positive cosmological constant.\footnote{For instance, the methods used in the early-blueshift region $\{u_{\text{EF}} \sim v_{\text{EF}}\}$ in the asymptotically flat case, where $u_{\text{EF}}$ and $v_{\text{EF}}$ are both Eddington-Finkelstein coordinates,  cannot be employed in the present cosmological scenario, since they would yield exponentially large error terms.} Differently from the  smooth case, however, only an integrated formulation of the system is available. In particular, $\partial_u \phi$ is not continuous, hence its $L^{\infty}$ norm cannot be bootstrapped. Moreover, instead of controlling the main PDE quantities in terms of the function
    \[
        e^{-c(s)v}, \quad \text{ with }c(s) \coloneqq \min\{s, 2K_+-\eta\},
    \]
    as in the smooth case (where $\eta > 0$ small and $c(s)$ measures the competition between redshift and the exponential Price law at the event horizon), now we actually propagate bounds in terms of
    \[
        e^{-c_p(s)v}, \quad \text{ with }c_p(s) \coloneqq \min \left \{s, \frac{2K_+}{p'}-C(\eta) \right \},
    \]
    where $p'$ is the conjugate exponent of $p$, i.e.\ $pp' = p + p'$. It is indeed explicit in the proof that the lower regularity we impose, the slower is the decay.
    \item To deal with rough initial data, we bootstrap estimates only on the continuous PDE quantities and, from these, we obtain integral bounds on $\partial_u \phi$ throughout the interior region. The bounds on $\partial_u \phi$ degenerate near the Cauchy horizon $\mathcal{CH}^+$, due to the fact that the scalar field potentially fails to be $H^1$ in a neighbourhood of $\mathcal{CH}^+$, under our low-regularity assumptions.
    To show the integral bounds, we employ the density of smooth functions in $W^{1, p}$.\footnote{More precisely, by a Picard-iteration type of argument, it is possible to construct a sequence of smooth solutions to our PDE system (in its integrated formulation) that converges in $W^{1, p}$ norm. See section \ref{section:rough_LWP}.}
    To construct the black hole interior, we do not rely on the monotonicity properties of the system, hence the proof is well-suited for generalization to the case of a charged scalar field.
    \item To construct $H^1$ extensions in the case $p > 2$ (i.e.\ when initial data are more regular than just finite-energy), we use the techniques of \cite{CGNS4, Rossetti} that, after minor modifications, apply to the current low-regularity setting (see section \ref{section:no_mass_inflation}).
    \item Finally, in section \ref{section:mass_inflation} we exploit the mass inflation proofs first derived in  \cite{Dafermos_2005_uniqueness} (for the $\Lambda = 0$ case) and \cite{CGNS3} (for the $\Lambda > 0$ case), by adapting them to the integral formulation of \eqref{main_system}. In particular, the mild blow-up encoded in the initial data allows to propagate estimates in a non-trapped region and to prove mass inflation for a larger set of black hole parameters, compared to the smooth case. Moreover, these instability results depend on the profile (hence on the regularity) of the prescribed initial data. Differently from the previous sections, this part highly relies on the monotonicity properties of the Einstein-Maxwell-real scalar field system.
\end{enumerate}
Further details on the critical technical steps of the proofs are summarized in the following section.

\subsection{Technical overview} \label{section:technical_diff}

The black hole solution that we construct relies on a subdivision of the interior region in terms of curves of constant area-radius, as usual in the case of cosmological black holes. During our analysis, we address the following points:
\begin{enumerate}
    \item \textbf{Strength of the blow-up of the initial data}: we prescribe initial data for $\partial_u \phi$ that are integrable but not continuous. Our finite-energy data present an exponential blow-up in the Eddington-Finkelstein coordinate $u_{EF}$ as $u \to 0$. Applying, e.g.\ perturbative methods to solutions of the linear wave equation generally proves ineffective for the aims of this paper:  here we pursue a more precise non-linear stability analysis.
    \item \textbf{Low regularity}: not only pointwise bounds are not available for $\partial_u \phi$ and $\partial_u \Omega^2$ (so that the bootstrap procedure has to rely on other continuous PDE quantities), but the unbounded initial data have to be controlled in the entire black hole interior region. We handle this issue by exploiting the decay due to the redshift effect near the event horizon. Notably, this effect is sufficiently strong to control such data (in $L^p_u$) up to the Cauchy horizon. This redshift contribution is encoded in the term $e^{-av}$ (where $0 < a < 2K_+$ can be chosen close to $2K_+$) that multiplies $\partial_u \phi(u, v_0)$ during our proof and allows to prove integral estimates (see e.g.\ \eqref{redshift:pointwise_duphi1} and following). The term $e^{-av}$, which could be ignored in the smooth case without any consequence on the results, now plays a prominent role. In fact, it provides extra decay when $\partial_v \phi$ decays slowly. We remark that the regime of slow decay for $\partial_v \phi$ (i.e.\ the small-$s$ case), is associated with Cauchy horizon instabilities.
    \item \textbf{Weaker estimates due to point (2)}: the bootstrap procedure of \cite{Rossetti} (corresponding to the smooth setting, where the scalar field is possibly massive and charged) had to be revisited so to avoid any bootstrap assumption on the quantity $\partial_u \phi$. Nonetheless, we are still able to obtain sufficiently many $L^p$ bounds that, in turn, allow to close the bootstrap.
    \item \textbf{The blueshift effect and weaker bounds near $\mathcal{CH}^+$ due to point (3)}:
    the previous issues come on top of the challenges that already arise  near the Cauchy horizon in the smooth setting. During the construction of the black hole interior, as we approach the to-be Cauchy horizon, less decay is available to propagate the main estimates and, due to our rough setting, we have weaker bounds at our disposal. In the no-shift region, both induction and bootstrap estimates need to be controlled at the same time so to propagate decay. In the early-blueshift region, the monotonic function $\mathcal{C}_s$ (already introduced in \cite{Rossetti}) allows to bootstrap our exponential bounds without losing more than a small $\varepsilon$ power of decay. This allows to deal with the fact that the early-blueshift region is very large when measured in terms of the $v$ coordinate, since its future boundary is not given by a curve of constant area-radius. In the late-blueshift region, we further have that the $L^p$ bound on $\partial_u \phi$ is degenerate, due to the potential blow-up of energy at the Cauchy horizon. Still, the smallness of this region (in terms of $r$) is such that the $L^p$ bound on $\partial_u \phi$ allows to close the bootstrap estimate on $\partial_v \phi$. We stress that the exponential nature of the decay rates is not a necessary condition to propagate the estimates: polynomial decay is sufficient, as can be seen from the $\Lambda = 0$ case. In fact, the key point of our proof is that, even in this low-regularity setting, we propagate the main exponential bounds from the event horizon to the Cauchy horizon while losing only an $\varepsilon$ power in the exponent, despite the presence of blueshift. This is possible by a careful analysis of the redshift and blueshift effects in the interior and how such effects interact with the blow-up encoded in the initial data (see already section \ref{section:related_works}). The fact that we have an  $\varepsilon$-loss of decay in our proof, rather than more sub-optimal results, plays a crucial role when studying the stability of the Cauchy horizon.\footnote{For instance, in \cite{Rossetti}, the estimates for $\partial_v \phi$ in the late-blueshift region were improved compared to those in \cite{CGNS4}. The results of the former paper, that can be applied to the real-scalar-field case, led to the construction of $H^1$ extensions for a larger set of parameters of the reference black hole.  The improvement was mainly possible due to the bootstrap methods employed in \cite{Rossetti}, since they allowed to bound $\lambda$ and $\partial_v \phi$ simultaneously via the Raychaudhuri equation in $v$.}
    \item \textbf{Decay constraints required to prove mass inflation in the smooth case}: to show mass inflation, our proof follows those in \cite{Dafermos_2005_uniqueness} (case $\Lambda = 0$) and \cite{CGNS3} (case $\Lambda > 0$). On the other hand, the way we propagate estimates in the non-trapped region departs from those approaches. Indeed, this part relies on our construction of the  initial data, that allows to prove mass inflation in the rough  case for a larger set of black hole parameters  (compare fig.\ \ref{fig:smooth_results} with fig.\ \ref{fig:main_results}). We also refer to remark \ref{remark:comparison} for further comparisons between the rough and smooth cases in the context of mass inflation.
\end{enumerate}

\subsection{Related works and outlook} \label{section:related_works}

The motivation behind the present article stems from the following result, proved in \cite{DafYak18}: there exist \textbf{finite-energy initial data} $(\phi_0, \phi_0') \in H^1_{\text{loc}} \times L^2_{\text{loc}}$ to the linear wave equation
\[
    \square_g \phi = 0,
\]
on a sub-extremal Reissner-Nordstr{\"o}m-de Sitter black hole $g$, where the data are prescribed on a complete spacelike hypersurface in the black hole exterior, such that 
\begin{equation} \label{infty_H1}
    \|\phi\|_{H^1_{\text{loc}}(\mathcal{CH}^+)}=+\infty
\end{equation}
at the Cauchy horizon $\mathcal{CH}^+$, for any choice of black hole parameters that allow the existence of a  black hole interior region.
In particular, $\phi$ has infinite energy at the Cauchy horizon.\footnote{Continuous extensions are still allowed.}

This is in contrast with the violations of strong cosmic censorship expected in the case $\Lambda > 0$ when starting from \textbf{smooth} data. These expectations are supported by the existence of $H^1$ extensions for solutions to the wave equation (at the linear level) and for metric solutions to the Einstein equations (at the non-linear level).
In fact, in the near-extremal case, linear waves are expected to decay at a fast\footnote{How fast the scalar field decays depending on the black hole parameters is still an open question.} exponential rate along the event horizon of Reissner-Nordstr{\"o}m-de Sitter black holes (see the numerical results for neutral fields \cite{QNMandSCCC} and for charged fields \cite{SCCCstill_subtle}). If such a rapid decay is prescribed along the event horizon for solutions to the linear wave equation on these black hole backgrounds, then the scalar field can be extended with $H^1$ regularity beyond the Cauchy horizon, i.e.\ \eqref{infty_H1} is false in the smooth setting (see the rigorous works \cite{CostaFranzen, HintzVasy1}). These interior results were further confirmed in the non-linear setting of the spherically symmetric Einstein-Maxwell-scalar field model, see \cite{CGNS4} (neutral scalar field case) and \cite{Rossetti}  (charged scalar field case).\footnote{The works \cite{CGNS2} and \cite{CGNS3}, whose results are employed in \cite{CGNS4}, dealt with the spherically symmetric Einstein-Maxwell-real scalar field system with $\Lambda > 0$ under the additional assumptions of trivial initial data along the event horizon, i.e.\ $\phi_{|\mathcal{H}^+} \equiv 0$.} Outside spherical symmetry, similar results are expected for suitable perturbations of Kerr-Newman-de Sitter black holes near a charged non-rotating solution (see e.g.\ the numerical work \cite{DaveyDiasGil}).

The above results for rough  and smooth  initial data solutions are in line with the following heuristics:
\begin{itemize}
    \item \textbf{The blueshift effect causes an exponential growth in the radiation energy}: one of the main sources of Cauchy horizon instability is given by the blueshift effect, that is an exponential growth of the  energy of linear waves, proportional to $K_{-} > 0$, i.e.\ the absolute value of the surface gravity of the black hole Cauchy horizon.
    \item \textbf{The blueshift effect is not necessarily an instability mechanism}: in the near-extremal limit $K_{-} \to 0$, the exponential growth due to the blueshift effect dampens and can be counter-balanced by the strong dispersive effects due to the presence of a cosmological region.
    \item \textbf{Ingoing radiation can restore the role of blueshift as an instability mechanism:} If sufficient non-trivial radiation is entering the black hole interior, this counterbalances the dispersive effects of spacetime.\footnote{In the language of the present paper, non-trivial ingoing radiation can be described by a divergent behaviour  in the non-smooth wave component $\partial_u \phi$. This causes, in the black hole interior region, a slower decay rate for the  $\partial_v \phi$ component.} In particular, the blueshift effect is still able to trigger an instability.
\end{itemize}
Related studies on the third point above in the case $\Lambda = 0$ are, e.g., in \cite{IsraelPoisson} for the Einstein-null dust system  and in \cite{VdM_mass} for the Einstein-Maxwell-(charged) scalar field system. In both cases, a relation between the amount of ingoing radiation and instabilities at the Cauchy horizon is established. 

It is also remarkable that an analogous type of blueshift effect plays a paramount role in the context of the weak cosmic censorship conjecture and is similarly affected by regularity assumptions, see fig.\ \ref{fig:SCCCvsWCCC}.

The main goal of the current work is to generalize the results of \cite{DafYak18} to the non-linear case. In particular, we  encode a mild blow-up behaviour\footnote{In such a way to still have solutions with finite energy.} in the ingoing direction, so to allow the blueshift effect to, loosely speaking, give rise to an instability mechanism. In our case, linearity cannot be exploited, hence we need to consider a sufficiently general set of initial data and prove that each of these initial configurations yields a black hole interior solution.

\begin{figure}
\centering
\begin{subfigure}{.5\textwidth}
  \centering
  \includegraphics[width=.4\linewidth]{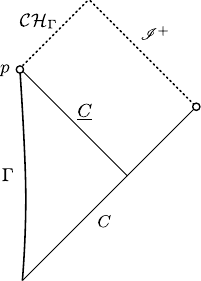}
  \caption{}
\end{subfigure}%
\begin{subfigure}{.5\textwidth}
  \centering
  \includegraphics[width=.6\linewidth]{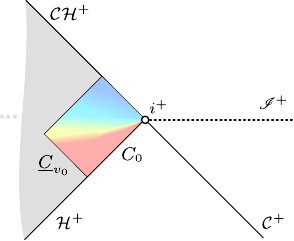}
  \caption{}
\end{subfigure}
\caption{Fig.\ (a): A self-similar naked singularity solution to the Einstein-scalar field system  (with $\Lambda = 0$) constructed by Christodoulou in the BV regularity class (see \cite{Christodoulou_instability_ann}, \cite{Christodoulou_examples} and references therein). Christodoulou proved  that an infinite blueshift effect occurs along $\underline{C}$, i.e.\ the past lightcone of the singularity point $p$. Due to the low regularity of the initial data, this effect gives rise to an \textit{instability} phenomenon: such  BV solutions are non-generic in a ``co-dimension one'' sense.  Notice that the transversal derivatives of the scalar field blow up near $p$. In particular, the solution is completely regular at $C \cap \underline{C}$. We refer to  \cite{SRReview} for a review on the problem.
Fig.\ (b): the (colored) local spacetime region inspected in the current work, embedded in a potential global spacetime structure. Here, the blueshift effect occurs at the Cauchy horizon $\mathcal{CH}^+$. The rough initial data that we prescribe along $\underline{C}_{v_0}$ blow up at $C_0 \cap \underline{C}_{v_0}$ and give rise to an instability at the Cauchy horizon in a neighbourhood of $i^+$. 
}
\label{fig:SCCCvsWCCC}
\end{figure}

We notice that, differently from the smooth case, focusing solely on the black hole interior region is not a greatly restrictive condition. In fact, we show that for finite-energy initial data (i.e.\ $p=2$ in theorem \ref{thm:main}), mass inflation occurs for every value of $s$ in the exponential Price law upper bound.\footnote{Notice that the linear solutions constructed in \cite{DafYak18} are such that $\partial_v \phi_{|\mathcal{H}^+}(0, v) \sim e^{-sv}$ with $K_{-} < s < K_+$. Indeed, the relation $K_{-}/K_+ > 1$ is exploited to construct solutions having finite energy at the event horizon and infinite energy at the Cauchy horizon. We do not have such constraints on $s$ in our case.} Indeed, as already stressed in \cite{DafYak18}, under low-regularity assumptions the dynamics of quasinormal modes in the black hole exterior region do not affect the scenario described by the strong cosmic censorship conjecture.

Starting from the present work, it would be desirable to understand the role of low-regularity solutions in more general settings, with the final goal of studying the vacuum Einstein equations (with $\Lambda$) without symmetry assumptions. The strongest results  at our disposal in this context (when $\Lambda = 0$) are the $C^0$ stability results in \cite{DafermosLuk}. These matters will be investigated in future work and are not discussed here.

\subsection{Outline of the article}
The rest of the article is organized as follows. In section \ref{section:rough_LWP} we define the notion of solution for the integral formulation of our PDE system and give an overview of the main local well-posedness results that we will employ in the article. In section \ref{section:construction_interior} we construct the black hole interior region by propagating the main estimates from the event horizon. The proof relies on a bootstrap procedure adapted to our low-regularity setting. In section \ref{section:no_mass_inflation} we show that, even when assuming only continuity and Sobolev regularity, $H^1$ metric extensions can still be constructed beyond the Cauchy horizon, provided that the blow-up of the rough initial data is not too strong.
Finally, in section \ref{section:mass_inflation} we prove that, for sufficiently rough initial data, the Hawking mass diverges at the Cauchy horizon.

\subsection*{Acknowledgements}
The author would like to thank Mihalis Dafermos for suggesting the problem of rough initial data and for the stimulating discussions during the author's visit to Princeton University in the 2023/2024 academic year. The author is also grateful to Gabriele Benomio, João Costa and José Natário for valuable comments on a preliminary version of the manuscript.

\subsection*{Funding and competing interests}
The author is supported by a postdoctoral research fellowship from the Gran Sasso Science Institute. The author declares that there are no competing interests relevant to the content of this article. 

\section{Rough case: local well-posedness} \label{section:rough_LWP}

In the following, given $U >0, v_0 \ge 0$, we express the initial hypersurfaces as
\begin{align*}
    \underline{C}_{v_0} &= [0, U] \times \{v_0\}, \\
    C_0 = \mathcal{H}^+ &= \{0\} \times [v_0, +\infty),
\end{align*}
with respect to the $(u, v)$ coordinate chart of section \ref{subsection:defs}, that we fix by imposing
\begin{align}
    \nu_{|\underline{C}_{v_0}}(u) &= -1, \quad \forall\, u \in [0, U], \label{assumption_nu}\\
    \kappa_{|C_0}(v) &= 1, \quad \forall\, v \ge v_0.\label{assumption_kappa}
\end{align}
Moreover, we assume that 
\begin{equation} \label{assumption:no_antitrapped}
\nu(0, v) < 0, \quad \text{ for every } v \ge v_0,
\end{equation}
so to avoid the presence of anti-trapped surfaces, and that
\begin{equation} \label{assumption:r_positive}
r_{|\underline{C}_{v_0}} > 0 \quad \text{ and } \quad  r_{|C_0} > 0.
\end{equation}
Without loss of generality, we assume $v_0 = 0$ in this section.

In the next lines, we closely follow \cite[section 10]{GajicLuk} to define the notion of solution for our PDE system. 
\begin{definition}[Solutions to the spherically symmetric Einstein-Maxwell-real scalar field system in an integrated sense] \label{def_integratedsolution}
    Given $\varepsilon > 0$, let $\mathcal{D} = [0, \varepsilon) \times [0, \varepsilon) \subset \mathcal{M}/SO(3) $ and $p \in [2, +\infty)$. \textbf{We say that a continuous function} $\bm{\Psi}=(r, \phi, \Omega^2)\colon \mathcal{D}  \to \mathbb{R}^3$, with $\partial_v \Psi_A \in C^0_v(C^0_u)$ and $\partial_u \Psi_A \in L^p_u(C^0_v)$ for $A=1, 2, 3$ \textbf{is a solution} to \eqref{wave_r}--\eqref{raych_v} \textbf{in an integrated sense} if the components of $\Psi$ satisfy the integral formulation of \eqref{wave_r}--\eqref{raych_v}, i.e.\ if
    \begin{align*}
    \br{\frac{\partial_u r}{\Omega^2}}(u_2, v)- \br{\frac{\partial_u r}{\Omega^2}}(u_1, v) &= - \int_{u_1}^{u_2} \br{ r \frac{(\partial_u \phi)^2}{\Omega^2}}(u', v) du', \,\,\forall\, 0 \le u_1 < u_2 < \varepsilon, \,\, \text{ for a.e. }v \in [0, \varepsilon) \\   
    \br{\frac{\partial_v r}{\Omega^2}}(u, v_2) - \br{\frac{\partial_v r}{\Omega^2}}(u, v_1) &= -\int_{v_1}^{v_2} \br{ r \frac{(\partial_v \phi)^2}{\Omega^2}}(u, v')dv',  \,\,\forall\, 0 \le v_1 < v_2 < \varepsilon, \,\, \text{ for a.e. }u \in [0, \varepsilon)
    \end{align*}
    and if,  
    for every $0 \le u_1 < u_2 < \varepsilon$:
     \[
    \br{\partial_v r}(u_2, v) = \br{\partial_v r}(u_1, v) + \int_{u_1}^{u_2} \br{-\frac{\Omega^2}{4r} -\frac{\partial_u r \partial_v r}{r} + \frac{\Omega^2 e^2}{4 r^3} + \frac{\Omega^2 \Lambda r}{4}}(u', v)du', \quad \text{ for a.e. } v \in [0, \varepsilon), 
    \]
    and for every $0 \le v_1 < v_2 < \varepsilon$:
    \[
    \br{\partial_u r}(u, v_2) = \br{\partial_u r}(u, v_1) + \int_{v_1}^{v_2} \br{-\frac{\Omega^2}{4r} -\frac{\partial_u r \partial_v r}{r} + \frac{\Omega^2 e^2}{4 r^3} + \frac{\Omega^2 \Lambda r}{4}}(u, v')dv', \quad \text{ for a.e. } u \in [0, \varepsilon),
    \]
    and similarly for \eqref{wave_phi} and \eqref{wave_Omega}.    
\end{definition}

\begin{remark}[On regularity] \label{remark:regularity}
    By definition, any solution $\Psi$ in the integrated sense of definition \ref{def_integratedsolution} is such that $\partial_u \Psi(u, \cdot) \in C^1_v$ for a.e.\ $u$ and $\partial_v \Psi(\cdot, v) \in C^1_u$ for a.e.\ $v$. 

    Moreover, as stressed in \cite[Remark 10.1]{GajicLuk}, such a $\Psi$ is necessarily a weak solution of \eqref{wave_r}--\eqref{raych_v} (in the sense of integration against a $C^{\infty}_c$ function in $\mathcal{D}$).
\end{remark}

In appendix \ref{appendix:useful_expressions} we  express the integral formulation of \eqref{wave_r}--\eqref{raych_v} in more convenient forms, similarly to how it is done in the smooth case, even though lower regularity is assumed here.

The following is an adaptation of \cite[Proposition 10.1]{GajicLuk} to a slightly more general setting.
\begin{proposition}[Local existence and uniqueness for solutions in the integrated sense] \label{prop:local_existence}
For $A=1, 2, 3$, consider the quasi--linear hyperbolic system
\begin{align}
\partial_u \partial_v \Psi_A &= f_A(\Psi) + N_A^{BC}(\Psi) \partial_u \Psi_B \partial_v \Psi_C + K_A^{BC}(\Psi) \partial_v \Psi_B \partial_v \Psi_C  \nonumber \\
&+ L_A^B(\Psi) \partial_u \Psi_B + R_A^B(\Psi)\partial_v \Psi_B, \label{nonlinear_system}
\end{align}
where $N_A^{BC}, K_A^{BC}, L_A^B, R_A^B$ are smooth, real-valued functions for every choice of $A, B, C \in \{1, 2, 3\}$.
Given $2 \le p < +\infty$ and $U, V > 0$, prescribe initial data on $[0, U] \times \{0 \} \cup \{0\} \times [0, V]$  such that, for every $A = 1, 2, 3$, we have 
\begin{align*}
(\Psi_A)_{|v = 0} \in C^0_u([0, U]), \\
(\Psi_A)_{|u = 0} \in C^1_v([0, V]), \\
(\partial_u \Psi_A)_{|v = 0} \in L^p_u([0, U])
\end{align*}
and
\begin{align}
\int_0^U |\partial_u \Psi_A|^p (u', 0) du' &\le C_0, \\
\sup_{v' \in [0, V]} |\partial_v \Psi_A|^p (0, v') &\le C_0.
\end{align}
If assumptions \eqref{assumption:no_antitrapped} and \eqref{assumption:r_positive} hold, then there exists $\epsilon = \epsilon(U, V, C_0) > 0$ such that a  solution to \eqref{nonlinear_system},  in the sense of definition \ref{def_integratedsolution}, exists  in $[0, \epsilon) \times [0, \epsilon)$, satisfies
\[
r > 0 \quad \text{ and } \quad \Omega^2 > 0
\]
in $[0, \epsilon) \times [0, \epsilon)$ and is unique in  $W^{1, p}_{u}(C^1_v)$.
\end{proposition}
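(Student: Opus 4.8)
The plan is to recast the integrated formulation as a fixed-point problem for the wave operator $\partial_u\partial_v$ and solve it by a contraction argument adapted to the regularity asymmetry of the data ($C^1$ in $v$, merely $L^p$ for the $u$-derivative). Writing $F_A[\Psi]$ for the entire right-hand side of \eqref{nonlinear_system}, a continuous $\Psi$ solves the system in the sense of definition \ref{def_integratedsolution} if and only if it is a fixed point of
\[
(T[\Psi])_A(u,v) = \Psi_A(0,0) + \int_0^u (\partial_u\Psi_A)_0(u')\,du' + \int_0^v (\partial_v\Psi_A)(0,v')\,dv' + \int_0^u\!\int_0^v F_A[\Psi](u',v')\,dv'\,du',
\]
where $(\partial_u\Psi_A)_0\in L^p_u$ and $(\partial_v\Psi_A)(0,\cdot)\in C^1_v$ are the prescribed data. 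Differentiating gives $\partial_u(T[\Psi])_A = (\partial_u\Psi_A)_0 + \int_0^v F_A[\Psi]\,dv'$ and $\partial_v(T[\Psi])_A = (\partial_v\Psi_A)_0 + \int_0^u F_A[\Psi]\,du'$. These identities are the crux: the non-continuous part of $\partial_u\Psi$ is carried entirely by the data term $(\partial_u\Psi_A)_0$, while every correction is regularising, so that $\partial_v(T[\Psi])$ comes out genuinely continuous and $\partial_u(T[\Psi])$ remains an $L^p_u$-valued continuous function of $v$.

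I would set the problem in the Banach space $X_\epsilon$ of continuous $\Psi$ on $[0,\epsilon]^2$ with $\partial_v\Psi\in C^0$ and $v\mapsto\partial_u\Psi(\cdot,v)$ continuous into $L^p_u$, normed by $\|\Psi\|_{X_\epsilon} = \|\Psi\|_{C^0} + \|\partial_v\Psi\|_{C^0} + \sup_v\|\partial_u\Psi(\cdot,v)\|_{L^p_u}$, and restrict to a closed ball $B_R$ around the free data on which $r$ and $\Omega^2$ remain in a compact subset of $\{r>0,\,\Omega^2>0\}$; there the smooth coefficients $f_A, N_A^{BC}, K_A^{BC}, L_A^B, R_A^B$ (which in the application carry factors $1/r,1/\Omega^2$) and their derivatives are bounded. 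Two structural estimates drive everything. First, the worst nonlinear term satisfies $\|(\partial_u\Psi_B\,\partial_v\Psi_C)(\cdot,v)\|_{L^p_u}\le \|\partial_u\Psi_B(\cdot,v)\|_{L^p_u}\,\|\partial_v\Psi_C\|_{C^0}$, so a product of a rough and a continuous factor stays in $L^p_u$; the $K_A^{BC}\partial_v\Psi_B\partial_v\Psi_C$ term is simply bounded and continuous. Second, the integrations gain smallness in $\epsilon$: integrating $F$ in $v$ costs a factor $\epsilon$ in $\sup_v\|\cdot\|_{L^p_u}$, while integrating $F$ in $u$ costs $\epsilon^{1/p'}$ in $C^0$ by Hölder (with conjugate exponent $p'$, $1/p+1/p'=1$), the latter also upgrading an $L^p_u\subset L^1_u$ integrand to a continuous primitive.

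With these, $T(B_R)\subset B_R$ holds for $\epsilon$ small, and the contraction bound $\|T[\Psi]-T[\Psi']\|_{X_\epsilon}\le C(R)\,\epsilon^{1/p'}\|\Psi-\Psi'\|_{X_\epsilon}$ follows after splitting product differences as $\partial_u\Psi_B\partial_v\Psi_C - \partial_u\Psi'_B\partial_v\Psi'_C = \partial_u(\Psi_B-\Psi'_B)\partial_v\Psi_C + \partial_u\Psi'_B\,\partial_v(\Psi_C-\Psi'_C)$ and using the Lipschitz bounds for the coefficients on the compact range. Banach's theorem yields a unique fixed point in $B_R$, which by the differentiation identities solves \eqref{nonlinear_system} in the integrated sense and lies in $W^{1,p}_u(C^1_v)$; the positivity $r>0,\Omega^2>0$ on $[0,\epsilon]^2$ is inherited from the data (assumptions \eqref{assumption:no_antitrapped}, \eqref{assumption:r_positive}) by continuity after further shrinking $\epsilon$. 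Uniqueness in the full space $W^{1,p}_u(C^1_v)$, not merely within $B_R$, follows because the difference $\delta\Psi$ of any two solutions has vanishing data and obeys $\|\delta\Psi\|_{X_\epsilon}\le C\,\epsilon^{1/p'}\|\delta\Psi\|_{X_\epsilon}$, forcing $\delta\Psi\equiv 0$ for $\epsilon$ small.

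The main obstacle is precisely the low $u$-regularity, and the whole scheme hinges on a structural feature of \eqref{nonlinear_system}: quadratic $u$-derivatives occur only in the mixed form $\partial_u\Psi_B\partial_v\Psi_C$ and never as $\partial_u\Psi_B\partial_u\Psi_C$, so one never multiplies two $L^p_u$ factors (which would drop to $L^{p/2}_u$ and destroy the closure of the estimates). The accompanying technical delicacy is to confirm that integrating in $u$ converts $L^p_u$ data into a quantitatively small ($\epsilon^{1/p'}$) continuous correction, and to establish the continuity of $v\mapsto\partial_u\Psi(\cdot,v)$ in $L^p_u$ together with the joint continuity of $\partial_v\Psi$ in $(u,v)$; both are obtained by a dominated-convergence argument using the uniform bounds above, but they are what guarantees that the fixed point genuinely lands in the claimed regularity class rather than merely in a space of bounded norms.
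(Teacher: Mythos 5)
Your proposal is correct and takes essentially the same approach as the paper: the paper's proof is likewise a fixed-point argument in a product space of $L^{\infty}$-normed and $L^p_u$-normed spaces, resting on the same structural facts that rough $u$-derivatives enter only in products with continuous quantities and that integration in $u$ (via H\"older/Young) and in $v$ produces the smallness in $\epsilon$ needed to close. The only cosmetic difference is emphasis — the paper writes out the a priori bootstrap bounds on $\sup |\partial_v \Psi|$ and $\int_0^{\epsilon} \sup_v |\partial_u \Psi|^p \, du$ and defers the contraction details to references, while you spell out the contraction and sketch the a priori bounds.
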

\begin{proof}
    The proof consists of a fixed-point argument in a space that can be expressed as the product of metric spaces endowed with a $L^{\infty}$ norm and of metric spaces endowed with an $L^p$ norm. This is a common technique: we refer the reader to, e.g.\ \cite{CGNS1} and \cite{mythesis} for more details.

    The a priori bounds can be obtained as in \cite[proposition 10.1]{GajicLuk}. Here we repeat their proof, including a straightforward adaptation to our case of $W^{1, p}$ regularity, for the convenience of the reader. 
    
    If we make the bootstrap assumption
    \[
    \sup_{(u, v) \in [0, \epsilon) \times [0, \epsilon)} |\partial_v \Psi|(u, v) \le 4 M,
    \]
    with $M > C_0$, 
    then $\Psi$ is close, in $L^{\infty}$ norm, to the initial data. Indeed, this is a consequence of the fundamental theorem of calculus when  $\epsilon$ is suitably small:
    \[
    \sup_{(u, v) \in [0, \epsilon) \times [0, \epsilon)} |\Psi(u, v) - \Psi(u, 0)| \le \sup_{u \in [0, \epsilon)} \int_0^{\epsilon} |\partial_v \Psi|(u, v') dv'  \le 4 \epsilon M.
    \]
    This then implies that $f_A(\Psi)$, $N_A^{BC}(\Psi)$, $K_A^{BC}(\Psi)$, $L_A^B(\Psi)$, $R_A^B(\Psi)$ are bounded for every $A, B, C=1, 2, 3$.

    Since $v \mapsto \partial_u \Psi(u, v)$ is $C^1_v$ for almost every $u$, then  \eqref{nonlinear_system} gives:
    \begin{align}
    (\partial_u \Psi_A)^p(u, v) &= (\partial_u \Psi_A)^p(u, 0)  \nonumber \\
    &+p \int_0^v (\partial_u \Psi_A)^{p-1}(u, v') \left [  f_A(\Psi) + N_A^{BC}(\Psi) \partial_u \Psi_B \partial_v \Psi_C + \right. \label{p_integration} \\
    & \left. + K_A^{BC}(\Psi) \partial_v\Psi_B \partial_v \Psi_C  + L_A^B(\Psi) \partial_u \Psi_B + R_A^B(\Psi)\partial_v \Psi_B \right](u, v') dv' \nonumber
    \end{align}
   for every $A= 1, 2, 3$, for a.e.\ $u \in [0, \epsilon)$ and for every $v \in [0, \epsilon)$. After integrating, exploiting the bootstrap bound and the fact that $p \ge 2$, we obtain
    \begin{align*}
    \int_0^{\epsilon} \sup_{v \in [0, \epsilon)} |\partial_u \Psi|^p (u', v) du' & \le M + C \int_0^{\epsilon} \int_0^{\epsilon} |\partial_u \Psi|^{p-1}(u', v') \br{1 + 4 M + 16M^2} du' dv' \\ 
    &+ C(1 + 4M) \int_0^{\epsilon } \int_0^{\epsilon} |\partial_u \Psi|^p(u', v') du' dv' \\
    &\le  M + \epsilon C(M) \int_0^{\epsilon} \sup_{v \in [0, \epsilon)} |\partial_u \Psi|^p(u', v) du',
    \end{align*}
    for some $C>0$.
    Hence:
    \[
      \int_0^{\epsilon} \sup_{v \in [0, \epsilon)} |\partial_u \Psi|^p (u', v) du' \le 2M,
    \]
    for a suitably small choice of $\epsilon$.

    So, after exploiting the integral formulation of the system once again, together with the bootstrap bound and Young's inequality, we obtain:
    \begin{align*}
    \sup_{(u, v) \in [0, \epsilon) \times [0, \epsilon)} |\partial_v \Psi|(u, v) &\le M + C \int_0^{\epsilon} \sup_{v \in [0, \epsilon)}\br{1 + |\partial_v \Psi| + |\partial_u \Psi| + |\partial_v \Psi| |\partial_u \Psi| + |\partial_v \Psi|^2} (u', v)du' \\
    &\le M + (C + C(M)) \epsilon + C \epsilon  \int_0^{\epsilon} \sup_{v \in [0, \epsilon)} |\partial_u \Psi|^p (u', v) du' \le 2 M,
    \end{align*}
    provided that $\epsilon$ is small.
    In particular, the bootstrap argument is closed.
\end{proof}
\begin{corollary}
    Consider the Einstein-Maxwell-scalar field system \eqref{wave_r}--\eqref{raych_v}. This can be cast as \eqref{nonlinear_system} with
\[
\Psi = 
\begin{pmatrix}
r \\%
\phi \\%
{\Omega^2} 
\end{pmatrix} 
\quad f = \begin{pmatrix}
-\frac{\Omega^2}{4r} + \frac{\Omega^2 e^2}{4 r^3} + \frac{\Omega^2 \Lambda r}{4} \\%
0 \\%
-\frac{\Omega^4 e^2}{r^4} + \frac{\Omega^4}{2r^2}
\end{pmatrix} 
\]
and
\[
N_1^{11}= -\frac{1}{r}, \quad N_2^{12} = N_2^{21} = -\frac{1}{r}, \quad N_3^{11} = \frac{2 \Omega^2}{r^2}, \quad N_3^{22} = -2 \Omega^2, \quad N_3^{33} = \frac{1}{\Omega^2},
\]
while all the remaining components are zero. The Raychaudhuri equations are taken as constraints.

Under the assumptions of proposition \ref{prop:local_existence}, the solution $\Psi$ in the specific case of the Einstein-Maxwell-scalar field system satisfies
\[
\partial_u \phi, \partial_u \log \Omega^2 \in L^p_u(C^0_v), \qquad  \partial_v \phi, \partial_v \log \Omega^2, \partial_u r, \partial_v r \in C^0_u (C^0_v)
\]
in $[0, \epsilon) \times [0, \epsilon)$ and, for a sufficiently small choice of $\epsilon$, both $r$ and $\Omega^2$ are bounded and bounded away from zero. Note that the fact that $\partial_u r \in C^0_u(C^0_v)$ is a consequence of the Raychaudhuri equations.
\end{corollary}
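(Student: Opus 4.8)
The plan is to carry out the reduction to \eqref{nonlinear_system} algebraically, invoke Proposition \ref{prop:local_existence} verbatim, and then upgrade the abstract regularity output to the precise statement by hand, the only delicate point being the continuity of $\nu = \partial_u r$. Setting $\Psi = (r, \phi, \Omega^2)$, equations \eqref{wave_r} and \eqref{wave_phi} are already of the form \eqref{nonlinear_system}: dividing \eqref{wave_phi} by $r$ gives $f_2 = 0$ and $N_2^{12} = N_2^{21} = -\tfrac{1}{r}$, while \eqref{wave_r} gives $f_1$ and $N_1^{11} = -\tfrac{1}{r}$ directly. The only care is needed for the third component, since the evolution variable is $\Psi_3 = \Omega^2$ whereas \eqref{wave_Omega} is written for $\log\Omega^2$. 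I would apply the chain rule
\[
\partial_u\partial_v\Omega^2 = \Omega^2\,\partial_u\partial_v\log\Omega^2 + \frac{\partial_u\Omega^2\,\partial_v\Omega^2}{\Omega^2},
\]
and substitute \eqref{wave_Omega}; this reproduces exactly $f_3 = -\tfrac{\Omega^4 e^2}{r^4} + \tfrac{\Omega^4}{2r^2}$, the semilinear coefficients $N_3^{11} = \tfrac{2\Omega^2}{r^2}$ and $N_3^{22} = -2\Omega^2$, and the purely quadratic term $N_3^{33} = \tfrac{1}{\Omega^2}$ arising from the chain-rule correction, with all remaining coefficients vanishing. Each coefficient is smooth on the open set $\{r > 0,\ \Omega^2 > 0\}$, so the structural hypotheses of Proposition \ref{prop:local_existence} hold on the region where the solution lives.

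Next I would apply Proposition \ref{prop:local_existence}: there exist $\epsilon > 0$ and a unique solution $\Psi$ in the integrated sense on $[0,\epsilon)\times[0,\epsilon)$ with $r, \Omega^2 > 0$, $\partial_u\Psi_A \in L^p_u(C^0_v)$ and $\partial_v\Psi_A \in C^0_v(C^0_u)$. Shrinking $\epsilon$ if needed, the $L^\infty$ closeness to the positive, continuous initial data obtained in the proof of that proposition forces $r$ and $\Omega^2$ to be bounded and bounded away from $0$. From this, $\partial_u\phi = \partial_u\Psi_2 \in L^p_u(C^0_v)$ and $\partial_v\phi, \partial_v r \in C^0$ are immediate, and writing $\partial_u\log\Omega^2 = \partial_u\Omega^2/\Omega^2$ and $\partial_v\log\Omega^2 = \partial_v\Omega^2/\Omega^2$ and using that $1/\Omega^2$ is bounded and continuous places the former in $L^p_u(C^0_v)$ and the latter in $C^0$. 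The asymmetry between the two directions is inherited from the initial data, where $\partial_u\phi$ is only $L^p$ while $\partial_v\phi$ is continuous.

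The remaining claim, $\partial_u r \in C^0_u(C^0_v)$, is the crux and does \emph{not} follow from Proposition \ref{prop:local_existence}, which only gives $\partial_u r \in L^p_u(C^0_v)$; here the Raychaudhuri constraint \eqref{raych_u} is indispensable. In integrated form it reads, for a.e.\ $v$,
\[
\frac{\nu}{\Omega^2}(u,v) = \frac{\nu}{\Omega^2}(0,v) - \int_0^u r\,\frac{(\partial_u\phi)^2}{\Omega^2}(u',v)\,du'.
\]
Since $p \ge 2$ and $r, 1/\Omega^2$ are bounded, the integrand is dominated by the fixed $L^1_u$ function $\sup_{v}(\partial_u\phi)^2(u',\cdot)\in L^{p/2}_u \subseteq L^1_u$, so the $u$-integral is absolutely continuous in $u$; the boundary term equals $\nu/\Omega^2(0,v) = -\tfrac{1}{4\kappa}(0,v) = -\tfrac14$ by the gauge normalization \eqref{assumption_kappa}. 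This yields continuity in $u$, and joint continuity then follows from the $C^0_v$-valued structure of $\partial_u\phi$: for a.e.\ fixed $u'$ the map $v\mapsto(\partial_u\phi)^2(u',v)$ is continuous, and dominated convergence (with the same $L^1_u$ majorant) upgrades the integral to a jointly continuous function of $(u,v)$. Multiplying by the continuous, nonvanishing $\Omega^2$ recovers $\nu = \partial_u r \in C^0_u(C^0_v)$. The main obstacle is precisely this gain of a full derivative's worth of continuity relative to $\partial_u\phi$: it rests on the \emph{quadratic} appearance of the rough datum in \eqref{raych_u} together with $p\ge 2$, which is exactly why the constraint, and not the evolution equation \eqref{wave_r}, delivers the continuity of $\partial_u r$.
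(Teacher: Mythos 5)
Your proposal is correct and follows exactly the route the paper intends: the corollary is stated there without proof, and its only hint --- that $\partial_u r \in C^0_u(C^0_v)$ is a consequence of the Raychaudhuri equations --- is precisely the argument you supply via the integrated constraint \eqref{raych_u}, where the quadratic appearance of $\partial_u \phi$ together with $p \ge 2$ yields the $L^1_u$ majorant and hence (joint) continuity of $\nu/\Omega^2$. Your algebraic reduction (including the chain-rule correction producing $N_3^{33} = 1/\Omega^2$) and the invocation of Proposition \ref{prop:local_existence} match the paper's implicit argument, so there is nothing to flag.
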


\begin{definition}[$W^{1, p}$ solutions to the spherically symmetric Einstein-Maxwell-real scalar field system]  \label{def_W1psolution}
    Given $\varepsilon > 0$, let $\mathcal{D} = [0, \varepsilon) \times [0, \varepsilon) \subset \mathcal{M}/SO(3) $ and $2 \le p < +\infty$. \textbf{We say that }$\bm{\Psi}=(r, \phi, \Omega^2)\colon \mathcal{D} \to \mathbb{R}^3$ \textbf{is a $\bm{W^{1, p}}$ solution} to \eqref{wave_r}--\eqref{raych_v} if the following two conditions are satisfied:
    \begin{itemize}
        \item $\Psi$ is a solution in the integrated sense of definition \ref{def_integratedsolution}, and
        \item for every $v \in [0, \epsilon)$:
    \[
        \partial_u \phi(\cdot, v), \partial_u \log \Omega^2(\cdot, v) \in L^p_u([0, \varepsilon)) \setminus L^q_u([0, \varepsilon)), \quad \forall\,  q > p.        
    \]
    \end{itemize}    
\end{definition}

\begin{remark}[A physical motivation of $W^{1, p}$ solutions]
    If we set up the characteristic IVP for \eqref{wave_r}--\eqref{raych_v} in such a way to describe the interior of a spherically symmetric charged black hole (see already section \ref{section:construction_interior}), then the $W^{1, p}$ solutions defined in \ref{def_W1psolution} describe those spacetimes evolved from finite-energy initial data, where a non-trivial amount of ingoing radiation is assumed. This is related to the fact that the second condition in definition \eqref{def_W1psolution} implies, in particular, that non-trivial data for the scalar field are prescribed along the initial ingoing segment.

    In light of the discussion in section \ref{section:related_works}, we expect that this condition concerning the ingoing radiation is able to trigger mass inflation at the Cauchy horizon of the black hole under inspection.

    This is indeed the main result of section \ref{section:mass_inflation}, for  $W^{1, 2}$ solutions. Moreover, we will show that a slight increase in the regularity of initial data (i.e.\ allowing for $W^{1, p}$ regularity with $p > 2$) corresponds to tamer amounts of ingoing radiation. Although mass inflation still generally occurs in this case, nonetheless there exists a subset of initial data (corresponding to near-extremal reference black holes) for which the Hawking mass remains bounded and $H^1$ metric extensions can be constructed beyond the Cauchy horizon. The higher the regularity, the larger the subset of initial data allowing for $H^1$ extensions.
\end{remark}

\begin{remark}[A technical remark on the regularity threshold] \label{remark:technical}
    First, we notice that solutions in definition \ref{def_W1psolution}, that are the low-regularity solutions that we will consider throughout the present paper, are strictly less regular than solutions of the second-order systems in \cite{Rossetti} and in the papers \cite{CGNS1, CGNS2, CGNS3, CGNS4}. 
    
    If we further assumed that $\partial_u \phi, \partial_u \log \Omega^2 \in C^0_u$, then our system would be equivalent to the first-order system in \cite{CGNS1} (or the first-order system in \cite{Rossetti}, provided that, in the latter, we set the scalar field charge and mass to zero). To get the equivalence with the second-order formulation, more regularity would need to be assumed, see for instance \cite[Section 6]{CGNS1}.
\end{remark}

\begin{remark}[Properties of the main quantities] \label{remark:signs}
    It follows from \eqref{appendix_nu_modified}, \eqref{assumption_nu} and \eqref{def_kappa}, that
    \begin{itemize}
        \item $\nu < 0$ and
        \item $\kappa > 0$
    \end{itemize}
    in the domain of existence of solutions to our characteristic initial value problem.

    Moreover, in the low-regularity setting in which we prove mass inflation, \textbf{the geometric quantity $\bm{\varpi}$ is not of $\bm{C^1_{u, v}}$ regularity}! It is a  geometric quantity belonging to $C^0_u C^1_v  \cap W^{1, p}_u W^{1, p}_v$. This is manifest from definition \ref{def_varpi}, from the Raychaudhuri equations \eqref{raych_u}, \eqref{raych_v} and from the fact that the mild blow-up assumed on the initial data is propagated along the $v$-direction (see already section \ref{section:mass_inflation}).
\end{remark}

We now retrieve an extension criterion for the system \eqref{wave_r}--\eqref{raych_v}. Related results were obtained in the case $\Lambda = 0$ (see \cite{CGNS1}) and in the case $\Lambda = 0$ (where the scalar field was possibly charged) in \cite{Kommemi} and in \cite{GajicLuk}. None of these results directly apply here, due to the lower regularity we are assuming.
In particular, the bootstrap argument of \cite{Kommemi} does not immediately generalize to our setting, since now $\partial_u \phi$ and $\partial_u \Omega^2$ are not continuous.
In \cite{GajicLuk}, even though weak extensions were  constructed, such extensions were obtained  beyond the Cauchy horizon  by exploiting a sequence of initial value problems where smooth initial data were prescribed.  

\begin{proposition}[Extension criterion] \label{prop:extension_criterion}
Let $\Psi_0$ be the initial data for \eqref{wave_r}--\eqref{raych_v} prescribed on 
\[
\mathcal{D}_0 \coloneqq [0, U] \times \{0\} \cup \{0 \} \times [0, V],
\]
for some $U, V \in \mathbb{R}_0^+$. Assume that such initial data satisfy the assumptions of proposition \ref{prop:local_existence}. For $p \in [2, +\infty)$, let us consider the unique $W^{1, p}$ solution $\Psi = (r, \phi, \Omega^2)$ to the characteristic IVP in 
\[
\mathcal{D} \coloneqq [0, \varepsilon) \times [0, \varepsilon),
\]
for some $0 < \varepsilon < \min \{U, V \}$. Assume that there exists $L, R$ such that
\[
0 < L \le r(u, v) \le R  < +\infty, \quad \forall\, (u, v) \in \mathcal{D}.
\]
Let 
\begin{align*}
N_{\text{i.d.}} &\coloneqq \|\Psi \|_{L^{\infty}(\mathcal{D}_0)} + \|\partial_v \Psi(0, \cdot) \|_{L_v^{\infty}([0, V])} +\|\partial_u \Psi(\cdot, 0) \|_{L_u^{p}([0, U])}, \\
N(\mathcal{D}) &\coloneqq  \|\Psi\|_{L^{\infty}(\mathcal{D})} + \sup_{u \in [0, \varepsilon)} \|\partial_v \Psi(u, \cdot)\|_{L_v^{\infty}([0, \varepsilon))} + \sup_{v \in [0, \varepsilon)}\|\partial_u \Psi(\cdot, v)  \|_{L_u^{p}([0, \varepsilon))} + \left \| \frac{1}{\Omega^2} \right \|_{L^{\infty}(\mathcal{D})}.
\end{align*}
Then: 
\[
N(\mathcal{D}) \le C = C(\varepsilon, L, R, N_{\text{i.d.}}) < +\infty
\]
and there exists $\delta > 0$ and a $W^{1, p}$ solution $\tilde \Psi$ to the same characteristic initial value problem such that $\tilde \Psi$ is defined on $\mathcal{D}^{\delta} \coloneqq [0, \varepsilon+\delta] \times [0, \varepsilon+ \delta]$.
\end{proposition}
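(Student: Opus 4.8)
The plan is to split the argument into two independent parts: first the quantitative a priori estimate $N(\mathcal{D})\le C(\varepsilon,L,R,N_{\text{i.d.}})$, and then a continuation step in which the bounded solution is glued to a fresh application of the local existence result, proposition \ref{prop:local_existence}. The structural point that makes everything work is that the hypothesis $0<L\le r\le R<+\infty$ removes the $r\to 0$ singularities from the coefficients $f_A(\Psi),N_A^{BC}(\Psi),\dots$ of \eqref{nonlinear_system}, so that, once $\Omega^2$ is also bounded below, these coefficients are uniformly bounded on $\mathcal{D}$. This plays the role that the smallness of $\epsilon$ played in the proof of proposition \ref{prop:local_existence}, and it is what allows the estimates to close on a fixed, not necessarily small, domain.

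For the a priori bound I would run a bootstrap on the three quantities $\sup_{u}\|\partial_v\Psi(u,\cdot)\|_{L^\infty_v}$, $\sup_{v}\|\partial_u\Psi(\cdot,v)\|_{L^p_u}$ and $\|1/\Omega^2\|_{L^\infty}$. The key observation, coming from the corollary following proposition \ref{prop:local_existence}, is that only $\partial_u\phi$ and $\partial_u\log\Omega^2$ are merely $L^p_u$, whereas $\partial_v\phi,\partial_v\log\Omega^2,\partial_u r,\partial_v r$ are continuous, and that the low-regularity factors enter the integrated equations \eqref{wave_r}--\eqref{raych_v} only through mixed products with bounded quantities. Hence, integrating the wave equations \eqref{wave_r} and \eqref{wave_phi} in $u$ and estimating the mixed terms by H\"older (e.g.\ $\int_0^u|\partial_v r||\partial_u\phi|/r\,du'\le L^{-1}\|\partial_v r\|_{L^\infty}\|\partial_u\phi\|_{L^p_u}\,\varepsilon^{1/p'}$) gives linear Gr\"onwall inequalities in $u$ for $\partial_v r$ and $\partial_v\phi$; the $L^p_u$ bound on $\partial_u\Psi$ is propagated exactly as in \eqref{p_integration}, raising $\partial_u\Psi_A$ to the power $p$, integrating in $v$ and absorbing the cross terms by Young's inequality, which yields a Gr\"onwall inequality in $v$. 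The genuinely quadratic gradient terms appear only with the favourable sign in the Raychaudhuri equations \eqref{raych_u} and \eqref{raych_v}, so their integrated (monotone) form controls $\nu/\Omega^2$ and $\lambda/\Omega^2$ and prevents the Riccati-type blow-up that a naive estimate would allow. The delicate ingredient is the lower bound on $\Omega^2$: I would obtain it by integrating \eqref{wave_Omega} twice from $\mathcal{D}_0$, controlling the source $-2\partial_u\phi\,\partial_v\phi$ in $L^1$ via $\|\partial_u\phi\|_{L^p_u}\|\partial_v\phi\|_{L^{p'}_u}$ (the latter finite since $\partial_v\phi\in L^\infty_u$) and the remaining terms through $r\in[L,R]$ and the already-controlled quantities, using that $\Omega^2$ is continuous and strictly positive on the compact initial segments (where \eqref{assumption_nu}, \eqref{assumption_kappa}, \eqref{def_kappa} and \eqref{assumption:no_antitrapped} force it to be bounded below). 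Closing the bootstrap then gives $N(\mathcal{D})\le C$.

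With the a priori bound in hand, the continuation is routine. The uniform bounds let $\Psi$ extend continuously to the closed square $[0,\varepsilon]\times[0,\varepsilon]$, and the traces of $\Psi$ on the two far edges $\{u=\varepsilon\}$ and $\{v=\varepsilon\}$ belong to the characteristic-data class of proposition \ref{prop:local_existence} (namely $C^1_v$, resp.\ $C^0_u$, with the $\partial_u$-derivative in $L^p_u$) with norms bounded by $C$. I would then re-apply proposition \ref{prop:local_existence} on the three rectangles adjacent to $\mathcal{D}$, using these traces together with the original axis data on $[\varepsilon,\varepsilon+\delta]$ as new characteristic initial data. Since the existence time produced by proposition \ref{prop:local_existence} depends only on the data norms (here $\le C$) and on $L,R$, it is bounded below uniformly; taking $\delta$ equal to this time (and $\le\min\{U,V\}-\varepsilon$) and gluing the three pieces yields a $W^{1,p}$ solution $\tilde\Psi$ on $\mathcal{D}^\delta=[0,\varepsilon+\delta]\times[0,\varepsilon+\delta]$, which agrees with $\Psi$ on $\mathcal{D}$ by the uniqueness part of proposition \ref{prop:local_existence}.

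I expect the main obstacle to be precisely the uniform lower bound on $\Omega^2$ (equivalently the $L^\infty$ control of $1/\Omega^2$) on the fixed domain $\mathcal{D}$: under only $L^p_u$ control of $\partial_u\phi$ and $\partial_u\log\Omega^2$ the quadratic gradient contributions cannot be treated perturbatively and must be absorbed using the integrated Raychaudhuri monotonicity, which is also what forces the whole system of estimates to be bootstrapped simultaneously rather than one quantity at a time.
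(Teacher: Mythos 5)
Your overall architecture (a priori bound first, then continuation by re-applying proposition \ref{prop:local_existence} with a uniform minimal existence time and gluing rectangles; this second half matches the paper's appendix B in substance) is right, but the mechanism you propose for the a priori bound has a genuine gap: the simultaneous bootstrap on $\sup_u\|\partial_v\Psi\|_{L^\infty_v}$, $\sup_v\|\partial_u\Psi\|_{L^p_u}$ and $\|1/\Omega^2\|_{L^\infty}$ cannot be closed on the fixed domain $\mathcal{D}$. Your improved bound on the $v$-derivatives comes from a Gr\"onwall inequality in $u$ whose source involves the bootstrap constant $A_2$ for $\sup_v\|\partial_u\Psi\|_{L^p_u}$ (via H\"older, exactly as in your displayed estimate); your improved bound on $A_2$ comes from a Gr\"onwall inequality in $v$ whose \emph{rate} involves the bootstrap constant $A_1$ for $\|\partial_v\Psi\|_{L^\infty}$ (and $\|1/\Omega^2\|_{L^\infty}$, through the coefficient $N_3^{33}=1/\Omega^2$); and your improved bound on $1/\Omega^2$, obtained by double integration of \eqref{wave_Omega}, carries $A_1A_2$ in an exponent. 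Chasing these inequalities, the map from bootstrap constants to improved constants is (doubly) exponential, so a fixed point exists only when $\varepsilon$ or $N_{\text{i.d.}}$ is small --- but the proposition assumes no such smallness: $\varepsilon$ is arbitrary up to $\min\{U,V\}$ and the constant is allowed to \emph{depend} on $\varepsilon, L, R, N_{\text{i.d.}}$, not to constrain them. The boundedness of the coefficients, which you invoke as a substitute for smallness of the domain, does not break this circularity: it controls zeroth-order factors, not the dependence of the Gr\"onwall rates on the unknowns themselves.

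What is missing is the observation that the estimates can be arranged in a sequential order in which each step is \emph{unconditional}, so that no bootstrap is needed at all; this is how the paper proceeds. Concretely: (i) $\kappa\le 1$ follows from the integrated Raychaudhuri equation \eqref{appendix_kappa_eqn} (which shows $\kappa$ is non-increasing in $u$, since $\nu<0$) together with the gauge \eqref{assumption_kappa} alone; (ii) since $\nu<0$, one has $\int_0^u|\nu|(u',v)\,du' = r(0,v)-r(u,v)\le R-L$, whence the spacetime-volume bound $\iint_{\mathcal{D}}\Omega^2 = 4\iint_{\mathcal{D}}|\nu|\kappa\le 4\varepsilon(R-L)$ --- the $r$-bounds alone control it; (iii) \eqref{appendix_rlambda_eqn} then gives $\int_0^{\varepsilon}\sup_u|\lambda|\,dv\le C$; (iv) the exponential representation \eqref{appendix_nu_modified} then gives pointwise two-sided bounds on $|\nu|$, unconditionally, because its exponent is controlled by (i)--(iii); (v) with $|\nu|$ bounded, the wave equation for $\phi$ yields a \emph{linear} Gr\"onwall inequality (in $L^1_u$, via Fubini) for $\max_v|r\partial_u\phi|$ whose rate is precisely the quantity in (iii), hence $L^1_u$ and then $L^p_u$ bounds on $\partial_u\phi$; (vi) only now does one obtain the lower bound on $\kappa$, from \eqref{appendix_kappa2_eqn} together with the lower bound on $|\nu|$ and the $L^2_u$ bound on $\partial_u\phi$, after which $1/\Omega^2 = 1/(4|\nu|\kappa)$ is bounded --- no double integration of \eqref{wave_Omega} is needed, and indeed that route cannot be used inside the loop since it presupposes the very bounds it is supposed to deliver. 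Note that this is the opposite of your closing claim: the structure of the system does not force a simultaneous bootstrap; it is precisely the one-quantity-at-a-time ordering, powered by the monotonicity of $\kappa$ and the identity $\int_0^u|\nu|\,du'=r(0,v)-r(u,v)$, that makes the fixed-domain estimate possible.
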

We present the proof of proposition \ref{prop:extension_criterion} in appendix \ref{appendix:extension}.

\section{Rough case: construction of the black hole interior} \label{section:construction_interior}
\subsection{Assumptions} \label{section:assumptions}
In addition to the assumptions in \eqref{assumption_nu}--\eqref{assumption:r_positive} and in definition \ref{def_W1psolution}, we further require the following.
\begin{enumerate}
    \item \textbf{Expression for the rough initial data}:
    along the initial ingoing segment, we require that
\begin{equation} \label{rough_data_general}
    \partial_u \phi_{|\underline{C}_{v_0}} = (\partial_u \phi)_0 + \omega f_1,
\end{equation}
for some $0 \ne \omega \in \mathbb{R}$, where $(\partial_u \phi)_0 \in C^0([0, U])$ and $f_1 \in L^p([0, U]) \setminus L^q([0, U])$ for every $q > p$. 

In section  \ref{section:mass_inflation}, in order to prove the mass inflation results, we will further assume that, for some $\gamma > \frac12$:
\begin{equation} \label{chosen_initial_data2}
        f_1(u) = \begin{cases}
        u^{-\frac{1}{p}} |\log(u)|^{-\frac{2 \gamma}{p}}  , &\text{if }u \in (0, U],\\
        0, &\text{if } u = 0,
        \end{cases}
\end{equation}
which belongs to $L^p_u([0, U]) \setminus L^q_u([0, U])$ for every $q > p$.
    \item \textbf{Exponential Price's law}: for some $s > 0$, we demand that the scalar field obeys
    \begin{equation} \label{price_law_upper}
         |\partial_v \phi |_{|\mathcal{H}^+} (0, v) < e^{-sv},
    \end{equation}
    along the event horizon $\mathcal{H}^+ = \{ u = 0\}$. Only to prove the instability results in the case $p > 2$ (see section \ref{section:mass_inflation}), we will also require that, for some $\omega \in \mathbb{R} \setminus \{ 0 \}$:
    \begin{equation} \label{rough_data_general_v}
        \partial_v \phi_{|\mathcal{H}^+} = (\partial_v \phi)_0 + \omega f_2, 
    \end{equation}
    with $(\partial_v \phi)_0$, $f_2 \in C^0([0, +\infty))$ and where
    \begin{equation} \label{price_law_lower}
    \begin{cases}
    f_2(v) >C  e^{-l(s)v}, \\
    \limsup\limits_{v \to +\infty} \frac{(\partial_v \phi)_0(v)}{f_2(v)} = 0,
    \end{cases}
    \end{equation}
    hold with  $C> 0$ and $s < l(s) < 3s$.
    \item \textbf{Asymptotically approaching a sub-extremal black hole}: along the event horizon, we require:
    \begin{align*}
        \lim_{v \to +\infty} r(0, v) &= r_+, \\
        \lim_{v \to +\infty} \varpi(0, v) &= M, \\
        \lim_{v \to +\infty} \phi(0, v) &= 0,
    \end{align*}
    where  $r_+$ and $M > 0$ are, respectively, the radius of the event horizon associated to the reference black hole and its mass.
    \item \textbf{Hawking's area theorem}: we demand that
    \[
        \lambda(0, v) > 0, \quad \forall\, v \ge v_0.
    \]
    \end{enumerate}
    We will also take $v_1 \ge v_0$ to be a constant, sufficiently large compared to the initial data.
    Following a procedure similar to those in \cite{CGNS1, CGNS2, CGNS3, CGNS4, Rossetti} (see also \cite{Dafermos_2003, Dafermos_2005_uniqueness, LukOh1, VdM1} for related methods in the $\Lambda = 0$ case), in the next sections we construct the maximal past set\footnote{$\mathcal{P} \subset [0, U] \times [v_0, +\infty)$ is called a \textbf{past set} if $J^{-}(u, v) \coloneqq [0, u] \times [v_0, v]$ is contained in $\mathcal{P}$ for every $(u, v) \in \mathcal{P}$.} $\mathcal{P}$ containing
    \[
        \mathcal{D}_0 \coloneqq [0, U] \times \{v_0\} \cup \{0\} \times [v_0, +\infty), \quad \text{ for some } 0 < U < 1, v_0 \ge 0. 
    \]
    In this context, the causal and chronological past of a point are defined as $J^{-}(u, v) \coloneqq [0, u] \times [v_0, v]$ and $I^{-}(u, v) \coloneqq [0, u) \times [v_0, v)$, respectively, and analogous definitions hold for $J^+(u, v)$ and $I^+(u, v)$.
    Differently from the aforementioned works, the construction of the present paper is given in terms of a low-regularity, $W^{1, p}$ solution to the characteristic IVP of section \ref{section:rough_LWP}.

    A posteriori, the black hole interior we construct will be subdivided into several regions determined by level sets of the area-radius. Once the construction of the non-smooth black hole interior is completed, many properties of such curves can be inferred from the smooth case, see e.g.\ fig.\ 4 and section 4 in \cite{Rossetti}, due to the fact that, even in our low-regularity setting, we have $r \in C^1_{u, v}$.
    For the curves of constant area-radius $\rho \in (r_{-}, r_+)$, we use the notation
    \[
        \Gamma_{\rho} \coloneqq \{ (u, v) \in \mathcal{P}\colon \, r(u, v) = \rho \} = \{ (u_{\rho}(v), v) \colon \, v\ge v_0 \} = \{ (u, v_{\rho}(u)) \colon \, 0 < u \le U \},
    \]
    for two suitable $C^1$ functions $u_{\rho}$ and $v_{\rho}$. Moreover, we use the notation $N_{\text{i.d.}}$ to express the norm of the initial data described in the statement of proposition \ref{prop:extension_criterion}.

    \subsection{Notations and conventions}
    Given two non-negative functions $f$ and $g$, we write $f \lesssim g$ (alternatively, $f = O(g)$) if there exists $C > 0$ such that $f \le C g$. Similarly, we define $f \gtrsim g$. If both $f \lesssim g$ and $f \gtrsim g$ hold, then we write $f \sim g$. We employ the notation $f = o(g)$ as $z \to z_0$ (where $f$ does not need to be non-negative) if for every $\epsilon > 0$, there exists $\delta > 0$ such that $|f(z)| \le \varepsilon g(z)$ whenever $|z - z_0| < \delta$.
    
    Different constants may be denoted by the same symbols if the
    value of such constants is not relevant in the computations.

    Throughout the article, we employ the null chart $(u, v) = (u_{\text{Kru}}, v_{\text{EF}})$, where $u \in [0, U]$ is a Kruskal type of coordinate and $v \in [v_0, +\infty)$ is an Eddington-Finkelstein type of coordinate.

\subsection{Event horizon}
\begin{proposition}[Bounds along the event horizon] \label{prop:event_horizon}
    For every $v \ge v_0$, we have
    \begin{align}
        0 < r_+ - r(0, v) &\le C_{\mathcal{H}} e^{-2sv}, \label{EH:r_bound} \\
        0 < \lambda(0, v) &\le C_{\mathcal{H}} e^{-2sv}, \label{EH:lambda}\\
        \frac{e^{2K_+v}}{C_{\mathcal{H}}} \le |\nu|(0, v) = \frac{\Omega^2}{4}(0, v) &\le C_{\mathcal{H}} e^{2K_+v} \label{EH:omega} \\
        |\varpi(0, v) - M| &\le C_{\mathcal{H}} e^{-2sv}, \label{EH:varpi} \\
        |K(0, v) - K_+| &\le  C_{\mathcal{H}} e^{-2sv}, \\
        |\partial_v \log \Omega^2(0, v) -2K(0, v)| &\le C_{\mathcal{H}} e^{-2sv}, \label{EH:logOmega} \\
        |\phi|(0, v) + |\partial_v \phi|(0, v) &\le C_{\mathcal{H}} e^{-sv}, \label{EH:phi_bound}
    \end{align}
    where $C_{\mathcal{H}} > 0$ depends uniquely on the initial data.
\end{proposition}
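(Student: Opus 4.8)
The plan is to derive all seven bounds by restricting the evolution equations to $\mathcal{H}^+=\{u=0\}$, where the gauge \eqref{assumption_kappa} fixes $\kappa(0,\cdot)\equiv1$, and then integrating in $v$; the asymptotic data ($r\to r_+$, $\varpi\to M$, $\phi\to0$), together with $\nu<0$ and $\lambda>0$, are used throughout. The only genuinely delicate estimate is the one on $r_+-r$, where the redshift produces a growing mode that must be switched off using the a priori decay. I would start with the two bounds that need only a single integration. The scalar field bound \eqref{EH:phi_bound} is immediate: \eqref{price_law_upper} gives $|\partial_v\phi|(0,v)<e^{-sv}$, and since $\phi(0,v)\to0$,
\[
|\phi|(0,v)=\Big|\int_v^\infty \partial_v\phi(0,v')\,dv'\Big|\le \int_v^\infty e^{-sv'}\,dv'=\tfrac1s e^{-sv}.
\]
For \eqref{EH:varpi} I use the mass identity $\partial_v\varpi=\frac{r^2(\partial_v\phi)^2}{2\kappa}$, which follows from the Raychaudhuri equation \eqref{raych_v} together with \eqref{def_varpi}, \eqref{def_K} and the algebraic relation $1-\mu=\lambda/\kappa$. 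Since $\kappa\equiv1$ on $\mathcal{H}^+$, and $\lambda>0$ forces $r(0,\cdot)$ to increase up to $r_+$ (so $r\le r_+$), the integrand is $\le\frac{r_+^2}{2}e^{-2sv'}$; integrating from $v$ to $\infty$ and using $\varpi\to M$ gives $0\le M-\varpi(0,v)\le C_{\mathcal{H}}e^{-2sv}$.

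The heart of the matter is \eqref{EH:r_bound}. Writing $x(v):=r_+-r(0,v)\ge0$ and using $\lambda=1-\mu$ on $\mathcal{H}^+$ together with the horizon condition $\mu(r_+,M)=1$, I decompose $\partial_v x=-\lambda=\mu(r,\varpi)-\mu(r_+,M)$ into a mass part and a radial part to obtain the exact linear ODE
\[
\partial_v x=2\bar K(v)\,x-\tfrac{2}{r}\big(M-\varpi\big)(0,v),\qquad \bar K(v):=\tfrac{1}{x}\int_{r(0,v)}^{r_+}K(r',M)\,dr'\xrightarrow{v\to\infty}K_+>0.
\]
The homogeneous solution $e^{2K_+v}$ grows — this is precisely the redshift — so the bounded solution must be selected using $x\to0$: multiplying by the integrating factor $I(v):=\exp\big(-\int_{v_0}^v 2\bar K\big)$ and integrating \emph{from} $+\infty$ (where $I(v)x(v)\to0$, since $x\to0$ and $I$ decays) yields
\[
x(v)=\frac{1}{I(v)}\int_v^\infty I(v')\,\frac{2(M-\varpi)(0,v')}{r(0,v')}\,dv'.
\]
Since $\bar K(v')\ge K_+/2$ for $v'$ large, one has $I(v')/I(v)\le e^{-K_+(v'-v)}$, and inserting the already-established bound $M-\varpi\le C_{\mathcal{H}}e^{-2sv'}$ (with $r$ bounded below) gives $x(v)\lesssim e^{-2sv}$ for large $v$, the compact range $[v_0,V]$ being absorbed into $C_{\mathcal{H}}$. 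Substituting back into $\lambda=-2\bar K x+\tfrac{2}{r}(M-\varpi)$ then yields $0<\lambda(0,v)\le C_{\mathcal{H}}e^{-2sv}$, i.e.\ \eqref{EH:lambda}. I expect this step — choosing the decaying branch of the redshift ODE and extracting the sharp rate $e^{-2sv}$ — to be the main obstacle, since the qualitative input $r\to r_+$ is genuinely needed to defeat the growing mode.

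The remaining bounds are then algebraic or a final integration. Because $K$ is a smooth function of $(r,\varpi)$ with $r$ bounded away from $0$, I get $|K(0,v)-K_+|\lesssim|r-r_+|+|\varpi-M|\le C_{\mathcal{H}}e^{-2sv}$. For the conformal factor I compute, on $\mathcal{H}^+$, that $\Omega^2=-4\nu\kappa=-4\nu$ and $\partial_v\log\kappa=0$, so $\partial_v\log\Omega^2=\partial_v\nu/\nu$; substituting \eqref{wave_r} for $\partial_v\nu$ and using $\Omega^2=-4\nu$ and $\lambda=1-\mu$ collapses everything to the \emph{exact} identity $\partial_v\log\Omega^2(0,v)=2K(0,v)$, which gives \eqref{EH:logOmega} with vanishing error (a fortiori the stated $e^{-2sv}$ bound). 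Finally, writing $2K=2K_++2(K-K_+)$ and integrating, the correction $\int_{v_0}^v 2(K-K_+)$ converges by the $K$-bound, so $\log\Omega^2(0,v)-2K_+v$ remains bounded above and below; this produces the two-sided bound \eqref{EH:omega}, and $|\nu|=\Omega^2/4$ completes the proof.
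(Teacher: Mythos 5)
Your proof is correct, and it is essentially the argument the paper has in mind: the paper omits the proof, deferring to \cite[Proposition 4.2]{Rossetti}, whose standard mechanism is exactly what you implement — fix the gauge $\kappa\equiv 1$ on $\mathcal{H}^+$, bound $M-\varpi$ by integrating the mass equation $\partial_v\varpi=\tfrac{r^2(\partial_v\phi)^2}{2\kappa}$, and then kill the growing redshift mode of the linear ODE along the horizon by integrating from $v=+\infty$ using the asymptotic data $r\to r_+$ (you phrase the ODE in terms of $x=r_+-r$ with the averaged $\bar K$, whereas the reference works with $\lambda$ directly via $\partial_v\lambda=2K\lambda-r(\partial_v\phi)^2/\kappa$; the two are interchangeable). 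Your additional observation that, in this gauge, $\partial_v\log\Omega^2(0,v)=2K(0,v)$ holds as an exact identity (so \eqref{EH:logOmega} has zero error) is also correct, and all horizon quantities are $C^1$ in $v$ in the rough setting, so the pointwise manipulations are legitimate.
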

\begin{proof}
    The proof is similar to that of \cite[proposition 4.2]{Rossetti}, where now we need to employ the integral formulation of \eqref{wave_r}--\eqref{raych_v} (for instance, the equation for $\partial_v \log |\nu|$ is proved by using the integrated Leibniz rule of lemma \ref{lemma:integral_leibniz_rule}). Since, along the event horizon, there are only minor differences between the smooth and the rough case, we omit the proof here.

    However, we stress a crucial distinction between the two cases: no bounds on $|\partial_u \phi|$ are generally available here. Indeed, under our assumptions it is possible to choose  initial data for $\partial_u \phi$ given by \eqref{rough_data_general}, with $f_1$ as in \eqref{chosen_initial_data2}. This is an admissible choice of rough initial data yielding $\partial_u \phi = +\infty$ identically along the event horizon (as follows e.g.\ from \eqref{appendix_rduphi}).
\end{proof}

\subsection{Redshift region}
\begin{proposition}[Propagation of estimates near the event horizon] \label{prop:redshift}
    Let $2 \le p < +\infty$, let $\eta > 0$ be small and define 
    \begin{equation} \label{def_cps}
    c_p(s) \coloneqq \min \left \{ s, \frac{2K_+}{p'} - C(\eta) \right \},
    \end{equation}
    where $\frac{1}{p'} = 1- \frac{1}{p}$ and for some $C(\eta) > 0$ that vanishes as $\eta$ goes to zero.

    Then, for every $\delta \in (0, \eta)$ small compared to the initial data, given $R$ satisfying
    \[
    0 < r_+ - R < \delta
    \]
    and given the curve $\Gamma_R=\{(u, v) \colon\, r(u, v) = R\}$, 
    we have that $J^{-}(\Gamma_R) \cap \{ v \ge v_1\} \ne \emptyset$. Moreover, for every $(u, v) \in J^{-}(\Gamma_R) \cap \{v \ge v_1\}$, the following inequalities hold:
    \begin{align}
        2\br{ r(0, v) - r(u, v)} \le u\Omega^2(0, v) &\le C \delta, \\
        0 < r_+ - r(u, v) &\le C e^{-2sv} + u\Omega^2(0, v), \\
        |\lambda|(u, v) &\le C e^{-2sv} + C u\Omega^2(0, v), \label{R:lambda} \\
        |\nu|(u, v) &\sim \Omega^2(0, v), \\
        |\varpi(u, v) - M| &\le C \delta^2 e^{-c_p(s)v}, \\
        |K(u, v) - K_+| &\le C \br{\delta^2 e^{-c_p(s)v} + u\Omega^2(0, v)} \lesssim \delta, \label{redshift:K_bound}\\
        |\partial_v \log \Omega^2(u, v) - 2K(u, v)| &= o_v(1), \quad \text{ as } v \to +\infty  \label{redshift:dvlogOmega}\\
        \|\partial_u \phi(\cdot, v)\|_{L^1_u([0, u])} &\le \delta e^{-c_p(s)v}, \label{redshift:duphi_L1} \\
        |\phi|(u, v) + |\partial_v \phi|(u, v) &\le C e^{-c_p(s)v}, 
    \end{align}
    where $C=C(N_{\text{i.d.}}, \eta)$. Furthermore:
    \begin{equation} \label{R:monotonicity}
    \lambda(u_2, v) < \lambda(u_1, v), \quad \forall\, (u_1, v), (u_2, v) \in J^{-}(\Gamma_R) \cap \{v \ge v_1\} \text{ such that } u_2 > u_1.
    \end{equation}
\end{proposition}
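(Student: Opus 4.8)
The plan is to run a single bootstrap (continuity) argument on the region $\mathcal{R} \coloneqq J^{-}(\Gamma_R)\cap\{v\ge v_1\}$, after first checking that this region is nonempty and that the $W^{1,p}$ solution genuinely exists there. Nonemptiness follows because $r(0,v)\to r_+>R$ along $\mathcal{H}^+$ by \eqref{EH:r_bound}, so the level curve $\Gamma_R$ enters $\{v\ge v_1\}$; that the solution does not break down before reaching $\Gamma_R$ is guaranteed by the extension criterion of proposition \ref{prop:extension_criterion}, since the a priori bounds below keep $r$ bounded away from $0$ and $+\infty$ and $1/\Omega^2$ controlled. The whole computation rests on a few renormalized identities derived from \eqref{wave_r}--\eqref{raych_v} and recorded in appendix \ref{appendix:useful_expressions}: using \eqref{def_kappa} and the definitions \eqref{def_varpi}, \eqref{def_K} one obtains the redshift equation $\partial_v\log|\nu| = 2\kappa K$, the clean wave identities $\partial_v(r\partial_u\phi)=-\nu\,\partial_v\phi$ and $\partial_u(r\partial_v\phi)=-\lambda\,\partial_u\phi$, the Raychaudhuri relation $\partial_u(1/\kappa)=4r(\partial_u\phi)^2/\Omega^2$ from \eqref{raych_u}, and the mass identities for $\partial_u\varpi,\partial_v\varpi$ (with $1-\mu=\lambda/\kappa$). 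I would bootstrap, with doubled constants, pointwise bounds on the \emph{continuous} quantities ($r$ near $r_+$, $\kappa$ near $1$, $|\lambda|$, $|\varpi-M|$, $|\phi|+|\partial_v\phi|$) together with the integral bound $\|\partial_u\phi(\cdot,v)\|_{L^1_u}\le 2\delta e^{-c_p(s)v}$ and an auxiliary $L^p_u$ bound; the whole point of the low-regularity setting is that $\partial_u\phi$ is controlled only in these integral norms.

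Given the bootstrap assumptions, $K$ is forced close to $K_+$ by \eqref{def_K} and the bounds on $r,\varpi$. Feeding this into $\partial_v\log|\nu|=2\kappa K$ and integrating in $v$, comparing the slice of constant $u$ with the horizon slice, yields $|\nu|(u,v)\sim\Omega^2(0,v)\sim e^{2K_+v}$; the subtle point here is that the $v'$-integral of the corrections $\kappa K-K_+$ must stay uniformly bounded, which works because the dominant error term $u\,\Omega^2(0,v')$ is integrable in $v'$ (its integral is controlled by its upper endpoint $\sim u\,\Omega^2(0,v)\lesssim\delta$). The same growth shows the level curve satisfies $u_R(v)\sim\delta e^{-2K_+v}$, i.e.\ the $u$-interval available in $\mathcal R$ shrinks exponentially. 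The geometric estimates then follow by integrating in $u$: $r(0,v)-r(u,v)=\int_0^u|\nu|\,du'\lesssim u\,\Omega^2(0,v)\lesssim\delta$ gives the first two displayed inequalities, integrating $\partial_u\lambda=2\nu\kappa K$ in $u$ gives \eqref{R:lambda}, and the sign $\partial_u\lambda=2\nu\kappa K<0$ (as $\nu<0$, $\kappa>0$, $K\approx K_+>0$ throughout $\mathcal R$) gives the monotonicity \eqref{R:monotonicity}.

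The heart of the proof is the scalar field. From $\partial_v(r\partial_u\phi)=-\nu\,\partial_v\phi$ I would write $r\partial_u\phi(u,v)=r\partial_u\phi(u,v_0)+\int_{v_0}^v(-\nu)\partial_v\phi\,dv'$ and integrate in $u$ over $[0,u]\subseteq[0,u_R(v)]$. The initial-data term contributes $\int_0^u|f_1|\le\|f_1\|_{L^p}\,u^{1/p'}\lesssim e^{-(2K_+/p')v}$ by H\"older and the shrinking of the interval; this is precisely the origin of the exponent $2K_+/p'$ in \eqref{def_cps}, the loss $C(\eta)$ absorbing the constants in the comparison $|\nu|\sim\Omega^2(0,v)$ and the margin needed for the later regions. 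For the source term I would use $\int_0^u(-\nu)(u',v')\,du'=r(0,v')-r(u,v')\lesssim\min\{\delta,\,u\,e^{2K_+v'}\}\lesssim\delta\,e^{-2K_+(v-v')}$ together with $|\partial_v\phi|\le Ce^{-c_p(s)v'}$, so that after the $v'$-integral (dominated by its upper endpoint, since $2K_+>c_p(s)$) one gets $\lesssim\delta e^{-c_p(s)v}$. Both contributions are $\le\delta e^{-c_p(s)v}$, improving \eqref{redshift:duphi_L1}. The bound on $\partial_v\phi$ then follows from $\partial_u(r\partial_v\phi)=-\lambda\,\partial_u\phi$ via $|r\partial_v\phi(u,v)|\le r_+|\partial_v\phi(0,v)|+\sup|\lambda|\cdot\|\partial_u\phi(\cdot,v)\|_{L^1_u}$, and the bound on $\phi$ by integrating $\partial_u\phi$ in $u$. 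For $\varpi$ and $\kappa$ I would integrate the mass and Raychaudhuri identities in $u$: the quadratic term $(\partial_u\phi)^2$ is handled by the auxiliary $L^p_u$ (hence, by H\"older, $L^2_u$) bound, which decays only polynomially in $v$, but is multiplied by the redshift weight $1/\Omega^2\sim e^{-2K_+v}$ and by the smallness $1-\mu\sim\lambda\lesssim\delta$, producing the claimed $C\delta^2 e^{-c_p(s)v}$ once the polynomial factor is absorbed into the exponential (recall $c_p(s)\le 2K_+/p'\le 2K_+$). Finally \eqref{redshift:dvlogOmega} follows from $\partial_v\log\Omega^2=2\kappa K+\partial_v\log\kappa$ together with $\kappa\to1$.

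The main obstacle, compared with the smooth constructions of \cite{Rossetti,CGNS4}, is the complete absence of a pointwise bound on $\partial_u\phi$: every step that classically used $\|\partial_u\phi\|_{L^\infty}$ must be split into a part linear in $\partial_u\phi$, estimated through the exponentially decaying $L^1_u$ norm, and a quadratic part, estimated through the only polynomially decaying $L^p_u$ norm paired with a redshift weight. The genuinely delicate point is to produce the \emph{precise} decay rate $c_p(s)=\min\{s,\,2K_+/p'-C(\eta)\}$: it emerges from the competition between the exponential shrinking $u_R(v)\sim e^{-2K_+v}$ of the integration interval, the borderline integrability $\int_0^u|f_1|\sim u^{1/p'}$ of the rough data, and the Price-law rate $s$ carried by the source $-\nu\,\partial_v\phi$. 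Throughout, the recurring technical mechanism is that error terms of the form $u\,\Omega^2(0,v')$ are integrable in $v'$ because their integral is dominated by the upper endpoint $\lesssim\delta$. I would close the bootstrap by fixing $\eta$, then $\delta$ small and $v_1$ large, so that each improved bound gains a factor (a power of $\delta$, or a large-$v_1$ smallness) over its doubled bootstrap constant; the standard openness and closedness of the bootstrap set, together with the extension criterion, then propagate all estimates to the whole of $\mathcal R$.
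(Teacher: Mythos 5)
Your overall strategy coincides with the paper's (a bootstrap on the continuous quantities plus integral control of $\partial_u\phi$, with the rate $c_p(s)$ produced by H\"older against the exponentially shrinking interval $u\lesssim\delta e^{-2K_+v}$), and your direct use of the integrated identity \eqref{appendix_rduphi} --- which holds for a.e.\ $u$ in the rough class --- in place of the paper's renormalized quantity $e^{av}r\partial_u\phi/\nu$ and its smooth-approximation/density step is, in itself, a legitimate simplification. However, there is a genuine circularity at the base of your argument: you do not include the comparison $|\nu|(u,v)\sim\Omega^2(0,v)$ among the bootstrap hypotheses, and instead claim to derive it by integrating $\partial_v\log|\nu|=2\kappa K$ (i.e.\ \eqref{appendix_nu_modified}) in $v$, asserting that the corrections to $2K_+$ are dominated by the $v'$-integrable term $u\,\Omega^2(0,v')$. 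That domination is precisely what you are trying to prove. The correction $2\kappa K(u,v')-2K(0,v')$ contains, via \eqref{def_K}, a term proportional to $r(0,v')-r(u,v')=\int_0^u|\nu|(u',v')\,du'$, and bounding this by $u\,\Omega^2(0,v')$ requires the upper bound $|\nu|\lesssim\Omega^2(0,\cdot)$ that is being established. From your stated bootstrap assumptions alone one only gets $2\kappa K=2K_+ +O(\eta+\delta)$ pointwise, hence after integration $|\nu|(u,v)\le C\,\Omega^2(0,v)\,e^{C(\eta+\delta)(v-v_1)}$: since the region contains points with arbitrarily large $v$, this exponentially degenerating comparison destroys every downstream step --- the interval bound $u\lesssim\delta e^{-2K_+v}$, hence the H\"older exponent $2K_+/p'$ in \eqref{def_cps}, hence your source-term estimate $\int_0^u(-\nu)\,du'\lesssim\delta e^{-2K_+(v-v')}$ --- and the bootstrap does not close. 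The fix is exactly what the paper does: bootstrap $\Omega^2(0,v)/8\le|\nu|(u,v)\le\Omega^2(0,v)$ (see \eqref{redshift:bootstrap_nu}) alongside the other quantities, and close this assumption not from the $v$-transport equation for $\log|\nu|$ but from the $u$-direction equation for $\partial_v\log\Omega^2$ (the integral formulation of \eqref{wave_Omega}), whose error terms $e^{-c_p(s)v}\|\partial_u\phi(\cdot,v)\|_{L^1_u}+u\,\Omega^2(0,v)$ are manifestly integrable in $v$ under the bootstrap.

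A second, more minor, flaw: your proposed proof of \eqref{redshift:dvlogOmega} from the identity $\partial_v\log\Omega^2=2\kappa K+\partial_v\log\kappa$ together with $\kappa\to1$ is vacuous, because $\partial_v\log\kappa$ is by definition $\partial_v\log\Omega^2-2\kappa K$, and nothing in your scheme controls it independently. The paper instead obtains \eqref{redshift:dvlogOmega} from the $u$-direction equation \eqref{appendix_log_eqn}, comparing $\partial_v\log\Omega^2-2K$ at $(u,v)$ with its horizon value \eqref{EH:logOmega}, the difference being controlled by the already-closed bounds on $\partial_u\phi\partial_v\phi$, $\partial_u\kappa$ and $\partial_u\varpi$. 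Once these two points are repaired --- the $\nu$-comparison promoted to a bootstrap assumption and \eqref{redshift:dvlogOmega} proved through \eqref{appendix_log_eqn} --- the rest of your outline (the $L^1_u$ and $L^p_u$ estimates, the closure of $\varpi$, $\kappa$, $\phi$, $\partial_v\phi$, and the monotonicity \eqref{R:monotonicity}) goes through essentially as in the paper.
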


\begin{proof}
Let
\begin{equation} \label{def_Pdelta}
\mathcal{P}_{\delta} \coloneqq \big \{ (u, v) \in [0, U] \times [v_1, +\infty) \colon 0 < r_+ - r(u, v) \le \delta \big \},
\end{equation}
which is non-empty due to the fact that the $r$ coordinate gives a parametrization of the event horizon.
Define $E$ as the set of points $q$ in $\mathcal{P}_{\delta}$ such that the following inequalities hold for every $(u, v) \in J^{-}(q) \cap \mathcal{P}_{\delta}$:
\begin{align}
|\phi|(u, v) + |\partial_v \phi|(u, v) &\le 4 C_{\mathcal{H}}  e^{-c_p(s)v}, \label{redshift:bootstrap_phi} \\
\kappa(u, v) & \ge 1-\eta, \label{redshift:bootstrap_kappa} \\
\frac{\Omega^2(0, v)}{8} \le |\nu|(u, v) &\le \Omega^2(0, v), \label{redshift:bootstrap_nu}\\
|\varpi(u, v) - M| &\le \eta. \label{redshift:bootstrap_varpi}
\end{align}
The above inequalities hold along the event horizon, for large values of $v$ (see proposition \ref{prop:event_horizon}, \eqref{assumption_kappa} and recall that $\Omega^2(0, v)=4 |\nu|(0, v)$). Moreover, all the quantities above are continuous in $\mathcal{P}_{\delta}$.

Let $(u, v) \in E$. Using the bootstrap inequality for $|\nu|$ and the fact that $r(0, v) < r_+$ for every $v$:
\begin{equation} \label{first_redshift}
\delta \ge r(0, v) - r(u, v)  = \int_0^u |\nu|(u', v)du' \ge C u \Omega^2(0, v).
\end{equation}
A similar reasoning, together with \eqref{EH:r_bound}, gives:
\[
r_+ - r(u, v) = r_+ - r(0, v) +r(0, v) - r(u, v) \le C e^{-2sv} + u\Omega^2(0, v).
\]
Inequality \eqref{first_redshift} and analogous computations (recall that $\Omega^2(0, v) \sim e^{2K_+v}$) show that
\begin{equation} \label{redshift:bound_u}
    (r(0, v)- r(u, v)) e^{-2K_+v}\lesssim u  \lesssim \delta e^{-2K_+v}.
\end{equation}
Moreover, \eqref{appendix_rlambda_eqn}, \eqref{EH:lambda}, the fact that $\Omega^2 = 4 |\nu| \kappa$ and the bootstrap estimates imply:\footnote{Here and in the following, we will repeatedly use that $r$ is bounded and bounded away from zero and that $\Omega^2 = -4 \nu \kappa$ (see \eqref{def_kappa}).}
\begin{equation} \label{redshift:lambda_bdd}
|r \lambda|(u, v) \le r_+ |\lambda|(0, v)  + C(N_{\text{i.d.}}, \eta) u \Omega^2(0, v)  \le C e^{-2sv} + C u\Omega^2(0, v).
\end{equation}
The previous bounds on $r$ (notice that $u\Omega^2(0, v) \lesssim \delta$) and the bootstrap bound on $\varpi$ also give:
\begin{equation} \label{redshift:K}
| K(u, v) - K_+ | = \left | \br{\frac{\varpi}{r^2} - \frac{e^2}{r^3} - \frac{\Lambda}{3}r }(u, v) - \br{\frac{M}{r_+^2} - \frac{e^2}{r_+^3} - \frac{\Lambda}{3}r_+ } \right | = O(\eta) + O(\delta).
\end{equation}

Let us assume, for the moment, that\footnote{Under this assumption, our PDE system is (at least) equivalent  to the first-order system in \cite{CGNS1}, see also remark \ref{remark:technical}. Notice that an adaptation of the Picard iteration argument in \cite[theorem 4.2]{CGNS1} allows to find a sequence of smooth solutions satisfying \eqref{wave_r}--\eqref{raych_v} that converges,  in $W^{1, p}_u C^0_v$ norm, to a solution in the sense of definition \ref{def_W1psolution}.}
\begin{equation} \label{redshift:temp_assumptions}
\partial_u \phi(\cdot, v), \partial_u \log\Omega^2(\cdot, v) \in C^{\infty}_u([0, u]).
\end{equation}
We will ultimately drop this assumption and approximate the Sobolev norms of $\partial_u \phi(\cdot, v)$ by a density argument.
Under the validity of \eqref{redshift:temp_assumptions}, we can exploit some expressions obtained in \cite[Proposition 4.7]{Rossetti} (where we can set the scalar field charge and mass to zero). In particular, given
\[
f \coloneqq \frac{r \partial_u \phi}{\nu},
\]
and a constant $a$ defined such that
\[
\begin{cases}
a - \kappa (2K) < 0, \\
a -c_p(s) > 0,
\end{cases}
\]
it is showed in \cite{Rossetti} that the following holds: 
\begin{equation} \label{preliminary_Duphi}
\partial_v \br{e^{av} f(u, v)} 
= -e^{av}\partial_v \phi + e^{a v} f(u, v) (a - \kappa(2K)).
\end{equation}
Notice that, due to \eqref{redshift:K}, to the bootstrap estimates and the smallness assumptions on $\eta$ and $\delta$, we can take $a < 2K_+$ close to $2K_+$.
 Then, by integrating the above and using \eqref{redshift:bootstrap_phi}:
\begin{align*}
e^{av} |f|(u, v) &= \left| e^{a v_0} f(u, v_0) e^{\int_{v_0}^v \br{a- \kappa(2K)}(u, y)dy} - \int_{v_0}^v e^{\int_{v'}^v (a - \kappa(2K))(u, y)dy} e^{a v'}\partial_v \phi (u, v')dv'  \right| \\ 
& \lesssim |f|(u, v_0) + \int_{v_0}^v e^{a v'} |\partial_v \phi|(u, v')dv'  \lesssim |f|(u, v_0) + e^{(a-c_p(s))v}.
\end{align*}
Hence, \eqref{redshift:bootstrap_nu} and \eqref{assumption_nu} give:
\begin{equation} \label{redshift:pointwise_duphi1}
|\partial_u \phi|(u,v ) \lesssim e^{-av}\Omega^2(0, v) |\partial_u \phi|(u, v_0) +e^{-c_p(s)v}\Omega^2(0, v),
\end{equation}
when $\partial_u \phi, \partial_u \log\Omega^2 \in C^{\infty}_u([0, U])$. This implies:
\begin{equation} \label{redshift:L2_duphi}
\| \partial_u \phi(\cdot, v) \|_{L^2_u([0, u])}^2 \lesssim e^{-2av} \Omega^4(0, v)\| \partial_u \phi(\cdot, v_0) \|_{L^2_u([0, u])}^2 + u e^{-2c_p(s)v}\Omega^4(0, v).
\end{equation}
Moreover, we have the following $L^1$ bound. Let $p'$ be the conjugate exponent of $p$, i.e.\ $pp' = p+ p'$. By \eqref{redshift:pointwise_duphi1}, \eqref{first_redshift} and by H{\"o}lder's inequality:
\begin{align}
\int_0^u |\partial_u \phi|(u', v)du'& \lesssim  e^{-av}\Omega^2(0, v) \int_0^u |\partial_u \phi|(u', v_0) du' + e^{-c_p(s)v} \Omega^2(0, v) u \nonumber \\
&\le  e^{-av}\Omega^2(0, v) \|1\|_{L^{p'}_u([0, u])} \| \partial_u \phi(\cdot, v_0) \|_{L^p_u([0, u])} + \delta e^{-c_p(s)v } \label{first_holder} \\
&\le o_u(1)\, e^{-av} \Omega^2(0, v) u^{1-\frac{1}{p}} + \delta e^{-c_p(s)v}, \nonumber
\end{align}
due to our assumptions on the initial data, where $o_u(1)$ is referred to the limit $u \to 0$. Now, using that $u \lesssim e^{-2K_+ v}$, $\Omega^2(0, v) \sim e^{2K_+v}$, the definition of $c_p(s)$ and since $a$ can be chosen arbitrarily close to $2K_+$, we have 
\[
e^{-av} \Omega^2(0, v) u^{1-\frac{1}{p}} \lesssim e^{-c_p(s)v}e^{-(a-c_p(s)-2K_+ + 2K_+\br{1-\frac{1}{p}} )v} \lesssim e^{-(c_p(s)+\epsilon)v},
\]
for some small $\epsilon > 0$ and for a suitable choice of $C(\eta)$ in \eqref{def_cps}.
In particular, we proved that
\begin{equation} \label{redshift:L1_duphi}
\|\partial_u \phi(\cdot, v)\|_{L^1_u([0, u])} \lesssim \delta e^{-c_p(s)v}
\end{equation}
for $v$ large.
We now drop the regularity assumptions \eqref{redshift:temp_assumptions} on $\partial_u \log \Omega^2$ and $\partial_u \phi$: by density, estimates \eqref{redshift:L2_duphi} and \eqref{redshift:L1_duphi} still hold for $\partial_u \phi \in L^p_u([0, u])$.

Now, \eqref{appendix_u_varpi_eqn}, the fact that $|\lambda| \lesssim \delta$ (see \eqref{redshift:lambda_bdd}), that $\Omega^2(u, v)\sim \Omega^2(0, v)\sim e^{2K_+v}$, and \eqref{redshift:L2_duphi} give:
\begin{align}
|\varpi(u, v) - \varpi(0, v)| & \lesssim \int_0^u \left |   \frac{ r^2 (\partial_u \phi)^2}{\Omega^2} \lambda \right |(u', v) du' \nonumber \\
&\lesssim  \frac{\delta}{\Omega^2(0, v)} \|\partial_u \phi(\cdot, v)\|^2_{L^2_u([0, u])} \nonumber \\
&\lesssim  \delta o_1(U) e^{-(2a - 2K_+)v} + \delta^2 e^{-2c_p(s)v} \nonumber \\
&\lesssim \delta^2 e^{-c_p(s)v} \label{redshift:close_varpi}
\end{align}
where we used that $u \Omega^2(0, v) \lesssim \delta$ and that $a$ is chosen close to $2K_+$, hence, for some $\varepsilon = \varepsilon(\eta) > 0$ and for a suitable choice of $C(\eta)$ in  \eqref{def_cps}: $e^{-2av} \Omega^2(0, v) \lesssim e^{-(2K_+ -\varepsilon)v} \lesssim e^{-c_p(s)v}$. Thus, \eqref{EH:varpi} and \eqref{redshift:close_varpi} allow to close \eqref{redshift:bootstrap_varpi} for $v_1$ large. This result also entails that the estimate for $K$ can be improved so to obtain \eqref{redshift:K_bound}.

Similarly, we exploit \eqref{appendix_kappa2_eqn}, \eqref{redshift:L2_duphi}, the fact that $\kappa > 0$ and $\partial_u \kappa < 0$ for a.e.\ $u$ (see remark \ref{remark:signs} and \eqref{appendix_kappa_eqn}) and that $|\nu|(u, v) \sim \Omega^2(0, v)$ to show that
\[
 - \int_0^u  \frac{\partial_u \kappa}{\kappa}(u', v)du'  = \int_0^u \frac{r |\partial_u \phi|^2}{|\nu|}(u', v) du' \le \frac{\| \partial_u \phi(\cdot, v) \|^2_{L^2_u([0, u])}}{\Omega^2(0, v)} \lesssim  \delta e^{-c_p(s)v}.
\]
The bootstrap for $\kappa$ closes since the above implies\footnote{The fact that $\kappa \le 1$ follows from \eqref{assumption_kappa} and  \eqref{appendix_kappa_eqn}, which is a consequence of the integrated Raychaudhuri equation along the $u$ direction.}
\begin{equation} \label{redshift:close_kappa}
\exp \br{-o_1(v) } \le \kappa(u, v) \le 1,
\end{equation}
as $v \to +\infty$.
To close the bootstrap on $\partial_v \phi$, consider \eqref{appendix_rdvphi} and use \eqref{EH:phi_bound}, \eqref{redshift:lambda_bdd} and \eqref{redshift:L1_duphi} to obtain:
\begin{align*}
|r \partial_v \phi|(u, v) &= \left | (r \partial_v \phi)(0, v) - \int_0^u \br{\lambda \partial_u \phi } (u', v)du' \right | \\
&\le  C_{\mathcal{H}} r_+ e^{-sv}+\delta^2 e^{-c_p(s)v}.
\end{align*}
Now, for a small choice of $\delta$ we have $\frac{r_+}{r(u, v)} \le 2$ (see \eqref{def_Pdelta}) and
\[
| \partial_v \phi|(u, v) < \frac{5}{2}C_{\mathcal{H}} e^{-c_p(s)v}.
\]
An analogous estimate for $\phi$ follows if we exploit the fundamental theorem of calculus\footnote{That holds under the mere assumption of $L^1$-integrability of $\partial_u \phi$.} and \eqref{redshift:L1_duphi}. Indeed, we obtain:
\[
|\phi|(u, v) = \left | \phi(0, v) + \int_0^u \partial_u \phi(u', v) du' \right | < \frac{3}{2}C_{\mathcal{H}} e^{-c_p(s)v}.
\]
The bootstrap inequality for $\phi$ and $\partial_v \phi$ then closes.

To close the bootstrap on $|\nu|$, take the integral formulation of \eqref{wave_Omega} and use \eqref{redshift:bootstrap_phi}, \eqref{redshift:L1_duphi}, \eqref{redshift:lambda_bdd} and the fact that $|\nu| \sim \Omega^2(u, v) \sim \Omega^2(0, v)$ to obtain
\[
\left | \partial_v \log \frac{\Omega^2(u, v)}{\Omega^2(0, v)} \right | \lesssim e^{-c_p(s)v} \| \partial_u \phi(\cdot, v) \|_{L^1_u([0, u])} + u\Omega^2(0, v)\le  \delta e^{-2c_p(s)v}+ u \Omega^2(0, v).
\]
After integrating in $v$, we exploit \eqref{assumption_nu}, \eqref{assumption_kappa}, the fact that $\kappa(u, v_0) = 1 + o_u(1)$ and the integrability of $u \Omega^2(0, v)$ to get:
\[
\left| \log \frac{\Omega^2(u, v)}{\Omega^2(0, v)} + o_u(1) \right| \lesssim \delta.
\]
Since $\Omega^2=4|\nu|\kappa$ and $\kappa$ is close to 1, as previously seen, the bootstrap inequality for $\nu$ closes for $\delta$ small and $v$ large.\footnote{In the redshift region (and in the no-shift and early-blueshift regions as well), requiring $v$ large is equivalent to requiring $u$ small. See also \eqref{redshift:bound_u}.} In particular, we have $E = \mathcal{P}_{\delta}$.

Furthermore, \eqref{redshift:dvlogOmega} follows after using \eqref{appendix_log_eqn} (where the term containing $\partial_u \varpi$ is integrated by parts) together with \eqref{EH:logOmega}, \eqref{redshift:bootstrap_phi}, \eqref{redshift:L1_duphi}, the bounds for $\varpi$ and for $\kappa$.

Regarding the presence of trapped surfaces, it follows from \eqref{wave_r_alternative} and from the  the facts that $\nu < 0$, $\kappa \sim 1$ and $K \sim K_+$ in this region, that:
\[
\lambda(u_2, v)-\lambda(u_1, v) = \int_{u_2}^{u_1} \left [ \nu \kappa (2K ) \right ] (u', v) du' < 0,
\]
for every $(u_1, v), (u_2, v) \in \mathcal{P}_{\delta}$ such that $u_1 < u_2$. This will be exploited in the next section to show that, similarly to the smooth case, the apparent horizon is contained in the redshift region.

Finally, by reasoning as in the end of the proof of \cite[Proposition 4.7]{Rossetti}, it follows that $\emptyset \ne J^{-}(\Gamma_R) \cap \{v \ge v_1\} \subset \mathcal{P}_{\delta}$.
\end{proof}

\subsection{Apparent horizon} \label{section:AH}
By using the integral formulation of \eqref{wave_r}--\eqref{raych_v}, and the results in \cite[section 4.4]{Rossetti}, we have that the apparent horizon
\[
\mathcal{A} = \{ (u, v) \colon \lambda(u, v) = 0 \} = \{ (u_{\mathcal{A}}(v), v) \colon v \ge v_1 \}
\]
is a non-empty $C^0$ curve contained in $I^{-}(\Gamma_R)$. Moreover, for every $\tilde u$, the hypersurfaces $\{u = \tilde u\}$ can intersect $\mathcal{A}$ either at a single point or at a single outgoing null segment having finite length (in terms of the coordinate $v$). 

We stress that, differently from the smooth case, $\mathcal{A}$ is a $C^0$, rather than $C^1$, curve and the function $v \mapsto u_{\mathcal{A}}(v)$ is, in general, merely continuous.
This follows, using the monotonicity of $\lambda$ (see \eqref{R:monotonicity}), from a low-regularity version of the implicit function theorem.

\subsection{No-shift region}
\begin{proposition}[Propagation of estimates in the no-shift region] \label{prop:noshift}
    Let $\Delta > 0$ be small, compared to the initial data, and let $Y > 0$ be such that
    \[
    0 < Y - r_{-} < \Delta.
    \]
    Then, given $v_1 \ge v_0$ large, we have
    $J^+(\Gamma_R) \cap J^{-}(\Gamma_Y) \cap \{ v \ge v_1 \} \ne \emptyset$ and, for every $(u, v)$ in $J^+(\Gamma_R) \cap J^{-}(\Gamma_Y) \cap \{ v \ge v_1 \}$, the following inequalities hold:
    \begin{align}
        -C_{\mathcal{N}} \le \lambda(u, v) & \le -c_{\mathcal{N}} < 0, \\
        |\nu|(u, v) &\sim |\nu|(u, v_R(u)), \\
        |\varpi(u, v) - M| &\le C_{\mathcal{N}} e^{-2c_p(s)v}, \\
        |K(u, v) - K_+| + |K(u, v) + K_{-}| &\le C_{\mathcal{N}}, \\
        |\partial_v \log \Omega^2(u, v) - 2K(u, v)| &\le C_{\mathcal{N}} e^{-2c_p(s)v}, \\
        \| \partial_u \phi(\cdot, v) \|_{L^1_u([0, u])} &\le C_{\mathcal{N}} e^{-c_p(s)v}, \\
        |\phi|(u, v) + |\partial_v \phi|(u, v) &\le C_{\mathcal{N}} e^{-c_p(s)v}, \\
        u &\sim e^{-2K_+ v}, \label{noshift:u_relation} \\
        0 \le v - v_R(u) &\le C_{\mathcal{N}},   \label{noshift:finite_v}     
    \end{align}
    for $C_{\mathcal{N}}$, $c_{\mathcal{N}} > 0$ depending on the initial data and possibly depending on $\eta, R, Y$. The constant $K_{-} > 0$ is defined in \eqref{def_Kminus}.    
\end{proposition}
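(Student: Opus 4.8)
The plan is to run a continuity/bootstrap argument entirely analogous to the one used for the redshift region (Proposition \ref{prop:redshift}), bootstrapping only the continuous quantities $\varpi$, $\nu$, $\lambda$, $\phi$ and $\partial_v\phi$ and deriving the $L^1_u$/$L^2_u$ control of $\partial_u\phi$ from these. Concretely, I would let $\mathcal{N}\coloneqq J^+(\Gamma_R)\cap J^-(\Gamma_Y)\cap\{v\ge v_1\}$ and define the bootstrap set as the points whose causal past in $\mathcal{N}$ satisfies $|\varpi-M|\le\epsilon$, $|\nu|(u,v)\sim|\nu|(u,v_R(u))$, $-C_\mathcal{N}\le\lambda\le-c_\mathcal{N}<0$ and $|\phi|+|\partial_v\phi|\le Ce^{-c_p(s)v}$, with initial data for the bootstrap supplied, on the inner boundary $\Gamma_R$, by the estimates of Proposition \ref{prop:redshift}. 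The structural feature that makes all bounds propagate while losing only constant (rather than exponential) factors is that $\mathcal{N}$ has \emph{finite $v$-extent}: this is \eqref{noshift:finite_v}, which I would establish as soon as the $\lambda$ lower bound is available, since $R-Y=\int_{v_R(u)}^{v}|\lambda|(u,v')\,dv'\ge c_\mathcal{N}\,(v-v_R(u))$ bounds $v-v_R(u)$ by a constant. The relation $u\sim e^{-2K_+v}$ of \eqref{noshift:u_relation} then follows from its counterpart \eqref{redshift:bound_u} on $\Gamma_R$ together with $v\sim v_R(u)$.

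The heart of the matter is the $\lambda$ bound, which I would obtain not by integrating an evolution equation but from the algebraic identity
\[
\lambda=\kappa\,(1-\mu),\qquad 1-\mu=1-\tfrac{2\varpi}{r}+\tfrac{e^2}{r^2}-\tfrac{\Lambda}{3}r^2,
\]
a direct consequence of \eqref{def_varpi}, \eqref{def_kappa} and \eqref{def_mu}. For $r\in[Y,R]$ compactly contained in $(r_-,r_+)$ and $\varpi$ close to $M$, the right-hand side of the second identity is a small perturbation of the Reissner--Nordstr\"om--de Sitter metric function, which is strictly negative on $[Y,R]$; by compactness $1-\mu\le-c_0<0$ there. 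Combined with $\kappa\le1$ (monotonicity of $\kappa$ in $u$, from the integrated \eqref{raych_u}) and a lower bound $\kappa\ge c_\kappa>0$, this yields $-C_\mathcal{N}\le\lambda\le-c_\mathcal{N}<0$. The lower bound on $\kappa$ is the one genuinely nontrivial input: from $\partial_u\log\kappa=-r(\partial_u\phi)^2/|\nu|$ I would write $\log\kappa(u,v)=-\int_0^u\big(r(\partial_u\phi)^2/|\nu|\big)(u',v)\,du'$ and bound the integral by $\|\partial_u\phi(\cdot,v)\|_{L^2_u}^2$ divided by $\inf|\nu|\sim e^{2K_+v}$, so that the largeness of $|\nu|$ keeps this quantity bounded.

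This reduces everything to $L^2_u$ and $L^1_u$ estimates for $\partial_u\phi$ in $\mathcal{N}$, which I would propagate from $\Gamma_R$ exactly as in the redshift region. Under the temporary smoothness assumption \eqref{redshift:temp_assumptions} (removed at the end by the same density argument), I would use the evolution equation \eqref{preliminary_Duphi} for $f=r\partial_u\phi/\nu$; since $a-2\kappa K$ is now merely bounded rather than signed (as $K$ changes sign across the apparent horizon), but the $v$-interval of integration has bounded length, the exponential weights $e^{av}$ contribute only an $O(1)$ factor and one obtains $|f|(u,v)\lesssim|f|(u,v_R(u))+\sup_{[v_R(u),v]}|\partial_v\phi|$. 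Multiplying by $|\nu|/r$ and using $|\nu|(u,v)\sim|\nu|(u,v_R(u))$ gives a pointwise bound on $\partial_u\phi$ in terms of its value on $\Gamma_R$; integrating in $u$ and invoking \eqref{redshift:L2_duphi} produces the $L^2_u$ bound, while H\"older's inequality with $u^{1-1/p}\sim e^{-2K_+v/p'}$, as in \eqref{first_holder}, produces $\|\partial_u\phi(\cdot,v)\|_{L^1_u}\lesssim e^{-c_p(s)v}$.

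With these in hand the remaining bootstrap inequalities close as in Proposition \ref{prop:redshift}: the $K$ bound is immediate from $\varpi\approx M$ and compactness of $[Y,R]$; $|\varpi-M|$ closes by integrating $\partial_u\varpi$ in $u$ as in \eqref{redshift:close_varpi}, the integrand being controlled by $\|\partial_u\phi\|_{L^2_u}^2/\Omega^2$; the bounds on $\phi$ and $\partial_v\phi$ follow from \eqref{appendix_rdvphi}, the $L^1_u$ estimate and the fundamental theorem of calculus; \eqref{appendix_log_eqn} gives $|\partial_v\log\Omega^2-2K|\lesssim e^{-2c_p(s)v}$; and integrating the latter over the finite $v$-extent, together with the $\kappa$ bounds, yields $|\nu|(u,v)\sim|\nu|(u,v_R(u))$. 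The main obstacle is the interlocking, apparently circular dependence just described: finite $v$-extent is what allows the estimates to propagate without exponential loss, yet finite extent follows from the $\lambda$ lower bound, which rests on the $\kappa$ lower bound, which in turn rests on the $L^2_u$ control of $\partial_u\phi$ whose clean propagation wants finite extent. The bootstrap is the device that breaks this circle, and the one quantitative fact making it work is that $|\nu|\sim e^{2K_+v}$ is large enough to absorb the merely $L^2$ (not $L^\infty$) size of $\partial_u\phi$ and thereby prevent $\kappa$, hence $\lambda$, from degenerating.
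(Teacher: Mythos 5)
Your overall architecture---bootstrap only the continuous quantities, obtain $\lambda$ from the algebraic identity $\lambda=\kappa(1-\mu)$ with $1-\mu$ bounded away from zero on $[Y,R]$, control $\kappa$ through the Raychaudhuri equation and the $L^2_u$ norm of $\partial_u\phi$ weighed against $|\nu|\sim e^{2K_+v}$, and deduce the finite $v$-extent by integrating $|\lambda|\ge c_{\mathcal N}$---reproduces the ingredients of the paper's proof. However, there is a genuine gap at the step where you assert that the bounds on $\phi$ and $\partial_v\phi$ ``close as in Proposition \ref{prop:redshift}.'' In the redshift region that closing worked because the region is \emph{thin in $r$} (width $\delta$), so the error term $\int|\lambda\,\partial_u\phi|\,du'$ in \eqref{appendix_rdvphi} carries a factor $\delta$ in front of the bootstrap constant and can be absorbed. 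The no-shift region has macroscopic $r$-width $R-Y\approx r_+-r_-$. Quantitatively: your pointwise bound $|\partial_u\phi|(u',v)\lesssim |\nu|(u',v)\bigl[\,|f|(u',v_R(u'))+C_{\mathrm{boot}}\,e^{-c_p(s)v}\bigr]$, where $C_{\mathrm{boot}}$ is the bootstrap constant for $\partial_v\phi$, gives upon integration
\[
\int_{u_R(v)}^{u}|\lambda\,\partial_u\phi|\,du'\;\lesssim\;(\text{data terms})+C_{\mathrm{boot}}\,e^{-c_p(s)v}\int_{u_R(v)}^{u}|\nu|\,du'\;=\;(\text{data terms})+C_{\mathrm{boot}}\,\bigl(R-r(u,v)\bigr)\,e^{-c_p(s)v},
\]
so the ``improved'' coefficient for $|\partial_v\phi|$ is of the form $C_{\mathcal H}\,r_+/Y+\tilde C\,C_{\mathrm{boot}}\,(R-Y)$, and self-improvement requires $\tilde C\,(R-Y)<1$. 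Here $\tilde C$ is a fixed constant (built from $\sup_{[Y,R]}|1-\mu|$, $1/Y$, and the $|\nu|\sim\Omega^2$ equivalence constants) and $R-Y$ is of order $r_+-r_-$; there is no smallness available, and no choice of $C_{\mathrm{boot}}$, large or small, makes the inequality close. Your single global bootstrap over all of $\mathcal N$ therefore does not terminate, and this is precisely the difficulty that distinguishes the no-shift region from the redshift region.

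The paper resolves exactly this issue by partitioning the no-shift region into $N$ slabs $\mathcal{N}_l$ of $r$-width $\epsilon=(R-Y)/N$ and performing an induction on $l$, running the bootstrap only inside one slab at a time: there the error term above is $O(\epsilon)\cdot C_l$, so the bootstrap closes at the cost of letting the constant grow by a fixed factor per slab ($C_l=4C_{l-1}$, see \eqref{def_Cl} and the closing estimate around \eqref{noshift_close_dvphi}), and since $N$ is finite the final constant $4^N C_{\mathcal H}$ is finite. This is a discrete Gr\"onwall argument. Your proposal could be repaired either by adopting this slab decomposition, or by replacing the fixed-constant bootstrap with an honest Gr\"onwall inequality in $u$ for the quantity $u\mapsto\sup_{v}e^{c_p(s)v}|r\partial_v\phi|(u,v)$, which yields a factor $e^{C(R-Y)}$ instead of requiring $C(R-Y)<1$. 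As written, however, the essential mechanism that breaks the circularity you yourself identify is missing, and the argument does not go through.
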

\begin{proof}
For every $l \in \mathbb{N}$, we define
\begin{equation} \label{def_Cl}
\begin{cases}
    C_l \coloneqq 4 C_{l-1}, \\
    C_0 \coloneqq 4 C_{\mathcal{H}}.
\end{cases}
\end{equation}
Let $N = N(N_{\text{i.d.}}, R, Y) \in \mathbb{N}$ large, and 
\[
    \epsilon = \epsilon(N) \coloneqq \frac{R -Y}{N} > 0.
\]
For $l=0, \ldots, N$, define
\[
\mathcal{N}_l \coloneqq J^+(\Gamma_R) \cap \{ v \ge v_1\} \cap \left \{ (u, v) \in \mathcal{P} \colon \, R-l\epsilon \le r(u, v) \le R-(l-1)\epsilon \right \}
\]
and
\[
    \mathcal{N} \coloneqq \bigcup_{l=0}^N \mathcal{N}_l.
\]
Notice that $\mathcal{N}_0 = \Gamma_R$. Moreover, we parametrize the constant-$r$ curves that constitute the boundary of each $\mathcal{N}_l$ as in the following example:
\[
\Gamma_{R-(l-1)\epsilon} = \{ (u_{l-1}(v), v)\colon v \ge v_1 \},
\]
for a suitable $C^1$ function $u_{l-1}$.
See also \cite{Rossetti} for more details and \cite{VdM1, Dafermos_2005_uniqueness}, where this technique was previously used.

We now prove by induction, for every $l \in \{0, \ldots, N\}$, that the following hold:
\begin{align}
|\phi|(u, v) + |\partial_v \phi|(u, v) & \le C_l e^{-c_p(s) v}, \label{induction1} \\
\kappa(u, v) &\ge 1-\Delta,  \label{induction2} \\
|\partial_v \log \Omega^2(u, v) - 2K(u, v)| &= o_v(1), \quad \text{ as } v \to +\infty, \label{induction3}
\end{align}
for every $(u, v) \in \mathcal{N}_l$.

The first step of the induction is proved due to the results of proposition \ref{prop:redshift}. 
At the $l$-th step of the induction, where $l \in \{1, \ldots, N\}$ fixed, we assume that inequalities \eqref{induction1}, \eqref{induction2}, \eqref{induction3} are satisfied in $\mathcal{N}_i$ for every $i = 0, \ldots, l-1$.

To prove the inductive step, we define $E_l$ as the set of points $q$ in $\mathcal{N}_l$ such that the following inequalities hold for every $(u, v) \in J^{-}(q) \cap \mathcal{N}_l$:
\begin{align}
|\phi|(u, v) + |\partial_v \phi|(u, v) & \le C_l e^{-c_p(s) v}, \label{bootstrap1} \\
\kappa(u, v) &\ge 1- \Delta \label{bootstrap2}, \\
|\varpi(u, v) - M| &\le \Delta. \label{bootstrap3}
\end{align}
From now on, let $(u, v) \in E_l$.
Notice that, by construction, we have
\[
\begin{cases}
    r_{-} < Y \le r \le R < r_+, \quad \text{ in } \mathcal{N}, \\
    0 \le R - r(u, v) \le R- Y, \\
    0 \le R- (l-1)\epsilon - r(u, v) \le \epsilon, \\
    0 \le r(u, v) - (R - l \epsilon) \le \epsilon.
\end{cases}
\]

Now, take $\Delta$ sufficiently small compared to the initial data, so that we can use \cite[lemma 4.5]{Rossetti} and obtain
\begin{equation} \label{bound_lambda_noshift}
-C(N_{\text{i.d.}}) \le \lambda(u, v) \le -C(N_{\text{i.d.}}, R, Y) < 0
\end{equation}
and similarly for $1 - \mu = \frac{\lambda}{\kappa}$, due to \eqref{bootstrap2} and to the fact that $\kappa \le 1$ (as already noticed in the case of the redshift region).
By integrating \eqref{bound_lambda_noshift}, we obtain
\[
0 \le v - v_R(u) \le \frac{R - r(u, v)}{C(N_{\text{i.d.}}, R, Y)},
\]
which, together with the bounds \eqref{redshift:bound_u} obtained in the redshift region, implies $u \sim e^{-2K_+v}$.

To get an estimate on $\nu$, we notice that \eqref{appendix_lognumu_eqn} gives
\[
|\nu|(u, v) = |\nu|(u, v_R(u)) \frac{|1-\mu|(u, v)}{|1-\mu|(u, v_R(u))} e^{\int_{v_R(u)}^v \frac{r}{\lambda}(\partial_v \phi)^2(u, v') dv' }. 
\]
Since we saw that $|1-\mu|$ and $|\lambda|$ are bounded and bounded away from zero in this region, and due to \eqref{bootstrap1}, we have
\begin{equation} \label{nu_bound_noshift}
|\nu|(u, v) \sim |\nu|(u, v_R(u)).
\end{equation}

Now, we assume, for the moment, that $\partial_u \phi$ is smooth. The pointwise bound  \eqref{redshift:pointwise_duphi1} for the redshift region gives
\begin{equation} \label{noshift:pointwise_duphi_old}
    |\partial_u \phi|(u, v_R(u)) \lesssim e^{-a v_R(u)} \Omega^2(0, v_R(u)) |\partial_u \phi|(u, v_0) + e^{-c_p(s)v_R(u)} \Omega^2(0, v_R(u)),
\end{equation}
where we recall that $0 < a < 2K_+$ can be chosen close to $2K_+$ (the error term depending on the small parameter $\eta$ of proposition \ref{prop:redshift}). Moreover, in the smooth case, the wave equation for $\phi$ can be cast as
\[
    \partial_v\br{r \partial_u \phi} = -\nu \partial_v\phi.
\]
Using \eqref{bootstrap1}, \eqref{nu_bound_noshift}, $v_R(u) \sim v$ and that $|\nu|(u, v) \sim$ $\Omega^2(u, v) \sim \Omega^2(0, v) $ $\sim e^{2K_+ v}$, the two relations above give
\begin{align} 
    |r \partial_u \phi|(u, v) &\le |r \partial_u \phi|(u, v_R(u))  + \int_{v_R(u)}^v |\nu \partial_v \phi|(u, v')dv' \nonumber \\
    &\le C e^{-av} \Omega^2(0, v) |\partial_u \phi|(u, v_0) + C C_l e^{-c_p(s)v} \Omega^2(u, v). \label{noshift:pointwise_duphi}
\end{align}
If we integrate in $[u_{l-1}(v), u]$ and use \eqref{bootstrap2} and the bounds on $r$:
\begin{align*}
    \int_{u_{l-1}(v)}^u |\partial_u \phi|(u', v) du' &\lesssim e^{-av} \Omega^2(0, v) \int_0^u |\partial_u \phi|(u', v_0) du' + e^{-c_p(s)v}\int_{u_{l-1}(v)}^u |\nu|(u', v) du' \\
    &\le  o_u(1) u^{1 - \frac{1}{p}} e^{-(a-2K_+)v} + \epsilon e^{-c_p(s)v}, 
\end{align*}
where we used H{\"o}lder's inequality, similarly to \eqref{first_holder}. The computations immediately after \eqref{first_holder}, together with the bounds on $u$ obtained for the current region, also allow to bound the first term by $\epsilon e^{-(c_p(s))v}$. Thus, if $v_1$ is sufficiently large:
\begin{equation} \label{noshift:duphi_L1small}
\|\partial_u \phi(\cdot, v)\|_{L^1_u([u_{l-1}(v), u])} \lesssim \epsilon e^{-c_p(s)v}.
\end{equation}
A similar computation shows that, whenever we integrate in the entire interval $[0, u]$:
\[
\| \partial_u \phi \|_{L^1_u([0, u])} \lesssim (\delta + N\epsilon) e^{-c_p(s)v} \lesssim e^{-c_p(s)v}.
\]
Moreover, analogous computations show that
\begin{equation} \label{noshift:duphi_L2small}
\|\partial_u \phi(\cdot, v) \|^2_{L^2_u([u_{l-1}(v), u])} \lesssim e^{-2av}\Omega^4(0, v)\|\partial_u \phi(\cdot, v_0)\|^2_{L^2_u([0, u])} + \epsilon \Omega^4(0, v)e^{-2c_p(s)v}
\end{equation}
and
\[
\|\partial_u \phi(\cdot, v) \|^2_{L^2_u([0, u])} \lesssim e^{-2av}\Omega^4(0, v)\|\partial_u \phi(\cdot, v_0)\|^2_{L^2_u([0, u])} + u \Omega^4(0, v)e^{-2c_p(s)v}.
\]
 We now drop the smoothness assumption on $\partial_u \phi$. Similarly to the case of the redshift region, the above $L^p$ bounds hold in our low-regularity setting as well, by means of an approximation argument.

Moreover, equation \eqref{appendix_u_varpi_eqn} and the above estimates allow to close the bootstrap inequality for $\varpi$ (see the analogous computations in \eqref{redshift:close_varpi}). In particular:
\[
|\varpi(u, v) - M| \lesssim e^{-2c_p(s)v}.
\]
The bootstrap on $\kappa$ is also closed in the same way seen for the redshift region, now using estimate
 \eqref{redshift:close_kappa} as well. 

We now complete the bootstrap procedure by analysing $\partial_v \phi$ and $\phi$. Regarding the former, \eqref{appendix_rdvphi}, \eqref{induction1}, \eqref{def_Cl} and the bounds on $r$ and $\lambda$ yield:
\begin{align}
    |r \partial_v \phi|(u, v) &= \left | \br{r \partial_v \phi}(u_{l-1}(v), v) - \int_{u_{l-1}(v)}^u \br{\lambda \partial_u \phi}(u', v) du' \right |  \label{noshift_close_dvphi} \\
    & \le \frac{C_l}{4}\br{R - (l-1)\epsilon} e^{-c_p(s)v} + C \|\partial_u \phi (\cdot, v)\|_{L^1_u([u_{l-1}(v), u])}. \nonumber
\end{align}
If we divide both sides by $r(u, v)$ and use \eqref{noshift:duphi_L1small} and
\[
\frac{R-(l-1)\epsilon}{R - l \epsilon} < 1 + \frac{\epsilon}{Y} < 2,
\]
we get
\[
|\partial_v \phi|(u, v) < \frac58 C_l e^{-c_p(s)v},
\]
provided that $\epsilon$ is sufficiently small.
We proceed similarly for $\phi$, using the fundamental theorem of calculus:
\begin{align*}
    |\phi|(u, v) &\le |\phi|(u_{l-1}(v), v) + \| \partial_u \phi(\cdot, v) \|_{L^1_u([u_{l-1}(v), u]))} \\
    &\le \frac{C_l}{4} e^{-c_p(s)v} + C \epsilon e^{-c_p(s)v} \\
    &< \frac38 C_l e^{-c_p(s)v},
\end{align*}
provided that $\epsilon$ is sufficiently small. This closes the bootstrap for \eqref{bootstrap1}.

All bootstrap inequalities were closed, so we have $E_l = \mathcal{N}_l$. To complete the inductive step, we notice that \eqref{induction3} is proved by following the same procedure seen at the end of proposition \ref{prop:redshift} (see also the end of \cite[Proposition 4.12]{Rossetti}).
\end{proof}

\subsection{Early-blueshift region} \label{prop:early_blueshift}
\begin{proposition}[Propagation of estimates in the early-blueshift region] \label{proposition:early_blueshift}
Let $\beta > 0$ be small compared to the initial data. Define the curve
\begin{equation} \label{def_gamma}
\gamma \coloneqq \left \{  (u, v_{\gamma}(u))\colon\, u \in [0, U] \right \},
\end{equation}
where $v_{\gamma}(u) = (1+\beta)v_Y(u)$.
Then, for every $\varepsilon = \varepsilon(N_{\text{i.d.}}) > 0$, there exist $\Delta \in (0, \varepsilon)$ and $Y > 0$ such that 
\[
0 < Y - r_{-} < \Delta
\]
and such that the following inequalities hold for every $(u, v) \in $ $J^{+}(\Gamma_Y) \cap J^{-}(\gamma)$ $\cap \{v \ge v_1\}$ $\ne \emptyset$:
\begin{align}
    r(u, v) &\ge r_{-} - 2 \varepsilon, \\
    |\varpi(u, v) - M| &\le C e^{-2(c_p(s) - \tau)v}, \label{earlyblueshift:varpi} \\
    |K(u, v) + K_{-}| &\le C \varepsilon, \label{earlyblueshift:K}\\
    |\partial_v \log \Omega^2(u ,v) - 2K(u, v)| & \le C e^{-(c_p(s) - \tau)v}, \label{earlyblueshift:dvlog} \\
    \|\partial_u \phi(\cdot, v)\|_{L^1_u([0, u])} & \le C  e^{-(c_p(s) - \tau)v}, \\
    |\phi|(u, v) + |\partial_v \phi|(u, v) &\le C  e^{-(c_p(s) - \tau)v}, \label{earlyblueshift:phi}
\end{align}
for some $v_1 \ge v_0$ large, where $\tau > 0$, $\tau = O(\beta)$ and where $C > 0$ depends on the initial data and possibly on the constants $\eta, R, Y$ defined in the previous regions. 
\end{proposition}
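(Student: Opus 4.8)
The plan is to run a bootstrap on the region $J^{+}(\Gamma_Y) \cap J^{-}(\gamma) \cap \{v \ge v_1\}$, with bootstrap assumptions given by relaxed versions of \eqref{earlyblueshift:varpi}--\eqref{earlyblueshift:phi} together with $r \ge r_{-} - 2\varepsilon$ and $\kappa$ bounded above and below. The data on the past boundary $\Gamma_Y$ are inherited from proposition \ref{prop:noshift}, and since $\gamma$ (defined in \eqref{def_gamma}) is a dilation of $\Gamma_Y$ lying to its future, the region is non-empty and the assumptions hold on a neighbourhood of $\Gamma_Y$. The decisive structural difference with respect to propositions \ref{prop:redshift} and \ref{prop:noshift} is twofold: first, here $K$ has crossed into negative values close to $-K_{-}$, so the blueshift now amplifies rather than damps; second, the future boundary $\gamma$ is \emph{not} a level set of the area-radius, so the $\epsilon$-strip induction of proposition \ref{prop:noshift} is unavailable. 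To handle both at once I would use the monotone quantity $\mathcal{C}_s$ of \cite{Rossetti}, engineered so that its boundedness encodes the weighted bound $e^{-(c_p(s)-\tau)v}$ for $\partial_v\phi$ and, simultaneously, for $|\lambda|$, while absorbing the blueshift factor $\exp\!\br{\int \kappa\,(2K)\,dv'}$.

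The geometry of the region is precisely what controls the loss $\tau$. Since $v$ ranges over $[v_Y(u),(1+\beta)v_Y(u)]$ and, by \eqref{noshift:u_relation} and \eqref{noshift:finite_v}, $v_Y(u) \sim v$, the $v$-extent at fixed $u$ is $O(\beta v)$; over such a range the blueshift, whose rate is $\approx 2K_{-}\kappa \approx 2K_{-}$, can amplify by at most $e^{O(\beta)v}$. Hence transporting the no-shift bound $e^{-c_p(s)v}$ from $\Gamma_Y$ up to $\gamma$ costs only $e^{\tau v}$ with $\tau = O(\beta)$, which is exactly the claimed loss. Making this quantitative is the role of $\mathcal{C}_s$: one shows it is monotone up to the small, integrable error produced by the scalar-field energy, bounds it on $\Gamma_Y$ by the no-shift constant $C_{\mathcal{N}}$, and reads off \eqref{earlyblueshift:phi} together with the decay of $|\lambda|$ from the resulting inequality, with $C(\eta)$ in \eqref{def_cps} adjusted accordingly.

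With the weighted decay of $\partial_v\phi$ available, I would close the remaining estimates as in the earlier regions, carrying the $\tau$-shift throughout. First, $\partial_v\phi$ itself is controlled by $u$-integrating \eqref{appendix_rdvphi} from the point of $\Gamma_Y$ at the same value of $v$, where the no-shift bound applies, the integral contributing $\lesssim \sup|\lambda|\cdot\|\partial_u\phi(\cdot,v)\|_{L^1_u}$. The $L^1$ and $L^2$ bounds on $\partial_u\phi$ are propagated from $\Gamma_Y$ by integrating the wave equation in $v$, using that $|\nu|=\Omega^2/(4\kappa)$ now \emph{decays} because $\partial_v\log\Omega^2 \approx 2K < 0$; as in propositions \ref{prop:redshift} and \ref{prop:noshift}, one first assumes $\partial_u\phi$ smooth and removes this by the same density argument. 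The bound on $\varpi$ then follows from \eqref{appendix_u_varpi_eqn}, the $L^2$ control of $\partial_u\phi$, and the smallness of $|\lambda|$, yielding \eqref{earlyblueshift:varpi}; the estimate on $\kappa$ closes via the integrated Raychaudhuri identity \eqref{appendix_kappa_eqn}. The bound $r \ge r_{-}-2\varepsilon$ is part of the bootstrap: closeness of $r$ to $r_{-}$ and of $\varpi$ to $M$ forces $|1-\mu| = |\lambda|/\kappa$ to be $O(\varepsilon)$, and the evolution of $\nu(1-\mu)$ through \eqref{appendix_lognumu_eqn}, driven by the blueshift, keeps $\int_{v_Y(u)}^{v}|\lambda|\,dv'$ of order $\varepsilon$ over the whole region, so $r$ cannot drop below $r_{-}-2\varepsilon$. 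Finally, \eqref{earlyblueshift:K} follows from the definition \eqref{def_K} and the bounds on $r$, $\varpi$ together with \eqref{def_Kminus}, and \eqref{earlyblueshift:dvlog} is obtained exactly as \eqref{redshift:dvlogOmega} from \eqref{appendix_log_eqn}.

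The main obstacle, and the only genuinely new point compared with the no-shift region, is the \emph{simultaneous} control of $|\lambda|$, $|\partial_v\phi|$ and $\|\partial_u\phi\|_{L^1_u}$ against the blueshift over a region that is large in $v$ and not bounded by a level set of $r$. One must verify that the monotonicity of $\mathcal{C}_s$ survives in the present low-regularity setting, where only the integrated form of \eqref{wave_r}--\eqref{raych_v} and $L^p$ (rather than pointwise) control of $\partial_u\phi$ are available, and that the $\partial_u\phi$-dependent source terms fed into $\mathcal{C}_s$ remain integrable after the $\tau$-degraded bounds are inserted. Keeping $\beta$, hence $\tau$, small is what guarantees that this degradation does not accumulate, so that the exponential bounds reach $\gamma$ with only an $O(\beta)$ loss of decay.
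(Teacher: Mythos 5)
Your overall architecture matches the paper's: a bootstrap on $J^{+}(\Gamma_Y)\cap J^{-}(\gamma)\cap\{v\ge v_1\}$ with data inherited from proposition \ref{prop:noshift}, bounds weighted by $e^{-\mathcal{C}_s}$, pointwise-then-$L^1$ propagation of $\partial_u\phi$ under a temporary smoothness assumption removed by density, closure of $\varpi$ via \eqref{appendix_u_varpi_eqn}, of $\kappa$ via the integrated Raychaudhuri identity, of $\phi,\partial_v\phi$ via \eqref{appendix_rdvphi}, and the final conversion $\mathcal{C}_s(u,v)\ge (c_p(s)-\tau)v$ with $\tau=O(\beta)$ coming from $v_Y(u)\le v\le(1+\beta)v_Y(u)$. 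One clarification: in the paper $\mathcal{C}_s(u,v)=c_p(s)v_Y(u)-2(2K_{-}+\varepsilon)(v-v_Y(u))$ is an \emph{explicit weight function}, whose monotonicity $\partial_u\mathcal{C}_s<0$ is purely geometric (it follows from $v_Y'(u)<0$); there is no ``monotonicity up to integrable scalar-field error'' to establish, and $\mathcal{C}_s$ does not encode any decay for $|\lambda|$ in this proposition (the paper only proves $|\lambda|\le C$ here; decay of $\lambda$ is deferred to proposition \ref{prop:along_gamma}).

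The genuine gap is in your closure of the bound $r\ge r_{-}-2\varepsilon$. You argue that $|1-\mu|=O(\varepsilon)$ pointwise and that \eqref{appendix_lognumu_eqn}, ``driven by the blueshift, keeps $\int_{v_Y(u)}^{v}|\lambda|\,dv'$ of order $\varepsilon$.'' The pointwise bound $|1-\mu|=O(\varepsilon)$ is useless by itself, because the $v$-extent of the region at fixed $u$ is $\sim\beta v_Y(u)$, which is unbounded as $u\to 0$, so $\int|\lambda|\,dv'$ could be as large as $\varepsilon\beta v_Y(u)$. To get an integrable (exponentially decaying) bound on $|\lambda|$ from \eqref{appendix_lognumu_eqn} you must control $\exp\br{\int_{v_Y(u)}^{v}\frac{r(\partial_v\phi)^2}{\lambda}\,dv'}$, and bounding that exponent requires a pointwise \emph{lower} bound on $|\lambda|$ over the same unbounded-in-$v$ region --- which is exactly what you have not yet established; note also that \eqref{appendix_lognumu_eqn} is only valid where $\lambda\ne 0$. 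This circularity can be resolved (a lower bound on $|\lambda|$ follows from the Raychaudhuri equation in $v$ and the decay of $\Omega^2$, and the resulting competition of exponents closes only if $\beta$ is small compared with $c_p(s)/K_{-}$; this is essentially the content of proposition \ref{prop:along_gamma} and of the proof of theorem \ref{thm:no_mass_inflation}), but as written your step is an assertion, not a proof, and it is where the argument would fail. The paper avoids this machinery entirely: since the region is trapped, $(1-\mu)(r(u,v),\varpi(u,v))<0$, hence
\[
(1-\mu)(r(u,v),M)<\frac{2}{r}\br{\varpi(u,v)-M}\lesssim e^{-2\mathcal{C}_s(u,v)},
\]
and because $r\mapsto(1-\mu)(r,M)$ is positive and bounded away from zero for $r$ appreciably below $r_{-}$, this single pointwise algebraic inequality pins $r(u,v)$ above $r_{-}-2\varepsilon$ with no integration of $\lambda$ at all. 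You should either adopt this argument or supply the full two-sided $\lambda$ estimates your route requires.
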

\begin{proof}
For every $(u, v) \in J^+(\Gamma_Y) \cap J^{-}(\gamma)$, define:
\begin{equation} \label{def_Cs}
\mathcal{C}_s(u, v) \coloneqq c_p(s)v_Y(u)-2(2K_{-}+\varepsilon)(v - v_Y(u)).
\end{equation}
Notice that 
\[
\mathcal{C}_s > 0, \quad \text{ in }  J^+(\Gamma_Y) \cap J^{-}(\gamma).
\]
Indeed, we have\footnote{The curve $\gamma$ is  spacelike even though it is not a curve of constant area-radius. See, for instance, \cite[Section 4.6]{Rossetti}.} 
\begin{equation} \label{EB:v_relation}
v_Y(u) \le v \le (1+\beta)v_Y(u)
\end{equation}
in this region, whence
\[
\mathcal{C}_s(u, v)  \ge \br{c_p(s)-2(2K_{-}  + \varepsilon)\beta}v_Y(u) > 0,
\]
provided that $\beta$ is small. We stress that $\mathcal{C}_s$ is not necessarily positive in other regions.

Moreover, the fact that $v_Y'(u) < 0$ implies:\footnote{In fact, this follows after differentiating the relation $r(u, v_Y(u)) = Y$. See also \cite[Remark 4.8]{Rossetti} and recall that $r \in C^1$ in our case as well.}
\begin{equation} \label{earlyblueshift:Cs_monotonic}
\partial_u \mathcal{C}_s < 0
\end{equation}
in the early-blueshift region.

Now, let $E$ be the set of points $q$ in $\mathcal{D} \coloneqq J^+(\Gamma_Y) \cap J^{-}(\gamma) \cap \{ v \ge v_1\}$ such that the following inequalities hold for every $(u, v) \in J^{-}(q) \cap \mathcal{D}$:
\begin{align*}
r(u, v) &\ge r_{-}-2\varepsilon, \\
|\phi|(u, v) + |\partial_v \phi|(u, v) & \le 4 C_{\mathcal{N}} e^{-\mathcal{C}_s(u, v)}, \\
\kappa(u, v) &\ge 1-\varepsilon \\
|\varpi(u, v) - M| &\le \varepsilon.
\end{align*}
The above inequalities hold along $\Gamma_Y$, due to the results of proposition \ref{prop:noshift} and since
\[
\mathcal{C}_s(u, v_Y(u)) = c_p(s) v_Y(u), \quad \forall\, u \in [0, u_Y(v_1)].
\]
Now, let $(u, v) \in E$. By construction:
\[
0 < \frac{r_{-}}{2} < r_{-} - 2 \varepsilon \le r(u, v) \le Y < r_+,
\]
assuming that $\varepsilon$ is sufficiently small. Moreover, equation \eqref{appendix_rlambda_eqn}, the fact that $\Omega^2=4|\nu| \kappa$, the bootstrap assumption on $\kappa$ and the results obtained for the previous region imply that
\[
|\lambda|(u, v) \le C(N_{\text{i.d.}}).
\]
Similarly to the procedure followed in the previous regions, the bootstrap on $\varpi$ and the assumptions on $r$ imply that $K$ is $\varepsilon$-close to $-K_{-}$.

Before continuing, we derive some preliminary bounds that will be exploited during the rest of the proof. Let us momentarily assume that (at least) $\nu$ and $\partial_u \phi$ are smooth, analogously to the previous regions. Then, \cite[(4.126)]{Rossetti} and following imply that
\begin{align}
|\nu|(u, v) &= |\nu|(u, v_Y(u)) \exp \br{\int_{v_Y(u)}^v \left [ \kappa (2K) \right ] (u, v') dv'} \nonumber \\
&= |\nu|(u, v_Y(u)) e^{-2K_{-}\br{1 + o(1)}(v - v_Y(u))}, \label{earlyblueshift:prelbound2}
\end{align}
as $v_1 \to +\infty$, 
and
\begin{equation} \label{nu_smoothness_monot}
    \partial_v\br{|\nu|(u, v) e^{-\mathcal{C}_s(u, v)}} \ge (2K_{-} + C\varepsilon)|\nu|(u, v) e^{-\mathcal{C}_s(u, v)} > 0,
\end{equation}
hence implying
\begin{equation} \label{earlyblueshift:prelbound1}
    \int_{v_Y(u)}^v |\nu|(u, v') e^{-\mathcal{C}_s(u, v')} dv' \le \frac{1}{C} |\nu|(u, v) e^{-\mathcal{C}_s(u, v)},
\end{equation}
for some $C=C(N_{\text{i.d.}}) > 0$.

Now, the pointwise estimate in \eqref{noshift:pointwise_duphi} (obtained under smoothness assumptions), \eqref{def_Cs},  \eqref{nu_smoothness_monot}, the facts that $\kappa \sim 1$ in the no-shift region and that
\[
e^{-c_p(s) v_Y(u)} \Omega^2(u, v_Y(u)) \lesssim \br{|\nu|(u, v) e^{-\mathcal{C}_s(u, v)}}_{|v = v_Y(u)}
\]
entail:
\begin{align*}
|r \partial_u \phi|(u, v_Y(u)) &\le C e^{-(a- 2K_+) v_Y(u)} |\partial_u \phi|(u, v_0) + C e^{-c_p(s) v_Y(u)} \Omega^2(u, v_Y(u)) \\
&\lesssim e^{-(a- 2K_+) v_Y(u)} |\partial_u \phi|(u, v_0) + |\nu|(u, v) e^{-\mathcal{C}_s(u, v)}.
\end{align*}
As a result of the above, we have that \eqref{appendix_rduphi}, the bootstrap inequality for $\partial_v \phi$ and the relation $\Omega^2(u, v_Y(u)) \sim \Omega^2(0, v_Y(u)) \sim  e^{2K_+ v_Y(u)}$ give:
\begin{align}
|r \partial_u \phi|(u, v) &\lesssim |r \partial_u \phi|(u, v_Y(u)) + \int_{v_Y(u)}^v | \nu \partial_v \phi |(u, v')dv' \nonumber \\
& \lesssim  e^{-a v_Y(u)} \Omega^2(0, v_Y(u)) |\partial_u \phi|(u, v_0) + |\nu|(u, v) e^{-\mathcal{C}_s(u, v)} + \int_{v_Y(u)}^v |\nu|(u, v') e^{-\mathcal{C}_s(u, v')}dv' \nonumber \\ 
& \lesssim e^{-a v_Y(u)} \Omega^2(u, v_Y(u)) |\partial_u \phi|(u, v_0) + |\nu|(u, v) e^{-\mathcal{C}_s(u, v)}, \label{pointwise_duphi_last}
\end{align}
where in the last step we used \eqref{earlyblueshift:prelbound1}.

After integrating \eqref{pointwise_duphi_last} and using \eqref{earlyblueshift:Cs_monotonic} and the bounds on $r$ (notice that the first term at the right hand side can be bounded as seen, e.g.\ in \eqref{first_holder}, since $\Omega^2(u, v_Y(u)) \sim \Omega^2(0, v_Y(u)) \sim  e^{2K_+ v_Y(u)}$):
\begin{align} \label{early_blueshift:L1_bound_proof}
\| \partial_u \phi(\cdot, v) \|_{L^1_u([u_Y(v), u])} &\le e^{(2K_+ - a)v} \int_{u_Y(v)}^u  |\partial_u \phi|(u', v_0) du' + e^{- \mathcal{C}_s(u, v)} \int_{u_Y(v)}^u |\nu|(u', v)du' \nonumber \\
&\lesssim o_u(1) e^{(2K_+ - a)v} u^{1 - \frac{1}{p}} + \varepsilon e^{-\mathcal{C}_s(u, v)}   \nonumber \\
&\lesssim \varepsilon e^{-\mathcal{C}_s(u, v)},
\end{align}
where in the last step we used that $u \lesssim e^{-2K_+ v_Y(u)} \le e^{-2K_+\frac{v}{1+\beta}}$ (see \eqref{noshift:u_relation} and \eqref{EB:v_relation}) and required $\beta$ to be small.

The  pointwise estimate in \eqref{pointwise_duphi_last} can be cast in a more convenient form: since $\Omega^2(u, v_Y(u)) \sim |\nu|(u, v_Y(u))$ and due to \eqref{earlyblueshift:prelbound2} and \eqref{def_Cs} we have
\begin{align*}
e^{-av_Y(u)} \Omega^2(u, v_Y(u)) &\lesssim e^{-a v_Y(u)} |\nu|(u, v) e^{2K_{-}(1 + o(1))(v - v_Y(u))} e^{-c_p(s) v_Y(u)} e^{c_p(s) v_Y(u)} \\
&\lesssim |\nu|(u, v) e^{-\mathcal{C}_s(u, v)} e^{-(a - c_p(s))v_Y(u)}.
\end{align*}
The above can be used to express \eqref{pointwise_duphi_last} as
\begin{equation}
    |r \partial_u \phi|(u, v) \lesssim  |\nu|(u, v) e^{-\mathcal{C}_s(u, v)} \br{1 + |\partial_u \phi|(u, v_0) e^{-(a - c_p(s))v_Y(u)} }.
\end{equation}
The latter, \eqref{earlyblueshift:Cs_monotonic}, \eqref{earlyblueshift:prelbound2} and the bootstrap inequality for $\kappa$ give:
\begin{align*}
    & \left \| \frac{(\partial_u \phi)^2}{\Omega^2}(\cdot, v)  \right \|_{L^1_u([u_Y(v), u])} \lesssim \int_{u_Y(v)}^u |\nu|(u', v) e^{-2\mathcal{C}_s(u', v)}\br{1 + |\partial_u \phi|^2(u', v_0) e^{-2(a - c_p(s))v_Y(u')}} du' \\
    &\lesssim  e^{-2\mathcal{C}_s(u, v)} \int_{u_Y(v)}^u |\nu|(u', v) du' + \\
    &+\int_{u_Y(v)}^u |\partial_u \phi|^2(u', v_0) e^{2K_+ v_Y(u')} e^{-2K_{-}(1 + o(1))(v - v_Y(u'))} e^{-2\mathcal{C}_s(u', v)-2(a-c_p(s)) v_Y(u')} du'.
\end{align*}
The first term is smaller than $\varepsilon e^{-2\mathcal{C}_s(u, v)}$ due to the bounds on $r$. Due to \eqref{def_Cs}, the fact that $a < 2K_+$ can be chosen $\eta$-close to $2K_+$ and by H{\"o}lder's inequality, the second term is smaller than
\begin{align}
    &\int_{u_Y(v)}^u |\partial_u \phi|^2(u', v_0) e^{-(2a - 2K_+)v_Y(u')}e^{3(2K_{-}+o_{v_1, \varepsilon}(1))(v - v_Y(u'))} du' \lesssim \nonumber \\
    &\lesssim e^{3(2K_{-} + o_{v_1, \varepsilon}(1))(v - v_Y(u))}e^{-(2K_+ +o_{\eta}(1))v_Y(u)} \int_{u_Y(v)}^u |\partial_u \phi |^2(u', v_0)du' \nonumber \\
   &  \lesssim e^{3(2K_{-} + o(1))(v - v_Y(u))}e^{-(2K_+ +o(1))v_Y(u)} u^{1-\frac{2}{p}} \|\partial_u \phi(\cdot, v_0)\|_{L^p([0, u])}^2. \label{eb_long_bound}
\end{align}
We recall that $u \lesssim e^{-2K_+ v_Y(u)}$, so by using the above, \eqref{def_cps} and \eqref{def_Cs} again, we have that \eqref{eb_long_bound} is smaller than
\begin{align*}
   &e^{-2 \mathcal{C}_s(u, v)+2\mathcal{C}_s(u, v)} e^{3(2K_{-} + o(1))(v - v_Y(u))}e^{-(2K_+ +o(1))v_Y(u)} u^{1-\frac{2}{p}} \|\partial_u \phi(\cdot, v_0)\|_{L^p([0, u])}^2 \\
   \lesssim & \, e^{-2 \mathcal{C}_s(u, v)} e^{-(2K_{-} + o_{v_1, \varepsilon}(1))(v - v_Y(u))} o_u(1),
\end{align*}
where we exploited the assumptions on the initial data.
So, we conclude that
\begin{equation} \label{early_blueshift:additional_L1_bound}
    \left \| \frac{(\partial_u \phi)^2}{\Omega^2}(\cdot, v)  \right \|_{L^1_u([u_Y(v), u])} \lesssim \varepsilon e^{-2 \mathcal{C}_s(u, v)}.
\end{equation}
From now on, we drop the smoothness assumption and notice that  estimates \eqref{early_blueshift:L1_bound_proof} and \eqref{early_blueshift:additional_L1_bound} still hold in the rough setting, by density.

To close the bootstrap inequality for the Hawking mass we exploit  \eqref{appendix_u_varpi_eqn},  \eqref{early_blueshift:additional_L1_bound} and the previous bounds on $\lambda$ and $r$:
\[
|\varpi (u, v) - \varpi(u_Y(v), v)| \lesssim \int_{u_Y(v)}^u \left |  \lambda r^2 \frac{(\partial_u \phi)^2}{\Omega^2} \right | (u', v) du' \lesssim \varepsilon e^{-2 \mathcal{C}_s(u, v)}.
\]
The bootstrap on $r$ closes after noticing that \eqref{def_mu} and the fact that\footnote{See also section \ref{section:AH} where the apparent horizon was localized in the redshift region.} $(1 - \mu)(r(u, v), \varpi(u, v)) < 0$ give
\[
\br{1-\mu}(r(u, v), M) < \frac{2}{r}\br{\varpi(u, v) - M}  \lesssim e^{-2 \mathcal{C}_s(u, v)}.
\]
In fact, this provides a constraint on $r$, depending on $v$, that can be extracted from the plot of $r \mapsto (1 - \mu)(r, M)$ (see e.g.\ \cite[section 2]{CGNS2}).
The bootstrap on $\kappa$ can be closed analogously to the procedure of the previous regions.

As regards the estimate on $\phi$: the fundamental theorem of calculus, the results of proposition \ref{prop:noshift}, \eqref{early_blueshift:L1_bound_proof} and the fact that $e^{-c_p(s)v} \le e^{-\mathcal{C}_s(u, v)}$ yield
\begin{align*}
|\phi|(u, v) &\le C_{\mathcal{N}}e^{-c_p(s)v_Y(u)} + \int_{u_Y(v)}^u |\partial_u \phi| (u', v) du' \\
&\le C_{\mathcal{N}} e^{-\mathcal{C}_s(u, v)} + \varepsilon e^{-\mathcal{C}_s(u, v)}.
\end{align*}
A similar bound follows for $\partial_v \phi$ (see also the procedure in \eqref{noshift_close_dvphi}). Then, the bootstrap estimate for $\phi$ and $\partial_v \phi$ is closed for small choices of $\varepsilon$.

Inequality \eqref{earlyblueshift:K} is a consequence of the bounds on $r$ and $\varpi$. Moreover, the estimate for $\partial_v \log \Omega^2$ follows analogously to the case of the redshift and no-shift regions.

Finally, we note that, since $v_Y(u) \le v \le (1+\beta)v_Y(u)$ in the early-blueshift region, then
\begin{equation} \label{bound_on_Cs}
\mathcal{C}_s(u, v) \ge \frac{c_p(s)}{1+\beta}v - 2(2K_{-} +  \varepsilon)\frac{\beta}{1+\beta} v \ge (c_p(s) -  \tau)v,
\end{equation}
for some $ \tau > 0$, $ \tau  =O(\beta)$. This is the final expression we use for the estimates we obtained.
\end{proof}

\begin{remark}
    As occurs in the smooth case \cite{Rossetti}, the results of proposition \ref{prop:early_blueshift} still hold even if we take $\Delta > \varepsilon$, provided that $\Delta + \varepsilon$ is small compared to the initial data. 
\end{remark}

We now improve the bounds on $\lambda$ and $\nu$ along the future boundary of the early-blueshift region, as follows.
\begin{proposition}[Estimates along $\gamma$] \label{prop:along_gamma}
    Let $\beta > 0$ be the parameter that defines the curve $\gamma$. Then, for every $\varepsilon > 0$ small compared to the initial data and for every $(u, v) \in \gamma \cap \{v \ge v_1\}$, we have:
    \begin{align}
        e^{-(2K_{-}+\varepsilon)\frac{\beta}{1+\beta}v} \lesssim - \lambda(u, v) &\lesssim e^{-(2K_{-}-\varepsilon) \frac{\beta}{1+\beta}v}, \label{bound_lambda_gamma} \\
        u^{-1 + \frac{K_{-}}{K_+}\beta + \tilde{\alpha}} \lesssim -\nu(u, v) &\lesssim u^{-1 + \frac{K_{-}}{K_+}\beta - \alpha}, \label{bound_nu_gamma}
    \end{align}
    for $\alpha = \alpha(\beta, \varepsilon) > 0$, $\tilde \alpha = \tilde \alpha(\beta, \varepsilon) > 0$, and where 
    \[
    \frac{K_{-}}{K_+} \beta - \alpha(\beta, \varepsilon) > 0.
    \]
\end{proposition}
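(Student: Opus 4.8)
The plan is to integrate the two first-order evolution equations for $\nu$ and $\lambda$ along each ingoing ray $\{u=\text{const}\}$, from the upper boundary $\Gamma_Y$ of the no-shift region up to $\gamma$, exploiting that on $\gamma$ one has the explicit relation $v-v_Y(u)=\beta\,v_Y(u)=\frac{\beta}{1+\beta}v$ coming from $v_\gamma(u)=(1+\beta)v_Y(u)$ in \eqref{def_gamma}. The only inputs are the bounds already propagated in proposition \ref{proposition:early_blueshift}, namely $|K+K_{-}|\le C\varepsilon$ from \eqref{earlyblueshift:K}, the relation $|\partial_v\log\Omega^2-2K|\lesssim e^{-(c_p(s)-\tau)v}$ from \eqref{earlyblueshift:dvlog} and $|\partial_v\phi|\lesssim e^{-(c_p(s)-\tau)v}$ from \eqref{earlyblueshift:phi}, together with the no-shift data $-\lambda(u,v_Y(u))\sim 1$ from \eqref{bound_lambda_noshift} and $|\nu|(u,v_Y(u))\sim u^{-1}$.

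For the $\lambda$-estimate I would use the Raychaudhuri equation \eqref{raych_v}, which in the region where $\lambda<0$ reads $\partial_v\log|\lambda|=\partial_v\log\Omega^2+\frac{r(\partial_v\phi)^2}{|\lambda|}$, the last term being nonnegative. Integrating from $v_Y(u)$ and simply discarding this nonnegative source gives at once the lower bound $|\lambda|(u,v)\gtrsim e^{-(2K_{-}+C\varepsilon)(v-v_Y(u))}$ throughout $J^+(\Gamma_Y)\cap J^{-}(\gamma)$, since $\partial_v\log\Omega^2=-2K_{-}+O(\varepsilon)+O(e^{-(c_p(s)-\tau)v})$ by \eqref{earlyblueshift:dvlog}--\eqref{earlyblueshift:K} and $-\lambda(u,v_Y(u))\sim 1$. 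Feeding this lower bound back into the source and using $|\partial_v\phi|^2\lesssim e^{-2(c_p(s)-\tau)v}$ together with $v'-v_Y(u)\le\frac{\beta}{1+\beta}v'$ along the relevant rays shows, for $\beta$ small enough that $2(c_p(s)-\tau)>(2K_{-}+C\varepsilon)\beta$, that $\int_{v_Y(u)}^{v}\frac{r(\partial_v\phi)^2}{|\lambda|}\,dv'\le C$ is bounded; integrating the full equation then yields the matching upper bound $|\lambda|(u,v)\lesssim e^{-(2K_{-}-C\varepsilon)(v-v_Y(u))}$. Evaluating on $\gamma$ and renaming $C\varepsilon\to\varepsilon$ produces \eqref{bound_lambda_gamma}.

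For the $\nu$-estimate I would start from the evolution $\partial_v\log|\nu|=\kappa(2K)$, integrated in \eqref{earlyblueshift:prelbound2}, which gives $|\nu|(u,v)=|\nu|(u,v_Y(u))\exp\!\big(\int_{v_Y(u)}^v\kappa(2K)\,dv'\big)$ with $\kappa(2K)=-2K_{-}+O(\varepsilon)$ by \eqref{earlyblueshift:K} and $\kappa\sim 1$; on $\gamma$ this exponent equals $(-2K_{-}+O(\varepsilon))\beta\,v_Y(u)$. The decisive translation into powers of $u$ comes from chaining the coordinate relations $|\nu|(u,v_Y(u))\sim|\nu|(u,v_R(u))\sim\Omega^2(0,v_R(u))\sim e^{2K_+v_R(u)}\sim u^{-1}$, which follow from proposition \ref{prop:redshift}, \eqref{EH:omega} and $u\sim e^{-2K_+v}$ in \eqref{noshift:u_relation}, using the bounded $v$-width \eqref{noshift:finite_v} so that $v_R(u)\sim v_Y(u)$; this also yields $v_Y(u)=-\frac{1}{2K_+}\log u+O(1)$. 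Substituting, the exponential factor becomes $u^{\frac{K_{-}}{K_+}\beta+O(\varepsilon\beta)}$, so $-\nu(u,v_\gamma(u))\sim u^{-1+\frac{K_{-}}{K_+}\beta\pm O(\varepsilon\beta)}$, which is \eqref{bound_nu_gamma} with $\alpha,\tilde\alpha=O(\varepsilon\beta)$; in particular $\frac{K_{-}}{K_+}\beta-\alpha>0$ once $\varepsilon$ is chosen small relative to $K_{-}/K_+$.

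The main obstacle is the apparent circularity in the $\lambda$ upper bound: the Raychaudhuri source $\frac{r(\partial_v\phi)^2}{|\lambda|}$ carries $|\lambda|$ in the denominator, so one cannot close the upper bound before controlling $|\lambda|$ from below. This is exactly why the lower bound must be extracted first (it uses only nonnegativity of that source), and why $\beta$ must be taken small: the competition is between the blueshift growth $e^{(2K_{-}+C\varepsilon)(v-v_Y(u))}$ of $1/|\lambda|$ and the Price-law decay $e^{-2(c_p(s)-\tau)v}$ of $(\partial_v\phi)^2$, and only for $\beta$ small, so that $v-v_Y(u)$ is a small fraction of $v$ on $\gamma$, does the decay win and render the source integrable. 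A secondary point requiring care is checking that the exponent corrections in the $\nu$-estimate are genuinely of size $O(\varepsilon\beta)$ — hence strictly smaller than the main term $\frac{K_{-}}{K_+}\beta$ — rather than merely $O(\varepsilon)$, which is what guarantees the sign condition $\frac{K_{-}}{K_+}\beta-\alpha>0$.
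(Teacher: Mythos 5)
Your proposal is correct and matches the paper's approach: the paper's own proof simply defers to \cite[Lemma 4.16]{Rossetti} adapted to the integral identities \eqref{appendix_lognumu_eqn}, \eqref{appendix_nu_modified}, \eqref{appendix_lambda_modified}, and your argument — the Raychaudhuri equation in $v$ (equivalently \eqref{appendix_lognumu_eqn}) combined with $\partial_v \log \Omega^2 \approx 2K \approx -2K_{-}$ from \eqref{earlyblueshift:dvlog}--\eqref{earlyblueshift:K} for the $\lambda$ bounds, and \eqref{appendix_nu_modified} (i.e.\ \eqref{earlyblueshift:prelbound2}) together with $u \sim e^{-2K_+ v_Y(u)}$ for the $\nu$ bounds, with $\alpha, \tilde\alpha = O(\varepsilon\beta)$ — is precisely that scheme, including the correct ordering (lower bound on $|\lambda|$ first, then feedback for the upper bound). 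Your pointwise-in-$v$ manipulation of $\partial_v \log|\lambda|$ is also legitimate in the rough setting, since the roughness lives only in the $u$-direction ($\lambda/\Omega^2$ and $\Omega^2$ are $C^1_v$), so nothing is lost by not phrasing it through the integral identity.
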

\begin{proof}
    The proof follows from that of \cite[Lemma 4.16]{Rossetti}, where now the integral formulation of \eqref{wave_r}--\eqref{raych_v} is adopted. In particular, equations \eqref{appendix_lognumu_eqn}, \eqref{appendix_nu_modified} and \eqref{appendix_lambda_modified} are used in place of the respective smooth formulations. 
\end{proof}

\subsection{Late-blueshift region} 
\begin{proposition}[Estimates in the late-blueshift region] \label{prop:late_blueshift}
    Let $U$ and $\beta$ be small compared to the initial data and choose $\varepsilon$ small so that
    \begin{equation} \label{assumpt_betaalpha}
        \frac{K_{-}}{K_+} \beta  -\alpha (\beta, \varepsilon) > 0.
    \end{equation}
    Let $p \in [2, +\infty)$.
    Then, for every $(u, v) \in J^{+}(\gamma) \cap \{ r(u, v) \ge r_{-} - \varepsilon\} \ne \emptyset$, we have:
    \begin{align}
        0 < -\lambda(u, v) &\lesssim \frac{\Omega^2(u, v)}{\Omega^2(u, v_{\gamma}(u))} |\lambda|(u, v_{\gamma}(u)) + e^{-\min \{2c_p(s)-2\tau, 2K_{-}-\tau\}v}, \label{blueshift:lambda} \\
        0 < -\nu(u, v) &\lesssim u^{-1 + \frac{K_{-}}{K_+}\beta - \alpha}, \\
        |\partial_v \phi|(u, v) &\lesssim  e^{-(c_p(s) - \tau)v}, \\
        |\phi|(u, v) &\lesssim e^{-(c_p(s) - \tau)v_{\gamma}(u)}, \\
        \| \partial_u \phi(\cdot, v) \|_{L^p_u([u_{\gamma}(v), u])} &\lesssim u_{\gamma}(v)^{-C \eta} \| \partial_u \phi(\cdot, v_0) \|_{L^p_u([0, u])} + u_{\gamma}(v)^{-\frac{1}{2K_+}\br{ \frac{2K_+}{p'} - c_p(s) } + O(\beta)}, \label{blueshift:Lp_duphi_final}  \\
        \partial_v \log \Omega^2(u, v) &=-2K_{-} + O(\varepsilon), \label{blueshift:dvlog} 
    \end{align}
    where $\tau > 0$, $\tau = O(\beta)$, $C > 0$. Here $\eta > 0$ (which is the constant introduced in proposition \ref{prop:redshift}) is taken small compared to $\beta$ and  $p'$ is the exponent conjugate to $p$.
\end{proposition}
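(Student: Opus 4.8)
The plan is to run a bootstrap argument in the connected region $\mathcal{D} \coloneqq J^+(\gamma) \cap \{ r \ge r_- - \varepsilon \} \cap \{ v \ge v_1 \}$, in complete parallel with Propositions \ref{prop:redshift}--\ref{proposition:early_blueshift}, bootstrapping only the \emph{continuous} quantities. I would let $E$ be the set of $q \in \mathcal{D}$ for which, on all of $J^-(q) \cap \mathcal{D}$, one has $r(u,v) \ge r_- - 2\varepsilon$, $|\phi|(u,v) + |\partial_v\phi|(u,v) \le C e^{-(c_p(s)-\tau)v}$, $\kappa(u,v) \ge 1 - C\varepsilon$ and $|\varpi(u,v) - M|$ small. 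These hold along $\gamma$ by Propositions \ref{proposition:early_blueshift} and \ref{prop:along_gamma}, and every quantity involved is continuous, so $E$ is non-empty and it suffices to improve each bootstrap inequality; the output $K \sim -K_-$ (estimate \eqref{blueshift:dvlog}'s ingredient) then follows from the bounds on $r$ and $\varpi$ via \eqref{def_K}, as in the previous regions.

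The estimates for $\lambda$, $\nu$ and $\partial_v\log\Omega^2$ are the low-regularity analogues of the smooth case and come straight from the evolution equations. For \eqref{blueshift:lambda} I integrate the Raychaudhuri equation \eqref{raych_v} in $v$ from $v_\gamma(u)$: since $\partial_v(\lambda/\Omega^2) = -r(\partial_v\phi)^2/\Omega^2 \le 0$, multiplying by $\Omega^2(u,v)$ and controlling the inhomogeneous integral with the bootstrap bound on $\partial_v\phi$ together with $\partial_v\log\Omega^2 \approx -2K_-$ yields the homogeneous term $\tfrac{\Omega^2(u,v)}{\Omega^2(u,v_\gamma(u))}|\lambda|(u,v_\gamma(u))$ plus an error $e^{-\min\{2c_p(s)-2\tau,\,2K_--\tau\}v}$. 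For $\nu$ I integrate \eqref{appendix_lognumu_eqn}: with $\kappa \sim 1$ and $K \sim -K_-$ one has $\partial_v\log|\nu| \approx -2K_- < 0$, so $|\nu|$ is essentially decreasing in $v$ and is bounded by its value on $\gamma$, which by \eqref{bound_nu_gamma} gives $|\nu|(u,v) \lesssim u^{-1 + \frac{K_-}{K_+}\beta - \alpha}$. Estimate \eqref{blueshift:dvlog} follows from \eqref{appendix_log_eqn} (integrating the $\partial_u\varpi$ term by parts) together with the bounds on $K$, $\varpi$, $\kappa$ and the $L^1$ control of $\partial_u\phi$, exactly as at the end of Propositions \ref{prop:redshift} and \ref{prop:noshift}.

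The heart of the proof, and the main obstacle, is the degenerate $L^p$ estimate \eqref{blueshift:Lp_duphi_final} for $\partial_u\phi$. As in the earlier regions I first assume $\partial_u\phi$ and $\partial_u\log\Omega^2$ are smooth and remove this by the density argument at the end. Propagating from $\gamma$ via the integrated wave equation $\partial_v(r\partial_u\phi) = -\nu\partial_v\phi$ (equation \eqref{appendix_rduphi}) gives $r\partial_u\phi(u,v) = r\partial_u\phi(u,v_\gamma(u)) - \int_{v_\gamma(u)}^v \nu\partial_v\phi\,dv'$. I then feed in the pointwise control of $\partial_u\phi(\cdot,v_\gamma(u))$ from \eqref{pointwise_duphi_last}, the decay $|\nu|(u,v) \sim |\nu|(u,v_\gamma(u))\, e^{-2K_-(1+o(1))(v - v_\gamma(u))}$ from the previous step, and the bootstrap decay of $\partial_v\phi$, and take the $L^p_u$ norm over $[u_\gamma(v), u]$. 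Converting the surviving $v$-exponentials into powers of $u_\gamma(v)$ through the relation $u_\gamma(v) \sim e^{-\frac{2K_+}{1+\beta}v}$ inherited from \eqref{noshift:u_relation} and the definition of $\gamma$ yields the two terms of \eqref{blueshift:Lp_duphi_final}, the critical one carrying the exponent $-\tfrac{1}{2K_+}\br{\tfrac{2K_+}{p'} - c_p(s)} + O(\beta)$; this exponent is negative exactly in the slow-decay regime, which is the precise sense in which the bound degenerates as one approaches the Cauchy horizon.

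Finally I close the bootstrap on $\partial_v\phi$ and $\phi$. Integrating $\partial_u(r\partial_v\phi) = -\lambda\partial_u\phi$ (equation \eqref{appendix_rdvphi}) in $u$ from $u_\gamma(v)$, the new contribution is $\int_{u_\gamma(v)}^u |\lambda\partial_u\phi|\,du'$, which by H\"older is bounded by $\|\lambda\|_{L^\infty}\,|u - u_\gamma(v)|^{1/p'}\,\|\partial_u\phi(\cdot,v)\|_{L^p_u([u_\gamma(v),u])}$, and similarly for $\phi$ via the fundamental theorem of calculus. The decisive point — and the reason the degenerate $L^p$ bound is still enough — is that the late-blueshift region is \emph{thin in $r$}: because $|\nu|$ blows up like $u^{-1 + \frac{K_-}{K_+}\beta - \alpha}$ while $r$ varies over an interval of length $\lesssim \Delta + \varepsilon$, both $|u - u_\gamma(v)|$ and (by \eqref{blueshift:lambda}) $|\lambda|$ are small, and their product against the degenerate $L^p$ norm remains controlled by $e^{-(c_p(s)-\tau)v}$ with a constant that improves the bootstrap. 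The delicate bookkeeping is to verify that the small power of $u_\gamma(v)$ gained from the thinness strictly beats the negative power lost in \eqref{blueshift:Lp_duphi_final} under the running smallness assumptions on $\beta$, $\varepsilon$ and $\eta$; once this is checked, all bootstrap inequalities close and $E = \mathcal{D}$.
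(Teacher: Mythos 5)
There is a genuine gap, and it sits at the structural heart of your bootstrap. You propose to bootstrap $|\varpi - M|$ small and $\kappa \ge 1 - C\varepsilon$ throughout $\mathcal{D} = J^+(\gamma)\cap\{r \ge r_- - \varepsilon\}$ and to deduce $K \sim -K_-$ from them. But this proposition must hold in exactly the regime where mass inflation occurs (it is the input to theorem \ref{thm:mass_inflation}): for fixed $u$ and $v \to +\infty$ the ray stays inside $\mathcal{D}$ (since $r(u,+\infty) > r_- - \varepsilon$ for $u$ small), and there $\varpi$ diverges, while $\kappa = -\Omega^2/(4\nu)$ degenerates because $\int r(\partial_u\phi)^2/|\nu|\,du'$ is no longer small for the rough data under consideration. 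Those two bootstrap inequalities therefore cannot be closed — they are false for the solutions the proposition is about. This is precisely why the paper's proof bootstraps only $|r\partial_v\phi|$, $|\lambda|$ and the \emph{one-sided} inequality $\partial_v\log\Omega^2 \le -K_-$, never touching $\varpi$, $\kappa$ or $K$ inside the region: the bound \eqref{blueshift:dvlog} is obtained by integrating the wave equation for $\log\Omega^2$ in $u$ from $\gamma$ (where the early-blueshift estimates \eqref{earlyblueshift:K}, \eqref{earlyblueshift:dvlog} supply the value $\approx -2K_-$), with the error terms controlled by the $L^1_u$ bound on $\partial_u\phi$ and by the decay $\int_{u_\gamma(v)}^u \Omega^2\,du' \lesssim (v - v_\gamma(u))e^{-\frac{\beta}{1+\beta}K_- v}$ that follows from the one-sided bootstrap itself; likewise the $\nu$ bound comes from \eqref{appendix_rnu_eqn} plus that same one-sided bound, not from $\partial_v\log|\nu| = \kappa(2K) \approx -2K_-$.

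The second gap is in your closure of the $\partial_v\phi$ bootstrap. The crude H{\"o}lder estimate $\|\lambda\|_{L^\infty}\,|u - u_\gamma(v)|^{1/p'}\,\|\partial_u\phi(\cdot,v)\|_{L^p_u([u_\gamma(v),u])}$ does not produce the decay $e^{-(c_p(s)-\tau)v}$: for fixed $u > 0$ one has $|u - u_\gamma(v)| \to u$ as $v \to +\infty$ (it is not small in $v$), while $\|\partial_u\phi(\cdot,v)\|_{L^p_u([u_\gamma(v),u])}$ \emph{grows} exponentially in $v$ (it is $\sim u_\gamma(v)^{-A} + o_u(1)u_\gamma(v)^{-B}$ with $A, B > 0$ and $u_\gamma(v) \sim e^{-\frac{2K_+}{1+\beta}v}$), so the product blows up. The "thinness in $r$" of the region cannot rescue this; what is needed — and what the paper does — is to insert the pointwise structure of $\lambda$ from \eqref{blueshift:lambda} \emph{inside} the integral, i.e.\ $|\lambda|(x,v) \lesssim e^{-(2K_- - \varepsilon)(v - v_\gamma(x))}|\lambda|(x,v_\gamma(x)) + e^{-\min\{2c_p(s)-2\tau,\,2K_--\tau\}v}$, and then run a \emph{weighted} H{\"o}lder inequality against the blow-up profile of $\partial_u\phi$, verifying the algebraic sign conditions $W - Z - B > 0$ and $W - Z - A > 0$; these are exactly where the constraint $c_p(s) \le \frac{2K_+}{p'} - C(\eta)$, the sub-extremality $K_-/K_+ > 1$, and the hierarchy $\varepsilon \ll \tau = O(\beta)$, $\eta \ll \beta$ enter. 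Your derivation of the $L^p$ estimate \eqref{blueshift:Lp_duphi_final} itself follows the paper's route correctly, but without the weighted analysis of the term $\int_{u_\gamma(v)}^u \frac{\Omega^2(x,v)}{\Omega^2(x,v_\gamma(x))}|\lambda|(x,v_\gamma(x))|\partial_u\phi|(x,v)\,dx$ the bootstrap does not close.
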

\begin{remark}
    The $L^p$ estimate in \eqref{blueshift:Lp_duphi_final} gives a bound on the energy of $\partial_u \phi$ for large values of $v$ and fixed  values of $u$. This is the region near the Cauchy horizon we will mainly focus on. A sharper estimate in the region $u_{\gamma}(v) \sim u$ easily follows from our proof, as well.
\end{remark}
\begin{proof}[Proof of proposition \ref{prop:late_blueshift}]
During the proof, we will take $\varepsilon > 0$ small so that
\[
\frac{r_{-}}{2} < r_{-} - \varepsilon \le r \le Y, \quad \text{ in } J^+(\Gamma_Y) \cap \{ r \ge r_{-} - \varepsilon, v \ge v_1 \},
\]
and we will repeatedly employ the relation
\[
u \sim e^{-2K_+ v_Y(u)} = e^{-\frac{2K_+}{1+\beta}v_{\gamma}(u)}, \quad \forall\, u \in [0, U],
\]
which follows from \eqref{noshift:u_relation}. 
Furthermore, let $E$ be the set of points $q \in \mathcal{D} \coloneqq J^+(\gamma) \cap \{r \ge r_{-} - \varepsilon, v \ge v_1\}$ such that the following inequalities are satisfied for every $(u, v) \in J^{-}(q) \cap \mathcal{D}$:
\begin{align}
    |r \partial_v \phi|(u, v) &\le M  e^{-(c_p(s) - \tau)v}, \label{blueshift:B1} \\
    |\lambda|(u, v) &\le C(N_{\text{i.d.}}), \label{blueshift:B2}\\
    \partial_v \log \Omega^2(u, v) &\le -K_{-}, \label{blueshift:B3}
\end{align}
where $M=M(N_{\text{i.d.}}) > 0$ is a large constant and $\tau > 0$, $\tau = O(\beta)$ (see the definition of the curve $\gamma$ in proposition \ref{prop:early_blueshift}). These estimates are satisfied along the curve $\gamma$, due to the results of proposition \ref{prop:early_blueshift}.

Let $(u, v) \in E$. By integrating \eqref{blueshift:B1} and using \eqref{earlyblueshift:phi}:
\[
    |\phi|(u, v) \le |\phi|(u, v_{\gamma}(u)) + C(N_{\text{i.d.}}, M) \int_{v_{\gamma}(u)}^v e^{-(c_p(s)-\tau)v'}dv' \le C e^{-(c_p(s) - \tau)v_{\gamma}(u)},
\]
where $C = C(N_{\text{i.d.}}, M, \eta, R, Y)$. Since $\eta, R$ and $Y$ will not be affected in the rest of the proof, we have $C = C(N_{\text{i.d.}}, M)$.

Now, \eqref{appendix_rnu_eqn}, the bounds on $r$ and on $\phi$, \eqref{bound_nu_gamma}, \eqref{blueshift:B3} and the fact that $\kappa_{|\gamma} \sim 1$ give:
\begin{align}
    |r \nu|(u, v) &\lesssim |r \nu|(u, v_{\gamma}(u)) + \int_{v_{\gamma}(u)}^v \Omega^2(u, v')dv' \nonumber\\
    &\lesssim |r \nu|(u, v_{\gamma}(u)) -\frac{1}{K_{-}} \int_{v_{\gamma}(u)}^v \br{\Omega^2 \partial_v \log \Omega^2}(u, v') dv' \nonumber \\
    &\lesssim u^{-1 + \frac{K_{-}}{K_+}\beta - \alpha} + C(N_{\text{i.d.}}) \Omega^2(u, v_{\gamma}(u)) \nonumber \\
    &\le C(N_{\text{i.d.}}) u^{-1 + \frac{K_{-}}{K_+}\beta - \alpha}. \label{blueshift:nu_bound}
\end{align}

Once again (see the proofs of propositions \ref{prop:redshift}, \ref{prop:noshift} and \ref{prop:early_blueshift}) we assume, for the moment, that $\partial_u \phi$ is smooth.
The pointwise bound \eqref{pointwise_duphi_last} (see also \eqref{bound_on_Cs}) can be cast as
\[
|r \partial_u \phi|(u, v_{\gamma}(u)) \lesssim e^{-a v_Y(u)} \Omega^2(u, v_Y(u)) |\partial_u \phi|(u, v_0) + |\nu|(u, v_{\gamma}(u)) e^{-(c_p(s) - \tau)v_{\gamma}(u)},
\]
where we recall that $a$, with $0 < a < 2K_+$, can be chosen $\eta$-close to $2K_+$ (see \eqref{preliminary_Duphi} and following).
We now exploit the above, \eqref{appendix_rduphi}, \eqref{blueshift:B1} and \eqref{blueshift:nu_bound} to write:
\begin{align}
    |r \partial_u \phi|(u, v) &\lesssim |r \partial_u \phi|(u, v_{\gamma}(u)) + \int_{v_{\gamma}(u)}^v |\nu \partial_v \phi|(u, v')dv'  \label{lb:duphi_pointwise} \\
    &\le C(N_{\text{i.d.}})e^{-av_Y(u)}\Omega^2(u, v_Y(u))|\partial_u \phi|(u, v_0) + C(N_{\text{i.d.}}, M)u^{-1 + \frac{K_{-}}{K_+}\beta - \alpha} e^{-(c_p(s) - \tau)v_{\gamma}(u)}. \nonumber
\end{align}
We now obtain an $L^p$ bound, for $p \in [2, +\infty)$. For the sake of convenience, we define the following constants:
\begin{align}
    \tilde{A} &\coloneqq  1  - \frac{K_{-}}{K_+}\beta + \alpha - \frac{c_p(s) - \tau}{2K_+}(1+\beta), \label{def_tildeA} \\
    A &\coloneqq \tilde A - \frac{1}{p} > 0, \label{def_A}\\
    B &\coloneqq \frac{2K_+ - a}{2K_+} > 0, \label{def_B}
\end{align}
where $a < 2K_+$ can be chosen $\eta$-close to $2K_+$. We stress that $-p \tilde{A} < -1$ for $\beta$ small, due to \eqref{def_cps} and to the fact that $\tilde{A}=1 - \frac{c_p(s)}{2K_+} + O(\beta)$. Then, using the above definitions, Minkowski's inequality applied to \eqref{lb:duphi_pointwise}, and the facts that $e^{(2K_+ - a)v_Y(u)} \sim u^{-B}$ and $e^{-(c_p(s) - \tau)v_{\gamma}(u)} \sim u^{\frac{c_p(s) - \tau}{2K_+}(1 + \beta)}$:
\begin{align} 
    \|\partial_u \phi(\cdot, v)\|_{L^p_u([u_{\gamma}(v), u])} &\lesssim \| e^{-a v_Y(\cdot)} \Omega^2(\cdot, v_Y(\cdot)) \partial_u \phi(\cdot, v_0) \|_{L^p_u} + C(M) \| (\cdot)^{-1 + \frac{K_{-}}{K_+}\beta - \alpha} e^{-(c_p(s) - \tau)v_{\gamma}(\cdot)} \|_{L^p_u} \nonumber  \\
    &\le  u_{\gamma}(v)^{-B} \|\partial_u \phi(\cdot, v_0)\|_{L^p_u([0, u])} + C(M) \br{ \int_{u_{\gamma}(v)}^u x^{-p \tilde{A}} dx }^{\frac{1}{p}} \nonumber \\
    &\lesssim o_u(1) u_{\gamma}(v)^{-B} + C(M) u_{\gamma}(v)^{-A}, \label{blueshift:duphi_Lp}
\end{align}
where we emphasize that $A, B > 0$. The above upper bound is not optimal when $u_{\gamma}(v) \sim u$ since we neglected some terms depending on $u$ to obtain a simpler expression. On the other hand,  this $L^p$ bound encodes the dominant behaviour for $u$ fixed and large values of $v$, that is the case we will focus on.

Moreover, as a consequence of the decomposition $[u_{\gamma}(v), u] = ( [u_{\gamma}(v), u]$ $ \cap $ $\{|\partial_u \phi| \le 1\})$ $\cup$ $ ([u_{\gamma}(v), u]$ $ \cap$ $ \{|\partial_u \phi| > 1\}) $, the following $L^1$ bound follows:
\begin{equation} 
    \|\partial_u \phi(\cdot, v)\|_{L^1_u([u_{\gamma}(v), u])} \le u - u_{\gamma}(v) +  \|\partial_u \phi(\cdot, v)\|^2_{L^2_u([u_{\gamma}(v), u])}.
\end{equation}
This can be cast in a more convenient form by applying \eqref{blueshift:duphi_Lp} for $p=2$ and the relation $u_{\gamma}(v) \sim e^{\frac{2K_+ - a}{1+\beta}v}$. In particular:
\begin{equation} \label{blueshift:duphi_L1}
    \|\partial_u \phi(\cdot, v)\|_{L^1_u([u_{\gamma}(v), u])} \le u - u_{\gamma}(v) +  o_u(1) e^{Bv} + C(M)e^{Av},
\end{equation}
where we used the same symbols $A$ and $B$ to denote rescalings (in terms of the initial data) of the constants in \eqref{def_A} and $\eqref{def_B}$. We stress that, due to \eqref{def_cps}, we have
\begin{equation} \label{ABsmall}
    A =O(\beta) \quad \text{ and } \quad   B = O(\eta),
\end{equation}
where $\eta$ is the small constant associated to the redshift region (see proposition \ref{prop:redshift}), whereas $\beta$ is the small constant  introduced in the early-blueshift region (see proposition \ref{prop:early_blueshift}).
 
We now drop the smoothness assumptions on $\partial_u \phi$: by a density argument,  \eqref{blueshift:duphi_Lp} and \eqref{blueshift:duphi_L1}  still hold in the rough setting.

To close the bootstrap inequality for $\partial_v \log \Omega^2$, we first notice that \eqref{earlyblueshift:K} and  \eqref{earlyblueshift:dvlog} give:
\begin{align*}
    |\partial_v \log \Omega^2(u_{\gamma}(v), v) + 2K_{-}| &\le |\partial_v \log \Omega^2(u_{\gamma}(v), v) - 2K(u_{\gamma}(v), v)| + |2K(u_{\gamma}(v), v) + 2K_{-}| \\
    &= o_{v_1}(1) + O(\varepsilon),
\end{align*}
as $v_1 \to +\infty$.
Moreover, note that \eqref{blueshift:B1} and \eqref{blueshift:duphi_L1} yield
\begin{align*}
\int_{u_{\gamma}(v)}^u |\partial_u \phi \partial_v \phi| (u', v) du' &\lesssim C(N_{\text{i.d.}}, M) e^{-(c_p(s) - \tau)v} \| \partial_u \phi(\cdot, v) \|_{L^1_u([0, u])} = o_{v_1}(1), 
\end{align*}
as $v_1 \to +\infty$.
Furthermore, if we apply \cite[lemma 4.15]{Rossetti} (that still holds in our low-regularity setting) with $D = K_{-}$:
\[
\int_{u_{\gamma}(v)}^u \Omega^2(u', v) du' \le C(N_{\text{i.d.}})(v - v_{\gamma}(u)) e^{-\frac{\beta}{1+\beta}K_{-}v}.
\]
We now  exploit \eqref{blueshift:B2}, \eqref{blueshift:B3} and the above bounds in the integral formulation of \eqref{wave_Omega} to obtain
\begin{align*}
o_{v_1}(1)  &= \partial_v \log \Omega^2 (u, v) - \partial_v \log \Omega^2(u_{\gamma}(v), v) \\
&=\partial_v \log \Omega^2(u, v) + 2K_{-} - 2K_{-} - \partial_v \log\Omega^2(u_{\gamma}(v), v) \\
&= \partial_v \log \Omega^2(u, v) + 2K_{-} + o_{v_1}(1) + O(\varepsilon),
\end{align*}
thus implying \eqref{blueshift:dvlog} for $v_1$ large.
By integrating the latter, we also get:
\begin{equation} \label{lb:omega_bound}
e^{-(2K_{-} + \varepsilon)(v - v_{\gamma}(u))} \le \frac{\Omega^2(u, v)}{\Omega^2(u, v_{\gamma}(u))} \le e^{-(2K_{-} - \varepsilon)(v - v_{\gamma}(u))},
\end{equation}
where $\varepsilon > 0$ is the (possibly rescaled) parameter introduced in the early-blueshift region, and we recall that  $\varepsilon$ can be chosen small compared to $\beta$ (see e.g.\ the proof of proposition \ref{prop:along_gamma}).

To close the bootstrap on $|\lambda|$: the integral formulation of \eqref{raych_v} and the bounds on $r$ give
\begin{equation} \label{blueshift:bounds_Omega}
\frac{|\lambda|}{\Omega^2}(u, v) \le \frac{|\lambda|}{\Omega^2}(u, v_{\gamma}(u)) + C(N_{\text{i.d.}}) \int_{v_{\gamma}(u)}^v \frac{(\partial_v \phi)^2}{\Omega^2}(u, v') dv'.
\end{equation}
Using \eqref{blueshift:B1} and the previous bound on $\Omega^2$:
\begin{align*}
|\lambda|(u, v) &\le |\lambda|(u, v_{\gamma}(u)) \frac{\Omega^2(u, v)}{\Omega^2(u, v_{\gamma}(u))} + C(N_{\text{i.d.}}, M) \frac{\Omega^2(u, v)}{\Omega^2(u, v_{\gamma}(u))} \times \\
&\times e^{-(2K_{-} + \varepsilon)v_{\gamma}(u)} \int_{v_{\gamma}(u)}^v e^{-(2 c_p(s) + 2\tau - 2K_{-} - \varepsilon) v'} dv'.
\end{align*}
Similarly to the smooth case, \eqref{blueshift:lambda} follows after distinguishing between the cases $c_p(s) < K_{-}$ and  $c_p(s) \ge K_{-}$, and using \eqref{blueshift:bounds_Omega} again.

In the last part of the proof, we close the bootstrap on $\partial_v \phi$. As a consequence of \eqref{appendix_rdvphi}, the bound on $\lambda$ (see \eqref{blueshift:lambda}) and \eqref{earlyblueshift:phi}, we obtain:
\begin{align}
    |r \partial_v \phi|(u, v) &\le |r \partial_v \phi|(u_{\gamma}(v), v) + \int_{u_{\gamma}(v)}^u \br{|\lambda \partial_u \phi|}(u', v) du' \nonumber \\
    &\le C e^{-(c_p(s)-\tau)v}+ \int_{u_{\gamma}(v)}^u\frac{\Omega^2(x, v)}{\Omega^2(x, v_{\gamma}(x)) }|\lambda|(x, v_{\gamma}(x))|\partial_u \phi|(x, v)dx + \label{row1}\\
    &+e^{-\min\{2c_p(s) - 2\tau, 2K_{-}-\tau\}v}\|\partial_u \phi(\cdot, v)\|_{L^1_u([u_{\gamma}(v), u])}.    \label{row2}
\end{align}
Let us analyse the second term in \eqref{row1} and let us denote it by $II$. To bound such a term, we first notice that  \eqref{bound_lambda_gamma} gives $|\lambda|_{|\gamma}\lesssim 1$. Moreover, for the sake of convenience, we define the constants
\begin{align}
    \tilde{Z} &\coloneqq \frac{K_{-}}{K_+}\br{1+\beta} - \frac{\varepsilon}{2K_+}(1 + \beta), \\
    Z &\coloneqq \tilde{Z} -1 + \frac{1}{p}, \\
    W &\coloneqq \frac{2K_{-} - c_p(s) - \varepsilon + \tau}{2K_+}(1 + \beta).
\end{align}
Then, \eqref{lb:omega_bound}, the fact that $e^{(2K_{-} - \varepsilon)v_{\gamma}(u)} \sim u^{-\tilde{Z}}$, $e^{-(2K_{-}-c_p(s)-\varepsilon + \tau)v} \sim u_{\gamma}(v)^W$,  H{\"o}lder's inequality and \eqref{blueshift:duphi_Lp} give:
\begin{align}
    II &\lesssim e^{-(c_p(s) - \tau)v} \int_{u_{\gamma}(v)}^v e^{(2K_{-}-\varepsilon)v_{\gamma}(x)} e^{-(2K_{-}  - c_p(s)-\varepsilon + \tau)v} |\partial_u \phi|(x, v) dx  \nonumber \\
    &\le e^{-(c_p(s) - \varepsilon)v} u_{\gamma}(v)^W \int_{u_{\gamma}(v)}^u x^{-\tilde{Z}}  |\partial_u \phi|(x, v) dx  \nonumber \\
    &\le e^{-(c_p(s) - \varepsilon)v} u_{\gamma}(v)^W \br{ \int_{u_{\gamma}(v)}^u x^{-\tilde{Z}\frac{p}{p-1}} dx}^{\frac{p-1}{p}} \|\partial_u \phi(\cdot, v)\|_{L^p_u([u_{\gamma}(v), u])} \nonumber \\
    &\lesssim e^{-(c_p(s) - \varepsilon)v} u_{\gamma}(v)^{W-Z} \br{u_{\gamma}(v)^{-B} o_u(1) + C(M) u_{\gamma}(v)^{-A}}, \label{II_last}
\end{align}
where we used that $1 - \tilde{Z} \frac{p}{p-1} < 0$ since $K_{-}/K_+ > 1$.
It is an algebraic fact that 
\[
W-Z-B = 1 - \frac{1}{p} - \frac{c_p(s)}{2K_+} + O(\beta) + O(\eta) > 0,
\]
since $2K_+(1 - 1/p) > c_p(s)$ (see \eqref{def_cps}), and due to the facts that $a$ is $\eta$-close\footnote{Notice, in particular, that the constant $C(\eta)$ in \eqref{def_cps} can be chosen so to have a positive sign for $W - Z - B$.} to $2K_+$ and that $\varepsilon= O(\beta)$, $\tau = O(\beta)$. 
Moreover, for small values of $\beta$ we have 
\[
W-Z-A = \frac{K_{-}}{K_+} \beta - \alpha = \frac{K_{-}}{K_+} \beta + O(\varepsilon \beta) > 0,
\]
since $\alpha = O(\beta \varepsilon)$ (see e.g.\ \cite[appendix C]{Rossetti}). So, the expression in \eqref{II_last} decays faster than $e^{-(c_p(s) - \tau)v}$ (since $\varepsilon$ can be chosen small compared to $\tau = O(\beta)$). As a by-product, term II is of sub-leading order for large values of $v$.

Finally,  \eqref{blueshift:duphi_L1} and \eqref{ABsmall} applied to the last term in 
\eqref{row2} allow to close the bootstrap inequality for $\partial_v \phi$, provided that $M$ and $v_1$ are large compared to the initial data.
\end{proof}

\subsection{\texorpdfstring{$C^0$ extensions}{C0 extensions}} \label{section:C0_extensions}
The estimates obtained in the late-blueshift region are sufficient to show that
\begin{equation} \label{Cauchy_rminus}
    \lim_{u \to 0} \lim_{v \to +\infty} r(u, v) = r_{-}
\end{equation}
and to construct continuous extensions beyond the Cauchy horizon in some coordinate system $(u, V)$ for $u \in (0, U]$, where we can parametrize the Cauchy horizon as $\mathcal{CH}^+ = \{ V = 1 \}$. In particular, the methods of, e.g.\ \cite{Dafermos_2003, Dafermos_2005_uniqueness, CGNS4, LukOh1, VdM1, mythesis} to construct continuous extensions can be applied to our low-regularity setting as well.

\section{\texorpdfstring{$H^1$ extensions when $p > 2$}{H1 extensions when p > 2}} \label{section:no_mass_inflation}
\begin{theorem}[Boundedness of the Hawking mass] \label{thm:no_mass_inflation}
Under the assumptions of section \ref{section:assumptions} (and even in the absence of a lower bound for $\partial_v \phi$), if both
\begin{equation}  \label{no_mass_inflation_assumptions}
    s > K_{-} \quad \text{ and } \quad \rho \coloneqq \frac{K_{-}}{K_+} < 2\br{1 - \frac{1}{p}},
\end{equation}
then there exists a positive constant $C$, depending uniquely on the initial data, such that 
\[
    |\varpi|(u, v) \le C, \quad \forall\, (u, v) \in J^+(\gamma),
\]
provided that the value of $U$ is sufficiently small.
\end{theorem}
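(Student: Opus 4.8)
The plan is to exploit the monotonicity of the Hawking mass in the ingoing direction. From the Raychaudhuri equation \eqref{raych_u}, the wave equation \eqref{wave_r} and the definition \eqref{def_varpi} one derives (this is the content of \eqref{appendix_u_varpi_eqn}) the identity
\[
\partial_u \varpi = -2\lambda r^2 \frac{(\partial_u\phi)^2}{\Omega^2} = \frac{r^2\,|1-\mu|}{2|\nu|}(\partial_u\phi)^2,
\]
where I have used $\frac{\nu\lambda}{\Omega^2} = -\frac{1-\mu}{4}$. Since $\lambda < 0$ throughout $J^+(\gamma)$ (the apparent horizon lies in the redshift region, see section \ref{section:AH}), the right-hand side is non-negative, so $\varpi$ is non-decreasing in $u$ along each ray $\{v=\text{const}\}$. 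Integrating at fixed $v$ from the point $(u_\gamma(v),v)\in\gamma$ to $(u,v)$ and using \eqref{earlyblueshift:varpi} (so that $\varpi(u_\gamma(v),v)$ is uniformly close to $M$), it suffices to prove the uniform bound $\int_{u_\gamma(v)}^u |\lambda|\, r^2 (\partial_u\phi)^2/\Omega^2\,(u',v)\,du' \le C$, with $C$ depending only on the initial data, for all $(u,v)\in J^+(\gamma)$ with $r(u,v)\ge r_- - \varepsilon$.

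The crucial observation is that the two hypotheses combine to force $c_p(s) > K_-$: the condition $\rho < 2(1-\frac1p)$ reads $K_- < \frac{2K_+}{p'}$, so for $\eta$ small enough $\frac{2K_+}{p'}-C(\eta) > K_-$, and together with $s > K_-$ this gives $c_p(s)=\min\{s,\frac{2K_+}{p'}-C(\eta)\} > K_-$ by \eqref{def_cps}. This strict inequality is precisely what makes the Raychaudhuri-type exponential integrals controlling $\lambda$ and $\Omega^2$ near $\mathcal{CH}^+$ convergent. Revisiting the proof of \eqref{blueshift:lambda} with \eqref{lb:omega_bound}, the integral $\int_{v_\gamma(u')}^{v}(\partial_v\phi)^2/\Omega^2$ is dominated by its lower endpoint exactly when $2c_p(s) > 2K_-$, and one extracts the sharp weight estimate
\[
\frac{|\lambda|}{\Omega^2}(u',v) \lesssim u'^{\,1-O(\beta)}, \qquad (u',v)\in J^+(\gamma)\cap\{r\ge r_- -\varepsilon\},
\]
uniformly up to the Cauchy horizon. (When instead $c_p(s) < K_-$ this quantity grows exponentially in $v$, which is the mechanism behind the mass-inflation statements of theorem \ref{thm:main}.)

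With the weight under control, the remaining difficulty is that, unlike in the smooth setting of \cite{CGNS4, Rossetti}, $\partial_u\phi$ is only controlled in $L^p$, and the bound \eqref{blueshift:Lp_duphi_final} degenerates as $v\to+\infty$. A naive Hölder estimate pairing $(\partial_u\phi)^2\in L^{p/2}$ with the weight in $L^{p/(p-2)}$ therefore fails on the initial-data contribution. Instead I will insert the pointwise representation \eqref{lb:duphi_pointwise} and split $(\partial_u\phi)^2 \lesssim T_1^2 + T_2^2$ into an initial-data piece $T_1 \sim e^{-a v_Y(u')}\Omega^2(u',v_Y(u'))|\partial_u\phi|(u',v_0)$ and a homogeneous piece $T_2$. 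For the $T_1$-piece the redshift factor $e^{-2a v_Y(u')}$, with $a$ chosen arbitrarily close to $2K_+$, cancels the growth $\Omega^4(u',v_Y(u'))\sim u'^{-2}$, leaving an integrable positive power of $u'$; pairing against the \emph{fixed}, finite norm $\|\partial_u\phi(\cdot,v_0)\|_{L^p([0,u])}$ rather than the degenerating one yields a bounded contribution (here $L^p\hookrightarrow L^2$ on the bounded segment is used). For the $T_2$-piece the integrand is of the form $u'^{-1+c_p(s)/K_+ + O(\beta)}$, whose exponent exceeds $-1$, so the integral converges up to $u'=0$. Summing the two pieces closes the estimate.

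Finally, since $1-\mu$ depends on $\varpi$ itself, the argument is organised as a continuity/bootstrap argument: assuming $\varpi \le 2C_0$ on $J^-(q)$ keeps $|1-\mu|\lesssim 1$, and taking $U$ sufficiently small forces the integral increment to be at most $C_0$, thereby improving the bound and closing the bootstrap. The main obstacle, and the place where near-extremality enters decisively, is the uniform control of the weight $|\lambda|/\Omega^2$ up to $\mathcal{CH}^+$, which hinges entirely on $c_p(s) > K_-$; everything else is a careful but essentially routine adaptation of the smooth-case integration of $\partial_u\varpi$, with Hölder's inequality and the pointwise decomposition \eqref{lb:duphi_pointwise} replacing the pointwise bounds on $\partial_u\phi$ available in \cite{CGNS4, Rossetti}.
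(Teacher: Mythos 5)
Your proof is correct, but it takes a genuinely different route from the paper's. The paper (citing the smooth-case argument of \cite{Rossetti}) integrates the Hawking mass in the \emph{outgoing} direction: using the representation \eqref{appendix_varpi_boundedness}, everything reduces to the convergence of $\int_{v_Y(u)}^{v}\frac{(\partial_v\phi)^2}{|\lambda|}(u,v')\,dv'$, whose integrand is bounded by $e^{(-2c_p(s)+2K_-+C(\beta,\varepsilon))v'}$ thanks to the pointwise decay of $\partial_v\phi$ and the lower bound on $|\lambda|$ obtained from \eqref{appendix_lambda_modified} and \eqref{appendix_lognumu_eqn}; convergence holds precisely under \eqref{no_mass_inflation_assumptions}. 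You instead integrate \eqref{appendix_u_varpi_eqn} in the \emph{ingoing} direction, which forces you to control the rough quantity $\partial_u\phi$ in a weighted $L^2$ sense; you do this correctly, via the weight bound $|\lambda|/\Omega^2\lesssim (u')^{1-O(\beta)}$ (whose proof, by lower-endpoint dominance in the Raychaudhuri integral, spends the hypothesis in the form $c_p(s)>K_-$, exactly as the paper does) together with the splitting of \eqref{lb:duphi_pointwise} into a data term neutralized by the redshift factor and a homogeneous term producing the integrable power $(u')^{-1+c_p(s)/K_++O(\beta)}$. So both arguments pivot on the same inequality $c_p(s)>K_-$; the difference is where it is used. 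The paper's choice is what makes the rough case nearly identical to the smooth one: the $v$-direction identity involves only the continuous quantities $r$, $\partial_v\phi$, $\lambda$, $\kappa$, so no density argument and no degenerate $L^p$ bound on $\partial_u\phi$ ever enters, and the smooth proof applies essentially verbatim (this is why the paper's proof occupies three lines). Your choice costs more work but isolates exactly how the roughness of the data enters the stability statement and why redshift renders it harmless when $\rho<2(1-\tfrac1p)$. Two points you should tighten: first, since \eqref{lb:duphi_pointwise} is only derived for smooth approximants, state explicitly that your final estimate is on a weighted integral quantity and passes to the $W^{1,p}$ solution by density, as the paper does repeatedly; second, your closing bootstrap is aimed at the wrong quantity---the estimates of proposition \ref{prop:late_blueshift} are unconditional in $J^+(\gamma)\cap\{r\ge r_--\varepsilon\}$, so what the continuity argument must actually guarantee (in your proof and in the paper's alike) is that $J^+(\gamma)$ remains inside the region $\{r\ge r_--\varepsilon\}$ where those estimates hold, i.e.\ the bootstrap should be run on $r$ (boundedness of $\varpi$ preventing $r$ from dropping below $r_--\varepsilon$), not on $|1-\mu|$.
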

\begin{remark}
    The second condition in \eqref{no_mass_inflation_assumptions} is never satisfied if $p=2$, due to \eqref{basic_relation}. Hence, $H^1$ extensions can be constructed for our $W^{1, p}$ solutions (see definition \ref{def_W1psolution}) only if $p>2$.

    Even though we do not attempt to prove it in this paper, we are confident that theorem \ref{thm:no_mass_inflation} can be similarly proved for the case of a (possibly massive) \textbf{charged} scalar field. Indeed, the bootstrap proof of the previous sections does not rely on monotonicity results and can be easily adapted to the charged case. 
    
    The instability results that we will present in section \ref{section:mass_inflation}, however, do not generalize to the case of a charged scalar field in a straightforward way.
\end{remark}
\begin{proof}[Proof of theorem \ref{thm:no_mass_inflation}]
    The proof is analogous to that of \cite[theorem 5.1]{Rossetti} (where we can set the scalar field mass to be zero and the black hole charge to be constant), where now \eqref{appendix_varpi_boundedness}, \eqref{appendix_lognumu_eqn}, \eqref{appendix_lambda_modified} are used in the place of the equations of the smooth formulation of the  PDE system. 
    
    In particular, by following the same proof and using the estimates of the present paper, we get the Hawking mass is bounded if\footnote{This is related to the convergence of the integral $\int_{v_Y(u)}^v \frac{(\partial_v \phi)^2}{|\lambda |}(u, v')dv'$, which can be estimated from above by the integral of the term $\exp((-2c_p(s) + 2K_{-} + C(\beta, \varepsilon))v)$.}
    \[
    -2c_p(s)+2K_{-}+C(\beta, \epsilon) < 0.
    \]
    This is true if
    \[
    s > K_{-} \quad \text{ and } \quad \rho < 2\br{1 - \frac{1}{p}}.
    \]
The above is a non-empty region only for $p>2$.
\end{proof}

\subsection{\texorpdfstring{Construction of $H^1$ extensions}{Construction of H1 extensions}}
As already seen  in the smooth case \cite{CGNS4, Rossetti}, the absence of mass inflation is a sufficient condition to construct metric extensions of $H^1_{\text{loc}}$ regularity beyond the Cauchy horizon.

We omit the proof of this result in the rough case, since it is a straightforward adaptation of the proofs of \cite[theorem 12.3]{CGNS4} and \cite[theorem 7.5]{Rossetti}: it is sufficient to exploit the coordinate systems $(u, \tilde v)$ and $(u, \mathring v)$, where
\[
    \frac{d \mathring v}{dv} = \Omega^2(U, v), \quad \text{ and } \quad \tilde{v}(v) \coloneqq r(U, v_{\mathcal{A}}(U))-r(U, v).
\]
These two charts are $C^1-$compatible even in the rough setting. The main bounds for the metric components and Christoffel symbols then follow from the integral formulation of the PDE system.

\section{\texorpdfstring{Mass inflation when $p \ge 2$}{Mass inflation when p >= 2}} \label{section:mass_inflation}

In the following, we exploit the proofs in \cite{Dafermos_2005_uniqueness} (case $\Lambda = 0$) and \cite{CGNS3} (case $\Lambda > 0$), where mass inflation results were proved in terms of a polynomial and an exponential, respectively, pointwise lower bound along the initial outgoing hypersurface. Differently from these papers, however, we also employ the mild blow-up encoded in the transversal derivatives along $\mathcal{H}^+$ to propagate the good signs of $\partial_u \phi$ and $\partial_v \phi$ through the non-trapped region $I^{-}(\mathcal{A})$. In particular, our construction of the initial data plays a key role in proving mass inflation for a larger set of black hole parameters, compared to the smooth case with $\Lambda > 0$ (see already remark \ref{remark:comparison}).

Our instability results hold under the validity of assumptions \eqref{rough_data_general}--\eqref{price_law_lower}, i.e.\ they hold for the (infinitely many)  non-smooth initial data that we consider. Although this does not imply that all $W^{1,p}$ solutions have a divergent mass at $\mathcal{CH}^+$, still it does entail that the smooth initial data giving $H^1$ extensions are non-generic in the ``positive co-dimension'' sense described in section \ref{section:intro}.

From now on we assume that \eqref{chosen_initial_data2} and \eqref{rough_data_general_v} hold. In particular, for every $\varepsilon > 0$ and for every $u$ small, we have:
\begin{equation} \label{pointwise_duphi}
    u^{-\frac{1}{p}+\varepsilon} \lesssim |\partial_u \phi|_{|\underline{C}_{v_0}}(u) \lesssim u^{-\frac{1}{p}-\varepsilon}.
\end{equation}
We also recall that 
\begin{equation}
    \partial_v \phi_{|\mathcal{H}^+}(v) = (\partial_v \phi)_0(v) +\omega f_2(v), \quad \forall \, v \in [v_0, +\infty),
\end{equation}
that $(\partial_v \phi)_0 \in C^0([v_0, +\infty))$ satisfies \eqref{price_law_upper} and that $f_2 > 0$,  $f_2  \in C^0([v_0, +\infty))$ satisfies \eqref{price_law_lower}. Moreover, in the large $v$ limit, the leading order contribution to $\partial_v \phi_{|\mathcal{H}^+}$ is given by $f_2$:
\begin{equation} \label{asympt_H}
    \limsup_{v \to +\infty} \frac{(\partial_v \phi)_0(v)}{ f_2(v)} = 0.
\end{equation}

\begin{remark}[On the signs of the initial data] \label{remark:signs_id}
    Under our assumptions, for every $\omega > 0$ there exists $U, V > 0 $ such that
    \[
    (\partial_v \phi)_{|\mathcal{H}^+}(v) > 0 \quad \text{ and } \quad
    (\partial_u \phi)_{|\underline{C}_{v_0}}(u) > 0,
    \]
     for every $v > V$ and for every $u < U$, respectively.
    Similarly, for every $\omega < 0$ there exists $U, V > 0$ such that:
    \[
    (\partial_v \phi)_{|\mathcal{H}^+}(v) < 0 \quad \text{ and } \quad
    (\partial_u \phi)_{|\underline{C}_{v_0}}(u) < 0,
    \]
     for every $v > V$ and for every $u < U$, respectively.
     Notice that we can take $V = v_0$ by a suitable choice of $f_2$.
    Furthermore, without loss of generality,\footnote{As stressed in \cite{Dafermos_2005_uniqueness} and \cite{CGNS4}, the system \eqref{wave_r}--\eqref{raych_v} is symmetric under the substitution $(\partial_u \phi, \partial_v \phi) \to (-\partial_u \phi, -\partial_v \phi)$.} \textbf{we only deal with the case $\bm{\omega > 0}$}.
\end{remark}

\begin{lemma} \label{lemma:positive_trapped}
    Let us assume that $\omega > 0$ in \eqref{rough_data_general} and \eqref{rough_data_general_v}. Then, under our assumptions, for every $U > 0$ small we have 
    \[
        \partial_v \phi (u, v) > 0 \quad \text{ and } \quad \partial_u \phi(u, v) > 0,
    \]
    when $(u, v) \in \mathcal{P} = [0, U] \times [v_0, +\infty)$. 
\end{lemma}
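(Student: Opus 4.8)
The plan is to run a continuity/bootstrap argument that simultaneously propagates the two strict inequalities, exploiting the favourable sign of the source terms in the first-order form of the wave equation. By remark \ref{remark:signs_id}, the hypothesis $\omega>0$ (to which we reduce via the symmetry $(\partial_u\phi,\partial_v\phi)\mapsto(-\partial_u\phi,-\partial_v\phi)$) furnishes, for $U$ small, the initial positivity
$$
(\partial_u\phi)_{|\underline C_{v_0}}(u)>0,\quad u\in(0,U],\qquad (\partial_v\phi)_{|\mathcal H^+}(0,v)>0,\quad v\ge v_0.
$$
Throughout I would use the two identities $\partial_v(r\partial_u\phi)=-\nu\,\partial_v\phi$ and $\partial_u(r\partial_v\phi)=-\lambda\,\partial_u\phi$ (namely \eqref{appendix_rduphi} and \eqref{appendix_rdvphi}) together with the uniform sign $\nu<0$ from remark \ref{remark:signs}. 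Since $\partial_u\phi$ is only $L^p_u$, positivity of $\partial_u\phi$ is to be understood for a.e.\ $u$ and pointwise in $v$, which is legitimate because $v\mapsto r\partial_u\phi(u,\cdot)$ is $C^1$ for a.e.\ $u$ (remark \ref{remark:regularity}).

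The coupling closes easily once positivity is known up to the apparent horizon. Granting $\partial_v\phi>0$ on a past set, integrating $\partial_v(r\partial_u\phi)=-\nu\,\partial_v\phi$ in $v$ from $\underline C_{v_0}$ and using $-\nu>0$ gives $r\partial_u\phi(u,v)\ge (r\partial_u\phi)(u,v_0)>0$, so positivity of $\partial_u\phi$ propagates upward in $v$. In the trapped region $J^+(\mathcal A)$, where $\lambda<0$, integrating $\partial_u(r\partial_v\phi)=-\lambda\,\partial_u\phi$ in $u$ with $-\lambda>0$ and $\partial_u\phi>0$ shows $r\partial_v\phi$ is increasing in $u$, so positivity of $\partial_v\phi$ on $\mathcal A$ carries into the entire trapped region. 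Hence everything reduces to establishing $\partial_v\phi>0$ throughout the closed non-trapped region $\overline{I^-(\mathcal A)}$, which by section \ref{section:AH} is contained in the redshift region.

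The crux is therefore the non-trapped region, where $\lambda>0$ makes the identity for $r\partial_v\phi$ have an unfavourable sign: integrating in $u$,
$$
r\partial_v\phi(u,v)=r\partial_v\phi(0,v)-\int_0^u\lambda(u',v)\,\partial_u\phi(u',v)\,du',
$$
and since the integrand is positive, $r\partial_v\phi$ decreases in $u$ and attains its minimum on $\mathcal A$. I would show the boundary term strictly dominates the ingoing-flux integral for every $v\ge v_0$. The Price-law lower bound \eqref{price_law_lower} (with $f_2$ the leading contribution to $(\partial_v\phi)_{|\mathcal H^+}$) yields $r\partial_v\phi(0,v)\gtrsim e^{-l(s)v}$; the monotonicity \eqref{R:monotonicity} together with \eqref{EH:lambda} gives $0<\lambda(u',v)\le\lambda(0,v)\le C_{\mathcal H}e^{-2sv}$ in this region; and, crucially, since the blow-up of the data precludes any pointwise control of $\partial_u\phi$, I would substitute the integral bound \eqref{redshift:duphi_L1}, $\|\partial_u\phi(\cdot,v)\|_{L^1_u([0,u])}\lesssim\delta e^{-c_p(s)v}$. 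Combining, the flux integral is $\lesssim\delta\,e^{-(2s+c_p(s))v}$; comparing decay rates—using $l(s)<3s$ and the size of the redshift exponent entering $c_p(s)$—this is of strictly higher order than $e^{-l(s)v}$ for large $v$, while for moderate $v$ one uses the continuity and positivity of $(\partial_v\phi)_{|\mathcal H^+}$ together with the smallness of $U$ (hence of $\delta$). This keeps $r\partial_v\phi>0$ across the non-trapped region and, with the previous paragraph, closes the bootstrap everywhere in $\mathcal P$.

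The main obstacle is precisely this last step: verifying, for all $v\ge v_0$ and not merely asymptotically, that the ingoing-flux integral $\int_0^{u_{\mathcal A}(v)}\lambda\,\partial_u\phi\,du'$ is strictly beaten by $r\partial_v\phi(0,v)$. This is where the lower bound \eqref{price_law_lower} and the constraint $l(s)<3s$ are indispensable, where one is forced to work with $L^1$/$L^p$ rather than pointwise estimates for the singular $\partial_u\phi$, and where the quantitative decay rates from the interior construction (proposition \ref{prop:redshift})—and, where the crude bound is insufficient, the shrinking of the non-trapped region and the vanishing of $\lambda$ towards $\mathcal A$—must be tracked with care. The continuity argument itself must also be phrased compatibly with the mere $L^p_u$ regularity of $\partial_u\phi$, i.e.\ with positivity interpreted in the a.e.\ sense along almost every ingoing ray.
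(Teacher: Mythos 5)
Your proposal is correct and follows essentially the same route as the paper: reduce to $\omega>0$, establish positivity of $\partial_v\phi$ in the non-trapped region by showing the boundary term $r\partial_v\phi(0,v)\gtrsim e^{-l(s)v}$ dominates the ingoing flux, bounded via the monotonicity \eqref{R:monotonicity}, the bound $\lambda(0,v)\lesssim e^{-2sv}$ and the $L^1_u$ estimate \eqref{redshift:duphi_L1} by $\delta\, e^{-(2s+c_p(s))v}$, then propagate both signs into $J^+(\mathcal{A})$ using the favourable signs in \eqref{appendix_rdvphi}--\eqref{appendix_rduphi}. The only cosmetic difference is that you phrase the non-trapped-region step as a bootstrap, whereas the paper obtains it as a direct, sign-independent estimate (using $|\partial_u\phi|$), but since your flux bound also only uses the $L^1$ norm, the two arguments coincide in substance.
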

\begin{proof}

    First, we analyse the region $\mathcal{R} \cup \mathcal{A} = \{\lambda \ge 0\}$ by following the analysis of \cite{CGNS4}  in the smooth case. Recall the parametrization of the apparent horizon introduced in section \ref{section:AH}. Then, by \eqref{wave_r_alternative} and since $-(\nu \kappa)(u, v) \sim \Omega^2(u, v) \sim \Omega^2(0, v)$ and $K(u, v) = K_+ + O(\delta)$ for every $(u, v)$ in the redshift region:
    \[
        0 -\lambda(0, v) = \int_0^{u_{\mathcal{A}}(v)} \br{\nu \kappa \cdot 2K}(u', v)du' \sim -\Omega^2(0, v)u_{\mathcal{A}}(v),
    \]
    even in our low-regularity setting.
    Hence, by using \eqref{EH:lambda} and \eqref{EH:omega}:
    \begin{equation} \label{u_app_upper}
        u_{\mathcal{A}}(v) \sim \frac{\lambda(0, v)}{\Omega^2(0, v)} \lesssim e^{-(2K_+ + 2s)v}, \quad \forall\, v \ge v_1.
    \end{equation}
    A similar computation also gives:\footnote{Alternatively, the improved bound for $\lambda$ in the regular region can be derived from \eqref{R:monotonicity}.}
    \[
        \lambda(u, v) \lesssim \lambda(0, v) \lesssim e^{-2sv},
    \]
    (compare with estimate \eqref{redshift:lambda_bdd}, that holds in the  redshift region).
    Moreover, \eqref{appendix_rdvphi}, \eqref{R:monotonicity} and \eqref{redshift:duphi_L1} yield
    \[
    |\br{r \partial_v \phi}(u, v) - \br{r \partial_v \phi}(0, v)| \lesssim \int_0^u \lambda(0, v) |\partial_u \phi|(u', v)du' \lesssim \delta e^{-2sv} e^{-c_p(s)v}, \quad \forall \, (u, v) \in J^{-}(\mathcal{A}).
    \]
    Hence, by \eqref{price_law_lower},  \eqref{price_law_upper},  \eqref{asympt_H} and remark \ref{remark:signs_id}:
    \begin{equation} \label{dvphi_bounds_regular_region}
        e^{-l(s) v} \lesssim \partial_v \phi(0, v) - \delta e^{-(2s + c_p(s)) v} \lesssim \partial_v \phi(u, v) \lesssim \partial_v \phi (0, v) + \delta e^{-(2s + c_p(s)) v} \lesssim e^{-sv}, 
    \end{equation}
    for every $(u, v) \in J^{-}(\mathcal{A})$. In particular, $\partial_v \phi$ is positive in $J^{-}(\mathcal{A})$.
    By \eqref{appendix_rduphi} and remark \ref{remark:signs_id}, we have that $\partial_u \phi > 0$ in $J^{-}(\mathcal{A})$ as well.

    By continuity of $\partial_u \phi$ and $\partial_v \phi$ away from the event horizon, it follows that they are positive in a neighbourhood of the apparent horizon. Then, similarly to how it is seen in \cite{Dafermos_2005_uniqueness, CGNS4}, the good signs in \eqref{appendix_rdvphi} and \eqref{appendix_rduphi} imply that $\partial_v \phi, \partial_u \phi > 0$ in $J^+(\mathcal{A})$ as well.\footnote{If, by contradiction, $(u, v) \in I^+(\mathcal{A})$ is the first point at which either $\partial_u \phi(u, v) = 0$ or $\partial_v \phi(u, v) = 0$, then integrating \eqref{appendix_rdvphi} and \eqref{appendix_rduphi} gives a contradiction.}

\end{proof}

\begin{lemma}[Monotonicity of the Hawking mass] \label{lemma:monotonicity}
We have:
\[
\varpi(u_2, v_2) - \varpi(u_2, v_1) - \varpi(u_1, v_2) + \varpi(u_1, v_1) \ge 0, 
\]
for every $(u_1, v_1), (u_2, v_2) \in J^+(\mathcal{A})$ such that $(u_2, v_2) \in J^+(u_1, v_1)$.    
\end{lemma}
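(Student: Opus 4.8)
The plan is to reduce the claimed second-difference inequality to the non-negativity, in an integrated sense, of the mixed derivative $\partial_u\partial_v\varpi$ throughout the trapped rectangle, and then to integrate. First I would record the two first-order mass equations. From the definition \eqref{def_varpi}, the Raychaudhuri equations \eqref{raych_u}--\eqref{raych_v} and the wave equation \eqref{wave_r} one obtains, after the cancellations typical of this computation,
\[
\partial_v\varpi = -\frac{2\nu\, r^2 (\partial_v\phi)^2}{\Omega^2}, \qquad \partial_u\varpi = -\frac{2\lambda\, r^2(\partial_u\phi)^2}{\Omega^2},
\]
the first being the form I will differentiate, since $\partial_v\phi,\nu,\Omega^2,r$ are all continuous whereas $\partial_u\phi$ is only $L^p_u$ (this is why I differentiate $\partial_v\varpi$ rather than $\partial_u\varpi$). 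Because $\varpi\in C^0_uC^1_v$, for each fixed $u$ we have $\varpi(u,v_2)-\varpi(u,v_1)=\int_{v_1}^{v_2}\partial_v\varpi(u,v')\,dv'$, so the left-hand side of the lemma equals $\int_{v_1}^{v_2}\big[\partial_v\varpi(u_2,v')-\partial_v\varpi(u_1,v')\big]\,dv'$. It therefore suffices to show that $u\mapsto\partial_v\varpi(u,v)$ is non-decreasing on the rectangle.

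Next I would compute the $u$-variation of $\partial_v\varpi$. Using the integrated Raychaudhuri identity $\partial_u(\nu/\Omega^2)=-r(\partial_u\phi)^2/\Omega^2$ from \eqref{raych_u} for the factor $\nu/\Omega^2$, and the identity $\partial_u(r\,\partial_v\phi)=-\lambda\,\partial_u\phi$ (this is \eqref{appendix_rdvphi}, a consequence of \eqref{wave_phi}) to evaluate $\partial_u\big(r^2(\partial_v\phi)^2\big)=-2r\lambda\,\partial_u\phi\,\partial_v\phi$, the integrated Leibniz rule (lemma \ref{lemma:integral_leibniz_rule}) gives
\[
\partial_v\varpi(u_2,v)-\partial_v\varpi(u_1,v)=\int_{u_1}^{u_2}\left[\frac{2 r^3 (\partial_u\phi)^2(\partial_v\phi)^2}{\Omega^2}-r(1-\mu)\,\partial_u\phi\,\partial_v\phi\right](u',v)\,du',
\]
where I have used $1-\mu=\lambda/\kappa=-4\nu\lambda/\Omega^2$, which follows directly from \eqref{def_varpi}, \eqref{def_kappa} and \eqref{def_mu}. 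Every term in the integrand lies in $L^1_u$, since $(\partial_u\phi)^2\in L^{p/2}_u\subseteq L^1_u$ and $\partial_u\phi\in L^p_u\subseteq L^1_u$ while all remaining factors are continuous; thus the expression is well-defined in our low-regularity setting.

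The sign analysis is then immediate in the trapped region. The closed rectangle $[u_1,u_2]\times[v_1,v_2]$ is contained in $J^+(\mathcal{A})$, because $(u_1,v_1)\in J^+(\mathcal{A})$ and $J^+(\mathcal{A})$ is a future set; by the structure of the apparent horizon established in section \ref{section:AH} (together with the monotonicity \eqref{R:monotonicity}) one has $\lambda\le 0$ there, whence $1-\mu=\lambda/\kappa\le 0$ as $\kappa>0$. Moreover Lemma \ref{lemma:positive_trapped} yields $\partial_u\phi>0$ and $\partial_v\phi>0$ on $\mathcal{P}$, so both terms of the integrand are non-negative a.e.; hence $\partial_v\varpi(u_2,v)\ge\partial_v\varpi(u_1,v)$, and integrating this in $v$ over $[v_1,v_2]$ gives exactly the claimed inequality.

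The routine algebra I have suppressed is only the cancellation producing the two first-order mass equations and the expansion $\partial_u\big(r^2(\partial_v\phi)^2\big)=-2r\lambda\,\partial_u\phi\,\partial_v\phi$. The genuine point requiring care — and the step I expect to be the main obstacle — is the rigorous justification of the mixed-derivative identity at $W^{1,p}$ regularity: since $\partial_u\phi$ is merely $L^p_u$, the product rule must be applied through the integrated (weak) formulation, using that $\nu/\Omega^2$, $r$ and $\partial_v\phi$ are all absolutely continuous in $u$ with $L^1_u$ derivatives (the last via \eqref{appendix_rdvphi}), so that lemma \ref{lemma:integral_leibniz_rule} applies to their products. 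Alternatively, one may approximate by the smooth solutions furnished by the Picard-iteration scheme of section \ref{section:rough_LWP}, prove the pointwise inequality $\partial_u\partial_v\varpi\ge0$ there, and pass to the limit; the signs of $\partial_u\phi$, $\partial_v\phi$ and of $1-\mu$ are stable under this approximation. The underlying sign structure is exactly the mechanism driving the mass-inflation arguments of \cite{Dafermos_2005_uniqueness, CGNS3}.
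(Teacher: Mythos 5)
Your proposal is correct and takes essentially the same route as the paper: the paper's own proof simply defers to the mixed-derivative monotonicity arguments of \cite{Dafermos_2005_uniqueness} and \cite{CGNS3} adapted to the integral formulation via \eqref{appendix_u_varpi_eqn} and \eqref{appendix_v_varpi_equation}, which is precisely the computation you carry out explicitly (the integrand $\frac{2r^3(\partial_u\phi)^2(\partial_v\phi)^2}{\Omega^2}-r(1-\mu)\,\partial_u\phi\,\partial_v\phi$ being non-negative in $J^+(\mathcal{A})$ by $\lambda\le 0$, $\kappa>0$ and lemma \ref{lemma:positive_trapped}). The one refinement worth noting is that lemma \ref{lemma:integral_leibniz_rule} as stated requires a $C^1$ factor while both $\nu/\Omega^2$ and $(r\partial_v\phi)^2$ are merely absolutely continuous in $u$, so your Leibniz step should invoke the product rule for absolutely continuous functions (or, in the density alternative, first pass the sign-agnostic identity to the limit and only then apply the sign analysis, since positivity of $\partial_u\phi$ need not survive for the smooth approximants).
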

\begin{proof}
    The proof is analogous to those in \cite[section 12]{Dafermos_2005_uniqueness} or \cite[Corollary B.2]{CGNS3}. The main difference here is that now this result follows from the integral formulation of the system, i.e.\  \eqref{appendix_u_varpi_eqn} and \eqref{appendix_v_varpi_equation} are exploited.
\end{proof}

For the following steps, it is useful to recall the fundamental inequality
\begin{equation} \label{basic_relation}
    \rho \coloneqq \frac{K_{-}}{K_+} > 1,
\end{equation}
which holds for sub-extremal Reissner-Nordstr{\"o}m-de Sitter solutions \cite[Appendix A]{CGNS3}.

\begin{theorem}[Mass inflation] \label{thm:mass_inflation}
    Under the assumptions of section \ref{section:assumptions}, 
    if we additionally require that
     either
    \[
    \rho > 2\br{1 - \frac{1}{p}}
    \]
    or 
    \[
     l(s) < K_{-}
    \]
    holds, 
    where $e^{-l(s)} \lesssim \partial_v \phi_{|\mathcal{H}^+}(0, v) < e^{-sv}$, 
    then
    \[
    \lim_{v \to +\infty} \varpi(u, v) = +\infty,
    \]
    for every $u \in (0, U]$, 
    provided that $U > 0$ is sufficiently small compared to the initial data. In particular, for $p=2$, we have mass inflation for every parameter of the reference black hole and for every $s > 0$ in the exponential Price law bounds.
\end{theorem}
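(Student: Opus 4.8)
The plan is to integrate the evolution equation for the Hawking mass along an ingoing ray $\{u\}\times[v_0,+\infty)$ and to show that the integral diverges as $v\to+\infty$. By remark~\ref{remark:signs_id} we may take $\omega>0$, so that Lemma~\ref{lemma:positive_trapped} gives $\partial_u\phi,\partial_v\phi>0$ throughout $\mathcal{P}$ and Lemma~\ref{lemma:monotonicity} gives the monotonicity of $\varpi$ in $J^+(\mathcal{A})$. The starting point is the integrated Hawking-mass identity \eqref{appendix_v_varpi_equation}, which in these variables reads $\partial_v\varpi=-2r^2\nu(\partial_v\phi)^2/\Omega^2=r^2(\partial_v\phi)^2/(2\kappa)\ge 0$, so that for fixed $u\in(0,U]$,
\[
\varpi(u,v)=\varpi(u,v_\gamma(u))+\int_{v_\gamma(u)}^{v}\frac{r^2(\partial_v\phi)^2}{2\kappa}(u,v')\,dv'.
\]
It therefore suffices to prove that this integrand fails to be integrable under either hypothesis.

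First I would extract the blueshift factor. In the late-blueshift region \eqref{blueshift:dvlog} yields $\partial_v\log\Omega^2=-2K_-+O(\varepsilon)$; combining this with $\Omega^2=4|\nu|\kappa$ and controlling $\partial_v\log\kappa=\partial_v\log\Omega^2-\partial_v\log|\nu|$ (where $\partial_v\log|\nu|=2K\kappa=o(1)$ near $\mathcal{CH}^+$, since $\kappa\to 0$ while $K$ stays bounded) gives a lower bound $\kappa^{-1}(u,v)\gtrsim e^{(2K_--C\varepsilon)v}$ for fixed $u$ and $v$ large. Hence $\partial_v\varpi\gtrsim r^2(\partial_v\phi)^2\,e^{(2K_--C\varepsilon)v}$, and since $r\sim r_->0$ there, the entire problem reduces to a lower bound for $\partial_v\phi$ that decays strictly slower than $e^{-K_-v}$.

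The two hypotheses supply two independent such lower bounds. For $l(s)<K_-$ I would propagate the regular-region bound \eqref{dvphi_bounds_regular_region}, namely $\partial_v\phi\gtrsim e^{-l(s)v}$ in $J^-(\mathcal{A})$, across the apparent horizon: the good sign $\partial_u(r\partial_v\phi)=-\lambda\partial_u\phi\ge 0$ in the trapped region (equation \eqref{appendix_rdvphi}) makes $r\partial_v\phi$ nondecreasing in $u$, so $r\partial_v\phi(u,v)\ge r\partial_v\phi(u_{\mathcal{A}}(v),v)\gtrsim e^{-l(s)v}$ once $v$ is large enough that $u_{\mathcal{A}}(v)<u$. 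This gives $\partial_v\varpi\gtrsim e^{(2K_--2l(s)-C\varepsilon)v}$, non-integrable precisely when $l(s)<K_-$. For $\rho>2(1-1/p)$ I would instead exploit the ingoing flux: since $\partial_v(r\partial_u\phi)=-\nu\partial_v\phi>0$ (equation \eqref{appendix_rduphi}, with $\nu<0$ and $\partial_v\phi>0$), the quantity $r\partial_u\phi$ is nondecreasing in $v$, so the initial lower bound \eqref{pointwise_duphi} propagates to $\partial_u\phi(u',v)\gtrsim u'^{-1/p+\varepsilon}$ everywhere. Restricting the flux $r\partial_v\phi(u,v)\ge\int_{u_{\mathcal{A}}(v)}^u(-\lambda)\partial_u\phi\,du'$ to the no-shift sub-interval, where $-\lambda\gtrsim 1$ and, by \eqref{noshift:u_relation}, $u'\sim e^{-2K_+v}$ over a window of comparable width, produces $r\partial_v\phi(u,v)\gtrsim e^{-(2K_+/p'+C\varepsilon)v}$. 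Feeding this into the blueshift bound gives an integrand $\gtrsim e^{(2K_--4K_+/p'-C\varepsilon)v}$, which is non-integrable exactly when $K_->2K_+/p'$, i.e.\ $\rho>2/p'=2(1-1/p)$; the sub-extremality $\rho>1$ of \eqref{basic_relation} is used throughout.

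The main obstacle is the rigorous propagation of these pointwise lower bounds in the integrated, $W^{1,p}$ setting: only $L^p$ control of $\partial_u\phi$ is available, so the monotonicity statements for $r\partial_u\phi$ and $r\partial_v\phi$ must be justified from the integral formulation of \eqref{wave_phi} (using the good signs of Lemma~\ref{lemma:positive_trapped} together with the same approximation and density argument employed in the construction of the interior), and one must localize the no-shift contribution to the flux and track the competition of the exponents $2K_-v$ against $2K_+/p'$ and $l(s)$ through the accumulated $\varepsilon,\beta,\eta$ losses, checking that the decisive inequality closes with a strict sign once these parameters are taken small. The statement for $p=2$ then follows because $2(1-1/p)=1<\rho$ always holds by \eqref{basic_relation}, so the condition $\rho>2(1-1/p)$ is satisfied for every reference black hole and every $s$.
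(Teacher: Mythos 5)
Your treatment of the signs and of the scalar-field lower bounds is essentially sound: Lemma \ref{lemma:positive_trapped} and the monotonicity of $r\partial_u\phi$ in $v$ and of $r\partial_v\phi$ in $u$ do propagate \eqref{pointwise_duphi} and \eqref{dvphi_bounds_regular_region} into the trapped region, and your ``no-shift window'' derivation of $r\partial_v\phi\gtrsim e^{-(2K_+/p'+C\varepsilon)v}$ (integrating $-\lambda\,\partial_u\phi$ over $[u_R(v),u_Y(v)]$, where $-\lambda\sim 1$ and $u'\sim e^{-2K_+v}$) is a legitimate variant of the computation that leads to \eqref{LB_dvphi} in the paper. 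The fatal step is your very first one: the claimed unconditional bound $\kappa^{-1}(u,v)\gtrsim e^{(2K_--C\varepsilon)v}$. Your derivation assumes ``$\kappa\to 0$ while $K$ stays bounded'', but neither statement is available a priori: if mass inflation occurs, then $K=\varpi/r^2-e^2/r^3-\Lambda r/3\to+\infty$; if it does not occur, then $\kappa$ need not decay at all. Indeed $\kappa\ge 1-\varepsilon$ along $\gamma$ (proposition \ref{proposition:early_blueshift}), and combining \eqref{blueshift:dvlog} with \eqref{appendix_nu_modified} gives, in the regime where $\varpi$ stays near $M$, the relation $\partial_v\log\kappa=-2K_-(1-\kappa)+O(\varepsilon)$, which produces no decay while $\kappa$ is close to $1$; for the Reissner-Nordstr{\"o}m-de Sitter background itself one has $\kappa\equiv 1$. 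So $1/\kappa$ simply does not carry the blueshift factor $e^{2K_-v}$ in precisely the regime you cannot yet exclude (it is the regime of theorem \ref{thm:no_mass_inflation}), and your direct integration of $\partial_v\varpi=r^2(\partial_v\phi)^2/(2\kappa)$ cannot close. The argument is circular: the collapse of $\kappa$ is a manifestation of the very instability you are trying to prove.

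The paper resolves this by arguing by contradiction, and that is where the blueshift factor legitimately enters. After the zigzag reductions (which your direct approach skips, but which are needed: one first excludes $\lim_{u\to 0}\varpi(u,+\infty)>M$ and the case $I\equiv+\infty$, where $I(u)\coloneqq-\int_{v_\gamma(u)}^{+\infty}r^2(\partial_v\phi)^2/\lambda\,dv'$), one is left with the scenario $\varpi\to M$ and $I(u)<+\infty$. Only under this hypothesis do \eqref{appendix_lognumu_eqn} (whose use requires $I(u)<\infty$) and $K=-K_-+o_u(1)$ yield the upper bound \eqref{UB_lambda}, namely $-\lambda\le e^{-(2K_-+O(\beta))v}$; feeding the $\partial_v\phi$ lower bounds into $I(u)$ then forces $I(u)=+\infty$, a contradiction. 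In other words, the growth factor $e^{(2K_-+O(\beta))v}$ must be extracted from $1/|\lambda|$ \emph{conditionally} on the failure of mass inflation, not from $1/\kappa$ unconditionally. Your exponent bookkeeping ($2K_--4K_+/p'>0$ iff $\rho>2(1-1/p)$; $2K_--2l(s)>0$ iff $l(s)<K_-$; and the $p=2$ case via \eqref{basic_relation}) would survive verbatim inside that corrected framework.
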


The proof is an adaptation of the mass inflation proofs of  \cite[Section 12]{Dafermos_2005_uniqueness}, \cite[theorem 3.1]{CGNS3} to the case of an integral formulation of \eqref{main_system}. In the rough case we consider, however, we exploit the integral formulation of system \eqref{main_system} and the specific construction of our non-smooth initial data. Moreover, a careful analysis of the exponential terms is required so to show that mass inflation can be proved for a larger set of black hole parameters, compared to the smooth case. We report the proof in appendix \ref{appendix:proof_mass_inflation}.

\begin{figure}
    \centering
    \includegraphics[width=0.4\linewidth]{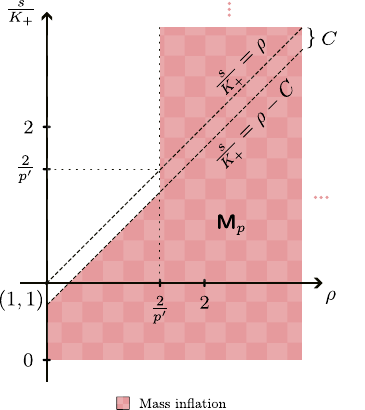}
    \caption{Region of parameters space corresponding to mass inflation, if $l(s) = s + C K_+$ in \eqref{price_law_lower} for some $C>0$ small. As $C \to 0$, a larger region of parameter space is covered. The case $p=p'=2$ is also covered here.}
    \label{fig:inflation_lowerbound}
\end{figure}

\begin{remark}[Comparison with the mass inflation results for the Einstein-Maxwell-scalar field system with smooth initial data] \label{remark:comparison}
    In the smooth case \cite{CGNS4} (where $\Lambda > 0$), mass inflation is proved under the additional constraint $\{ s < 2K_+\}$. The latter was used to show that $\partial_u \phi $ and $\partial_v \phi$ have the same sign along the apparent horizon $\mathcal{A}$. This statement is at the core of the mass inflation proof, since it is later used to exploit the good signs at the right hand side of \eqref{appendix_rdvphi} and \eqref{appendix_rduphi} in $J^+(\mathcal{A})$.

    In our non-smooth setting, the analogue of the above constraint is
    \[
        s < \frac{2K_+(1 - \frac{1}{p})}{1+ \frac{2}{p}},
    \]
    which gives the strict inequality $s < \frac{K_+}{2}$ in the case $p=2$ (see also fig.\ \ref{fig:main_results}). 

    The reason why this constraint can be avoided in the present paper (and mass inflation can be proved in a considerably larger region) is that we can exploit the mild blow-up encoded in our non-smooth initial data. For instance, due to our construction, we can easily determine the sign of the initial data along the initial hypersurfaces, as explained in remark \ref{remark:signs_id}.

    Finally, we stress that the situation is drastically different in the case $\Lambda = 0$.  Indeed, let us assume that the initial data for $\partial_u \phi$ and $\partial_v \phi$ are positive when considering \eqref{main_system} with $\Lambda = 0$. Further assume that, along the event horizon, the initial data satisfy polynomial lower bounds motivated by the asymptotic flatness of spacetime. Then, even for smooth data, these polynomial (rather than exponential) lower bounds   dominate over the exponentially small negative contributions arising from the redshift region. This allows to propagate the signs of $\partial_u \phi$ and $\partial_v \phi$ without imposing any constraint on the upper bound for the scalar field (see the case $\Lambda > 0$ \cite{CGNS4}) and without imposing lower regularity of the initial data (see the case of the present paper).
    
    In the case $\Lambda = 0$, mass inflation for the Einstein-Maxwell-real scalar field system has been proved conditionally to the validity of a pointwise lower bound for the scalar field along $\mathcal{H}^+$ \cite{Dafermos_2005_uniqueness} and, in a different work, conditionally to the validity of an upper bound\footnote{This instability result still relies on an integral (rather than pointwise) lower bound for $\partial_v \phi$. Such an integral lower bound was shown to hold for a generic class of initial data \cite{LukOh2}.}  on higher derivatives of the scalar field along $\mathcal{H}^+$ \cite{LukOhShlap}. In \cite{OnyxGautam}, it was recently proved that, for such a system, both the upper bounds on higher derivatives of $\partial_v \phi$ and the lower bound on $\partial_v \phi$ hold generically along $\mathcal{H}^+$. The latter was shown by exploiting the results in  \cite{LukOh24late}.
\end{remark}

\begin{remark}[$C^2$ inextendibility] \label{remark:C2}
    Mass inflation implies, in particular, the blow-up of the Kretschmann scalar. Being the latter a $C^2$ geometric quantity, in this case no spacetimes extensions of $C^2$ regularity can be constructed beyond the Cauchy horizon of the maximal past set that we constructed in section \ref{section:construction_interior}.

    Notably, there is  a subset of the parameter space $(s, \rho)$, for which $H^1$ extensions are allowed but $C^2$ extensions are not. This requires $p>2$ (so to have $H^1$ extensions, see section \ref{section:no_mass_inflation}) and the lower bounds in   \eqref{price_law_lower} and \eqref{pointwise_duphi} (to trigger the instability). In particular, if
    \begin{equation} \label{no_C2_extensions}
        l(s) < 2K_{-},
    \end{equation}
    then the Ricci scalar $R(u, v)$ blows up as $v \to +\infty$, for every $u \in (0, U]$. This follows, for instance, from a computation of the trace of the energy-momentum tensor in the late-blueshift region:
    \[
    |g^{\mu \nu}T_{\mu \nu}| \gtrsim \left | \frac{\partial_v \phi \partial_u \phi}{\Omega^2} \right | \gtrsim u^{-\frac{1}{p}+\varepsilon} e^{-l(s)v}e^{(2K_{-} + O(\varepsilon))(v - v_{\gamma}(u))}.
    \]
    It is worth noting that \eqref{no_C2_extensions} does not depend on $p$.
\end{remark}

\appendix
\section{Useful expressions} \label{appendix:useful_expressions}
The integrated equations stemming from definition \ref{def_integratedsolution} can often be put in a more convenient form.
To accomplish this, we exploit a basic formulation of the Leibniz rule under the low-regularity assumptions of definition \ref{def_W1psolution}. A similar version of the rule was already described in \cite{GajicLuk} (see equation (10.15) there).  In particular, the following is a consequence of the fundamental theorem of calculus for Lebesgue integrals \cite{Folland}.
\begin{lemma}[Integral Leibniz rule] \label{lemma:integral_leibniz_rule}
    Given  $A > 0$, let $h, g, F \colon [0, A] \times [0, A] \to \mathbb{R}$. Assume that  $g(u, \cdot) \in C^1_v$ for a.e.\ u,  $F \in    C^0_u L^1_v$, and that, for a.e.\ $u \in [0, A]$:
    \[
    h(u, v_2) - h(u, v_1)= \int_{v_1}^{v_2} F(u, v')dv', \quad \forall\, 0 \le v_1 < v_2 \le A.
    \]
    Then, $v \mapsto h(u, v)$ is differentiable in $v$ for a.e.\ $v$, for a.e.\ $u$. Moreover, for a.e.\ $u \in [0, A]$ and for every $0 \le v_1 < v_2 \le A$:
    \[
    \int_{v_1}^{v_2} F(u, v')g(u, v')dv' = h(u, v_2)g(u, v_2)-h(u, v_1)g(u, v_1)-\int_{v_1}^{v_2}h(u, v') \partial_v g(u, v')dv'.
    \]
   A similar statement holds if we invert the roles of the $u$ and $v$ coordinates.
    
\end{lemma}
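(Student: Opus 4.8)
The plan is to reduce the two-variable statement to a one-dimensional integration-by-parts identity applied slice by slice in the $v$ variable. I would fix a value of $u$ lying in the full-measure set on which the hypotheses hold, so that $v \mapsto h(u,v)$ is the indefinite Lebesgue integral of $F(u,\cdot) \in L^1_v([0,A])$ and $v \mapsto g(u,v)$ is $C^1$. The hypothesis $F \in C^0_u L^1_v$ guarantees that each $u$-slice $F(u,\cdot)$ genuinely lies in $L^1_v$, which is all the argument requires; the continuity in $u$ plays no further role in this particular lemma.

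First I would establish the differentiability assertion. Since $h(u,\cdot)$ is the indefinite integral of an $L^1_v$ function, it is absolutely continuous on $[0,A]$, and the fundamental theorem of calculus for Lebesgue integrals \cite{Folland} yields that $v \mapsto h(u,v)$ is differentiable for a.e.\ $v$ with $\partial_v h(u,v) = F(u,v)$ for a.e.\ $v$. Letting $u$ range over the full-measure admissible set gives the first claim in the stated ``for a.e.\ $v$, for a.e.\ $u$'' form.

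Next I would derive the integration-by-parts formula. Because $g(u,\cdot) \in C^1_v$ it is in particular absolutely continuous, and the product of two absolutely continuous functions on a compact interval is again absolutely continuous, with the Leibniz rule
\[
\partial_v\big(h(u,\cdot)\,g(u,\cdot)\big) = \partial_v h(u,\cdot)\,g(u,\cdot) + h(u,\cdot)\,\partial_v g(u,\cdot)
\]
holding for a.e.\ $v$. Substituting $\partial_v h = F$ and integrating this identity over $[v_1,v_2]$, the left-hand side integrates by the fundamental theorem of calculus (applied to the absolutely continuous product) to the boundary term $h(u,v_2)g(u,v_2) - h(u,v_1)g(u,v_1)$, and rearranging produces exactly the asserted formula. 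The version with the roles of $u$ and $v$ interchanged follows by the identical argument, with the hypotheses adapted accordingly (now $g(\cdot,v) \in C^1_u$ and $h$ the indefinite integral of $F$ in $u$).

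The only genuine subtlety — and the step I would be most careful about — is the product rule for absolutely continuous functions, which is precisely where the $C^1_v$ (rather than merely continuous or $L^1$) regularity of $g$ is used: absolute continuity of \emph{both} factors is exactly what guarantees that their product is absolutely continuous and that the pointwise Leibniz identity integrates back to the boundary values. Since this is a standard real-analysis fact requiring no dominated-convergence or compactness machinery beyond the Lebesgue FTC, the rest of the argument is routine.
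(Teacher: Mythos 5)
Your proof is correct and follows exactly the route the paper intends: the paper proves this lemma simply by invoking the fundamental theorem of calculus for Lebesgue integrals (citing Folland), and your argument---absolute continuity of $h(u,\cdot)$ as an indefinite integral, the a.e.\ product rule for absolutely continuous functions, and integration of that identity back to the boundary terms---is precisely the standard elaboration of that one-line justification, applied slice by slice in $u$. No gaps; the point you flag as the only subtlety (absolute continuity of the product) is indeed the crux, and you handle it correctly.
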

Here we list the main integral equations of interest in this work.\footnote{See also \cite{Rossetti, CGNS4}, where the differential versions of these relations were derived using smoothness assumptions.} For every $0 \le u_1 < u_2$ and for every $v$:
\begin{align}
    (r \partial_v \phi)(u_2, v) - (r \partial_v \phi)(u_1, v) &= -\int_{u_1}^{u_2} (\lambda \partial_u \phi)(u', v)du', \label{appendix_rdvphi} \allowdisplaybreaks \\
    |\lambda|(u_2, v) &= |\lambda|(u_1, v) e^{\int_{u_1}^{u_2} \br{\frac{\nu}{1-\mu}\cdot 2K }(u', v)du'} \label{appendix_lambda_modified}, \text{ if } \lambda(\cdot, v) \ne 0 \text{ in } [u_1, u_2],  \\
    (r\lambda)(u_2, v) - (r\lambda)(u_1, v)&= \int_{u_1}^{u_2}\br{\frac{\Omega^2 e^2}{4 r^2} + \frac{\Omega^2 \Lambda r^2}{4} - \frac{\Omega^2}{4} }(u', v)du', \label{appendix_rlambda_eqn} \\
    \kappa(u_2, v) - \kappa(u_1, v) &= \int_{u_1}^{u_2} \frac{r \kappa (\partial_u \phi)^2}{\nu}(u', v) du', \label{appendix_kappa_eqn} \\
    \int_{u_1}^{u_2} \left| \frac{\partial_u \kappa}{\kappa} \right| (u', v)du' &= \int_{u_1}^{u_2}\frac{r (\partial_u \phi)^2}{|\nu|}(u', v)du', \label{appendix_kappa2_eqn} \\
     \varpi(u_2, v) -  \varpi(u_1, v) &=-2\int_{u_1}^{u_2}  \br{ \frac{r^2 (\partial_u \phi)^2}{\Omega^2} \lambda } (u', v)du',\label{appendix_u_varpi_eqn} \\ 
     \partial_v \log \Omega^2(u_2, v) - 2K(u_2, v) &= \partial_v \log \Omega^2(u_1, v) - 2K(u_1, v) + 2K(u_2, v)(\kappa(u_2, v) - 1)  \nonumber \\
     &+2K(u_1, v)\br{\kappa(u_1, v) - 1}  \nonumber \\
    &-\int_{u_1}^{u_2} \br{ 2K \partial_u \kappa + 2 \partial_u \phi \partial_v \phi + \frac{2 \kappa \partial_u \varpi}{r^2} }(u', v)du', \label{appendix_log_eqn}
\end{align}
Moreover, for every $0 \le v_1 < v_2$:
\begin{align}
    (r \partial_u \phi)(u, v_2) - (r \partial_u \phi)(u, v_1) &= -\int_{v_1}^{v_2} (\nu \partial_v \phi)(u, v')dv', \label{appendix_rduphi} \\
     (r\nu)(u, v_2) - (r\nu)(u, v_1)&= \int_{v_1}^{v_2}\br{\frac{\Omega^2 e^2}{4 r^2} + \frac{\Omega^2 \Lambda r^2}{4} - \frac{\Omega^2}{4} }(u, v')dv', \label{appendix_rnu_eqn} \\
    \nu(u, v_2) &= \nu(u, v_1)e^{\int_{v_1}^{v_2} \kappa \cdot (2K)(u, v')dv'}, \label{appendix_nu_modified}\\
    \varpi(u, v_2) - \varpi(u, v_1) &= \int_{v_1}^{v_2} \frac{r^2 (\partial_v \phi)^2}{2 \kappa}(u, v')dv', \label{appendix_v_varpi_equation} \\
    \log \frac{\nu}{1-\mu}(u, v_2) - \log \frac{\nu}{1-\mu}(u, v_1) &= \int_{v_1}^{v_2} \frac{r (\partial_v \phi)^2}{\lambda}(u, v')dv', \quad \text{ if } \lambda(u, v') \ne 0 \text{ for }v' \in [v_1,  v_2],\label{appendix_lognumu_eqn}
\end{align}
where  relations \eqref{appendix_rnu_eqn}--\eqref{appendix_lognumu_eqn} hold for every $u$, whereas  \eqref{appendix_rduphi}  holds for a.e.\ u.

All the above equations are obtained from the system of integrated equations coming from \eqref{wave_r}--\eqref{raych_v} after applying lemma \ref{lemma:integral_leibniz_rule} (see also remark \ref{remark:regularity}). Equation \eqref{appendix_rdvphi} is obtained from the wave equation for $\phi$ and by lemma \ref{lemma:integral_leibniz_rule} with $g = r$, and analogously for \eqref{appendix_rduphi}.  Equation \eqref{appendix_log_eqn} follows from the integral formulation of \eqref{wave_Omega}, the definitions of $\varpi$ and $K$, the fact that
\[
-\frac{2\kappa \partial_u \varpi}{r^2} + \kappa \partial_u (2K) = -\frac{\Omega^2 e^2}{r^4} + \frac{\Omega^2}{2r^2} + \frac{2 \nu \lambda}{r^2},
\]
and integration by parts.

Equations \eqref{appendix_lambda_modified} and \eqref{appendix_nu_modified} are obtained from the wave equation for $r$ and by applying the lemma with $g = (\lambda)^{-1}$ and $g = (-\nu)^{-1}= |\nu|^{-1}$, respectively, and using the definitions of $K$, $\varpi$ and the facts that $\lambda = \kappa(1-\mu)$ and $\Omega^2 = -4 \nu \kappa$. Equation \eqref{appendix_rlambda_eqn} is obtained similarly, where however $g = r$ now (an analogous reasoning gives \eqref{appendix_rnu_eqn}). Equation \eqref{appendix_kappa_eqn} follows from the Raychaudhuri equation in $u$ and by applying lemma \ref{lemma:integral_leibniz_rule} with $g=\kappa^2$. Equation \eqref{appendix_kappa2_eqn} follows from the Raychaudhuri equation in $u$, as well, but $g=\kappa$ is applied in lemma  \ref{lemma:integral_leibniz_rule}. Equation \eqref{appendix_u_varpi_eqn} is also obtained from the Raychaudhuri equation, but here the lemma is applied to $g = -\frac{r \lambda}{2}$. This actually yields additional terms that can be removed after differentiating
\[
\lambda(u_2, v) -\lambda(u_1, v) = - \int_{u_1}^{u_2} \frac{2 \nu\kappa}{r^2} \br{\frac{e^2}{r} + \frac{\Lambda}{3}r^3 - \varpi}(u', v)du',
\]
and using the definition of $\varpi$ again.
Similarly, equation \eqref{appendix_v_varpi_equation} follows from the Raychaudhuri equation in $v$ and by applying lemma \ref{lemma:integral_leibniz_rule} with $g = 2 r \nu$. This gives additional terms that can be removed using that $\frac{2r \nu \lambda}{\Omega^2} = \varpi - \mathcal{E}$, with $\mathcal{E} = \frac{e^2}{2r} + \frac{r}{2} - \frac{\Lambda}{6}r^3$. Moreover, $\mathcal{E} \in C^1_v$, $\nu \in C^1_v$ (the latter follows from the regularity of the terms at the right hand side of \eqref{wave_r} in its integral formulation) and $\partial_v \mathcal{E} = \frac{2\lambda}{\Omega^2}(\lambda \nu + r(\nu \kappa \cdot 2K))$ (this follows from definitions \eqref{def_varpi} and \eqref{def_K}). 
Equation \eqref{appendix_lognumu_eqn} follows from the Raychaudhuri equation in $v$, lemma \eqref{lemma:integral_leibniz_rule} with $g= -\frac{\Omega^2}{\lambda}$ and from the fact that
\[
\frac{\lambda}{\Omega^2} = -\frac{1-\mu}{4 \nu \kappa}.
\]
We further notice that, using definitions \eqref{def_varpi} and \eqref{def_K}, equation \eqref{wave_r} can be expressed in its integral formulation as
\begin{equation} \label{wave_r_alternative}
    \lambda(u_2, v) = \lambda(u_1, v) + \int_{u_1}^{u_2} \left [  \nu \kappa \br{2K}  \right ] (u', v)du', \quad \forall\, 0 \le u_1 < u_2 \text{ and for every } v.
\end{equation}
Furthermore, \eqref{appendix_rdvphi} and lemma \ref{lemma:integral_leibniz_rule} with $g = r \partial_v \phi$ imply that, for every $0 \le u_1 < u_2$ and for every $v$:
\begin{equation} \label{appendix_square_eqn}
    \br{r \partial_v \phi}^2(u_2, v) - \br{r \partial_v \phi}^2(u_1, v)  = -2 \int_{u_1}^{u_2} \br{r \lambda  \partial_u \phi \partial_v \phi}(u', v)du'.
\end{equation}
Finally, we also notice that, once we differentiate \eqref{appendix_v_varpi_equation} in $v$, the right hand side is continuous. By using that $\lambda = \kappa (1- \mu)$ and \eqref{def_mu}, we obtain
\begin{align} 
    \varpi(u, v_2) &= \varpi(u, v_1) e^{- \int_{v_1}^{v_2} \frac{r (\partial_v \phi)^2}{\lambda}(u, v')dv' } \label{appendix_varpi_boundedness}\\
    &+ \int_{v_1}^{v_2} e^{-\int_{v'}^{v_2} \frac{r (\partial_v \phi)^2}{\lambda}(u, y)dy } \left [ \frac{r^2 (\partial_v \phi)^2}{2\lambda} \br{1 + \frac{e^2}{r^2} - \frac{\Lambda}{3}r^2 } \right ] (u, v')dv', \nonumber
\end{align}
for every $u$ and for every $0 \le v_1 < v_2$.

\section{\texorpdfstring{Proof of proposition \ref{prop:extension_criterion}}{Proof of proposition 2.9}} \label{appendix:extension}
\begin{proof}
We split the proof into two separate parts. First, we prove the uniform bound on $N(\mathcal{D})$ and, then, construct an explicit extension. \newline
\underline{\textbf{1. Bounded area-radius $\Longrightarrow$ finite norms.}} \newline
\textbf{Preamble}: in the following, we will use the fact that $\nu < 0$ in $\mathcal{D}$ (see remark \ref{remark:signs}) multiple times, and we will denote by $C$ any positive constant depending on one (or more) value(s) in $\{\varepsilon, L, R, N_{\text{i.d.}}\}$. \newline
\textbf{Bound on $\bm{\kappa}$}: it follows from \eqref{appendix_kappa2_eqn} (notice that $\partial_u \kappa < 0$, see e.g.\ \eqref{appendix_kappa_eqn}), that $0 \le \kappa \le 1$. \newline
\textbf{Bounded spacetime volume}: for every $(u, v) \in \mathcal{D}$, we  use \eqref{def_kappa} and the previous estimate on $\kappa$  to conclude: 
\begin{equation} \label{boundTuv}
0 \le \int_0^v \int_0^u \Omega^2(u', v') du' dv' = -4 \int_0^v \int_0^u (\nu \kappa)(u', v') du' dv' \le C.
\end{equation} 
\textbf{Preliminary integral bound on $\bm{|\lambda|}$}:
equation \eqref{appendix_rlambda_eqn}, the bounds on $\kappa$ and the bounds on $r$ give
\[
L \cdot  \sup_{u \in [0, \varepsilon)} |\lambda|(u, v) \le C\br{1 + \int_0^{\varepsilon}  \Omega^2(u', v)du'}, \quad \forall\, v \in [0, \varepsilon).
\]
A further integration in the $v$ variable, together with \eqref{boundTuv}, gives
\begin{equation} \label{preliminary_bound_lambda_extension}
\int_0^v \sup_{u \in [0, \varepsilon)} |\lambda|(u, v')dv' \le C, \quad \forall\, v \in [0, \varepsilon).
\end{equation}
\textbf{Uniform bound on $\bm{|\nu|}$}:
as a consequence of \eqref{appendix_nu_modified}, \eqref{assumption_nu}, \eqref{def_K} and \eqref{def_varpi}, we get:
\begin{equation} \label{bound_nu_extension}
|\nu|(u, v) = \exp \br{-\int_0^v \left[ \kappa \br{\frac{e^2}{r^3} - \frac{1}{r} + \Lambda r  + \frac{\lambda}{\kappa r} } \right ] (u, v')dv'} \le C,
\end{equation}
in $\mathcal{D}$. Since the above integral is absolutely integrable in $[0, \varepsilon)$, we can similarly obtain a uniform lower bound on $|\nu|$.\newline
\textbf{Preliminary bound on $\bm{r|\partial_v \phi|}$}:
 we observe that   \eqref{appendix_rdvphi} yields:
\begin{equation} \label{theta_preliminary}
r|\partial_v \phi|(u, v) \le \sup_{v \in [0, V')} |r\partial_v \phi|(0, v) + \int_0^u  |\lambda \partial_u \phi| (u', v) du',
\end{equation}
in $\mathcal{D}$. \newline
\textbf{$\bm{L^p_u}$ bound on $\bm{|r \partial_u \phi|}$}:
 expression \eqref{appendix_rduphi}, together with the uniform bounds on $r$ and $\nu$, gives, for every $v  \in [0, \varepsilon)$ and for a.e.\ $u \in [0, \varepsilon)$:
\begin{equation} \label{zeta_preliminary}
|r \partial_u \phi|(u, v) \le |r \partial_u \phi|(u, 0) + C \int_0^v |\partial_v \phi|(u, v') dv'.
\end{equation}
By plugging \eqref{theta_preliminary} in \eqref{zeta_preliminary} and using Fubini-Tonelli's theorem:
\[
|r \partial_u \phi|(u, v) \le |r \partial_u \phi|(u, 0) + C + C \int_0^u \int_0^v |\lambda \partial_u \phi | (u', v') dv' du',
\]
Moreover, for a.e.\ $u \in [0, \varepsilon)$ and for some fixed $0 < \gamma < \varepsilon$ (not to be confused with the curve $\gamma$ of section \ref{section:construction_interior}), the quantity
\[
\zeta_{\gamma}(u) \coloneqq  \max_{v \in [0, \varepsilon-\gamma]} |r\partial_u \phi|(u, v),
\]
is well-defined (see also remark \ref{remark:regularity}). Therefore, the previous lines give
\begin{equation} \label{zeta_preliminary_2}
\zeta_{\gamma}(u) \le  |r \partial_u \phi|(u, 0) + C \int_0^u \zeta_{\gamma}(u') \br{\int_0^{\varepsilon - \gamma} |\lambda|(u', v') dv'}du', \quad \text{ for a.e. }u, 
\end{equation}
where we emphasize that $ |r \partial_u \phi|(u, 0)$ is integrable in $u$.
We now use \eqref{preliminary_bound_lambda_extension} in \eqref{zeta_preliminary_2} and integrate to obtain
\[
\int_0^u \zeta_{\gamma}(u')du' \le C \br{ 1+ \int_0^u \int_0^{s} \zeta_{\gamma}(u') du' ds},
\]
By applying Gr{\"o}nwall's inequality:
\[
\int_0^u \zeta_{\gamma}(u') du' \le C,
\]
for every $\gamma > 0$ and for a possibly different value of $C$. Using the integrability of $\zeta_{\gamma}$ and taking the limit $\gamma \to 0$, we finally get an $L^1$ bound on $r \partial_u \phi$. Using the convexity of $|x| \mapsto |x|^p$, a similar procedure also yields an $L^p$ bound for $r \partial_u \phi$. \newline
\textbf{Uniform bound on $\bm{|\phi|}$}:
by using the fundamental theorem of calculus, we write $\phi(u, v) = \phi(0, v) + \int_0^u \partial_u \phi(u', v) du'$  and use the previous bound on $\partial_u \phi$, to have a bound on $\phi$. \newline
\textbf{Uniform (lower) bound on $\bm{\kappa}$}:
the estimates on $\partial_u \phi$ and the lower bound on $|\nu|$  can be applied to \eqref{appendix_kappa2_eqn} to bound $\kappa$ away from zero. Together with the bounds on $|\nu|$, this implies that $\Omega^2$ is uniformly bounded and bounded away from zero. \newline
\textbf{Remaining bounds}:
a uniform bound on $\lambda$ follows from \eqref{appendix_rlambda_eqn}, after noticing that every term at the right hand side of the equation is uniformly bounded.
The previous estimate on $\lambda$ can be applied to \eqref{theta_preliminary} to get a uniform bound on $\partial_v \phi$, since we have estimated each function appearing in the integrand terms of \eqref{theta_preliminary} in the $\sup_{\mathcal{D}}$-norm. Finally, by the integral formulation of \eqref{wave_Omega} and due to the uniform and integral bounds of the quantities at its right hand side, we have that $\partial_v \log \Omega^2$ is uniformly bounded and that $\partial_u \log \Omega^2$ is bounded in $ L^1_u$ norm. Finally, the $L^p_u$ bound for $\nu$ follows from its $L^{\infty}$ bound.
\newline
\underline{\textbf{2. Construction of the extension:}} 
 let $\delta, \gamma > 0$ be small and consider the initial value problem with initial data prescribed on 
\[
[0, \varepsilon-\gamma] \times \{\varepsilon-\delta\} \cup \{0\} \times  [\varepsilon-\delta, \varepsilon+\delta].
\]
\begin{figure}[H]
\centering
\includegraphics[width=0.5\textwidth]{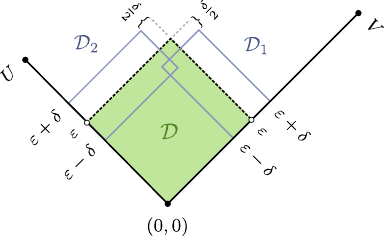}
\end{figure}
By the local existence result of proposition \ref{prop:local_existence}, and due to the uniform bound on $N(\mathcal{D})$ proved in the previous steps, there exists a minimum time of existence $\tilde \delta > 0$, whose size depends \textbf{only} on $N_{\text{i.d.}}$, such that we can construct a solution to our PDE system in $\tilde{\mathcal{D}}_1 \coloneqq [0, \tilde \delta] \times [\varepsilon-\delta, \varepsilon-\delta+\tilde \delta]$. Moreover, the norms of the solution in this domain are controlled by $N_{\text{i.d.}}$. Since we chose the minimum time of existence, this result holds whenever we translate the initial null segments along the $v$--direction, provided that the ingoing segment stays in $\mathcal{D}$ and that the outgoing segment is contained in $\{0\} \times [0, V]$. Moreover, we can extend this domain along the $u$-direction by iterating this procedure: since $\tilde \delta$ is the minimum time, it is possible to construct a solution in $[\tilde \delta, 2\tilde \delta] \times [\varepsilon-\delta, \varepsilon-\delta + \tilde \delta]$ and so on.

Without loss of generality, we therefore consider the new shifted rectangle to be $\mathcal{D}_1 \coloneqq \left[0, \varepsilon- \gamma \right] \times [\varepsilon-\delta, \varepsilon+\delta]$, where the value of $\delta$ is possibly different from the original one, and take $\gamma = \frac{\delta}{2}$.

The same construction can be followed with respect to the initial data prescribed on 
\[
[\varepsilon-\delta, \varepsilon+\delta] \times \{0\} \cup \{\varepsilon-\delta\} \times \left [0, \varepsilon-\frac{\delta}{2} \right].
\]
In particular, we define $\mathcal{D}_2 \coloneqq [\varepsilon-\delta, \varepsilon+\delta] \times [0, \varepsilon - \frac{\delta}{2}]$, for the same value of $\delta$ (if a smaller value of $\delta$ is required by the local existence theorem, we repeat the construction of $\mathcal{D}_1$ and $\mathcal{D}_2$ from scratch, starting from the ingoing null segment).

We now consider the point $p=\left(\varepsilon-\frac{\delta}{2}, \varepsilon-\frac{\delta}{2} \right)$. By performing a coordinate shift, we can set $p=(0, 0)$. We then construct a solution from the initial data prescribed on $\left[0, \frac32 \delta \right] \times \{0\} \cup \{0\} \times \left[0, \frac32 \delta \right]$. 
As a by-product of the local existence result, again, the  norms of the (continuous) solution in the compact sets $\mathcal{D}_1$ and $\mathcal{D}_2$ are controlled by $N_{\text{i.d.}}$. Therefore, for the initial value problem prescribed on these null segments, the minimum time of existence of solutions is again given by the previously obtained $\delta$.
Thus, we obtain a continuous solution in 
\[
\mathcal{D}_3 \coloneqq  \left [ 0, \frac32 \delta \right ] \times \left[0, \frac32 \delta \right],
\]
which is enough to conclude the proof.
\end{proof}

\section{\texorpdfstring{Proof of theorem \ref{thm:mass_inflation}}{Proof of theorem 5.4} } \label{appendix:proof_mass_inflation}
\begin{proof} 
    First, notice that, due to \eqref{appendix_v_varpi_equation}, $\varpi$ is monotonically increasing in the $v$ direction. Moreover, due to \eqref{appendix_u_varpi_eqn}, it is also increasing in the $u$ direction in $I^+(\mathcal{A})$. We can immediately exclude the following two scenarios:
    \begin{itemize}
        \item If $\lim_{v \to +\infty} \varpi(u, v) = +\infty$ for every $u \in (0, U]$, then the proof is completed. 
        \item  On the other hand, if there exists $u \in (0, U]$ such that  $\lim_{v \to +\infty} \varpi(u, v) < +\infty$ and if $\lim_{u \to 0} \varpi(u, +\infty)  > M$ (where $M$ is the mass of the reference black hole), then  the zigzag argument in \cite{Dafermos_2005_uniqueness} (see also \cite[theorem 3.1]{CGNS3}, where the argument was repeated in the $\Lambda > 0$ case) shows that a contradiction must occur.
    \end{itemize}

    Since, to the future of the apparent horizon, $\varpi$ is monotonically increasing in the $u$ direction and since  $\varpi(u_{\gamma}(v), v) \to M$ as $v \to +\infty$ (see  \eqref{earlyblueshift:varpi}), the only alternative left is the scenario in which $\lim_{u \to 0} \varpi(u, +\infty) = M$. In this case, define 
    \[
        I(u) \coloneqq -\int_{v_{\gamma}(u)}^{+\infty} \frac{r^2 (\partial_v \phi)^2}{\lambda}(u, v')dv',
    \]
    which is a positive quantity.
    Under our assumptions, the Hawking mass is close to $M$ for $u$ small, hence we have that $K \sim - K_{-} < 0$ in $J^+(\gamma)$ (see \eqref{def_K} and the bounds on $r$ in proposition \ref{prop:late_blueshift}). 
    Since, in $J^+(\gamma)$, we have that $\lambda < 0$ is increasing along the $u$ direction  (the proof of this result is analogous to that giving \eqref{R:monotonicity}, where now the fact that $K \sim -K_{-} < 0$ is exploited) and since $(r \partial_v \phi)^2$ is also increasing in the $u$ direction (due to the good signs at the right hand side of \eqref{appendix_square_eqn}), it follows that $u \mapsto I(u)$ is monotonically increasing. We stress that we exploited the information on the signs of $\partial_u \phi$ and $\partial_v \phi$, coming from lemma \ref{lemma:positive_trapped}. Let us now distinguish between two cases.

    If $I(u) \equiv +\infty$, then \eqref{appendix_lognumu_eqn} gives 
    \[
        \frac{\nu}{1-\mu}(u, +\infty) \equiv 0.
    \]
    Since $1-\mu$ is bounded (see \eqref{def_mu}), it follows that $\nu(u, +\infty) \equiv 0$, i.e.\ $r(u, +\infty) \equiv r_{-}$ (see \eqref{Cauchy_rminus}).
    
    On the other hand, lemma \ref{lemma:monotonicity} and \eqref{appendix_u_varpi_eqn} (whose  right hand side is bounded away from zero in a neighborhood of the curve $\gamma$ due to our choice of initial data and due to proposition \ref{prop:along_gamma}) give
    \[
    \varpi(u_2, v_2) - \varpi(u_1, v_2) \ge \varpi(u_2, v_1) - \varpi(u_1, v_1) > 0,
    \]
    for every $(u_1, v_1) \in J^+(\mathcal{A}) \cap J^{-}(\gamma)$ such that $(u_2, v_2) \in J^+(u_1, v_1)$.
    Hence:
    \[
    \varpi(u_2, +\infty) >  \varpi(u_1, +\infty), \quad \forall\, 0 < u_1 < u_2 \le U.
    \]
    In particular,  since we are assuming $\lim_{u \to 0 }\varpi(u, +\infty) =M$, we have 
    \begin{equation} \label{varpi_case_I_infty}
    \varpi(u, +\infty) > M, \quad \text{ for every } u \in (0, U].
    \end{equation}
    Now, the latter is incompatible with the fact that $r(u, +\infty) \equiv r_{-}$, indeed \eqref{varpi_case_I_infty} and \eqref{def_mu} imply
    \[
    \br{1-\mu}(u, +\infty) = \br{1- \mu}(r_{-}, \varpi(u, +\infty)) < \br{1- \mu}(r_{-}, M) = 0, \quad \forall\, u \in (0, U].
    \]
    But then, using that $\lambda = \kappa (1-\mu)$, we can use \eqref{appendix_v_varpi_equation} to write
    \[
    \varpi(u, v_2) \ge \varpi(u, v_1) - C(u) \int_{v_1}^{v_2} \frac{r^2 (\partial_v \phi)^2}{\lambda}(u, v')dv', 
    \]
    for some $C=C(u)$ and for $(u_1, v_1)$, $(u_2, v_2)$ as above.
    By taking the limit $v_2 \to +\infty$, we find that $I(u) < +\infty$ for some small value of $u$, which is a contradiction.

    Finally, let us consider the case $I(u) < +\infty$ for some $u \in (0, U]$. Similar to the previous reasoning for the Hawking mass, the zigzag argument of \cite{Dafermos_2005_uniqueness} gives
    \[
    I(u) \to 0, \text{ as } u \to 0.
    \]
    The above and \eqref{appendix_v_varpi_equation}, together with the facts that $\lambda = \kappa(1-\mu)$ and that $|1-\mu|$ is bounded, imply that:
    \begin{equation} \label{mass_close_to_M}
    \varpi(u, v) = \varpi(u, v_{\gamma}(u)) -\frac12 \int_{v_{\gamma}(u)}^v \br{ \frac{r^2 (\partial_v \phi)^2}{\lambda}(1-\mu)}(u, v')dv' = M + o_u(1),
    \end{equation}
    as $u \to 0$.
    Therefore, by \eqref{def_K} and since $r \sim r_{-}$ in the late-blueshift region, we have:
    \begin{equation} \label{K_close_mKm}
        K(u, v) = -K_{-} + o_u(1),
    \end{equation}
    for every $(u, v) \in J^+(\mathcal{\gamma})$.
    
    In this scenario, we are able to improve the bounds for $\lambda$ in the late-blueshift region. In fact, by exploiting the fact that $I(u) < + \infty$ for every $u$,  \eqref{appendix_lognumu_eqn} gives
    \[
    \frac{\nu}{1-\mu}(u, v_2) \gtrsim \frac{\nu}{1-\mu}(u, v_1), \quad \forall\,  (u, v_1) \in J^+(\gamma), \text{ with } (u, v_2) \in J^+(u, v_1).
    \]
    So, using that $1-\mu$ is bounded and bounded away from zero along $\Gamma_Y$ (see proposition \ref{prop:noshift}):  
    \[
    \int_{u_{\gamma}(v)}^u |\nu|(\tilde u, v_Y(\tilde u))d \tilde u \lesssim \int_{u_{\gamma}(v)}^u \frac{\nu}{1-\mu}(\tilde u, v_Y( \tilde u ))d \tilde u \lesssim \int_{u_{\gamma}(v)}^u \frac{\nu}{1-\mu}(\tilde u, v)d \tilde u,
    \]
    for every $(u, v) \in J^+(\gamma)$.
    This chain of inequalities can be further extended by exploiting a simple change of variables and the fact that $\kappa \sim 1$ in the no-shift region (see also the analogous steps in the proof of proposition \ref{prop:along_gamma}):
    \begin{align*}
        \int_{u_{\gamma}(v)}^u \frac{\nu}{1-\mu}(\tilde u, v)d \tilde u &\gtrsim \int_{u_{\gamma}(v)}^u |\nu |(\tilde u, v_Y(\tilde u)) d \tilde u \\
        &= \int_{v_Y(u)}^{v_Y(u_{\gamma}(v))} |\lambda|(u_Y( \tilde v), \tilde v) d \tilde v \\
        &\sim \int_{\frac{v_{\gamma}(u)}{1+\beta}}^{\frac{v}{1+\beta}} \kappa(u_Y(\tilde v), \tilde v)d \tilde v \ge \frac{v - v_{\gamma}(u)}{1+\beta}.
    \end{align*}
    Thus, \eqref{appendix_lambda_modified},  \eqref{K_close_mKm} and \eqref{bound_lambda_gamma} yield:
    \begin{align}
    -\lambda(u, v) &=-\lambda(u_{\gamma}(v), v) e^{-(2K_{-}+O( \beta))\int_{u_{\gamma}(v)}^u \frac{\nu}{1-\mu}(u', v)du' } \nonumber \\
    &\le -\lambda(u_{\gamma}(v), v)e^{-(2K_{-}+O(\beta))v} \nonumber \\
    &\le e^{-(2K_{-}+O(\beta))v}, \quad \forall\, (u, v) \in J^+(\gamma). \label{UB_lambda}
    \end{align}
    A similar procedure (for more details see also the proof of \cite[theorem 5.1]{Rossetti}) yields a lower bound on  $|\lambda|$ in $J^+(\gamma)$:
    \begin{equation} \label{LB_lambda}
    |\lambda|(u, v) \gtrsim e^{-(2K_{-}+O(\beta))(v - v_{\gamma}(u))}.
    \end{equation}

    Then, using the above bounds and \eqref{appendix_rdvphi}, together with the fact that $\partial_v \phi > 0$ in $\mathcal{P}$ (see lemma \ref{lemma:positive_trapped}), that $r\partial_u \phi$ is increasing in the $v$ direction (see \eqref{appendix_rduphi}), \eqref{pointwise_duphi} and that $u_{\gamma}(v) \sim e^{-2K_+\frac{v}{1+\beta}}$: 
    \begin{align}
        (r \partial_v \phi)(u, v) &= \br{r \partial_v \phi}(u_{\gamma}(v), v) - \int_{u_{\gamma}(v)}^{u} \br{ \lambda \partial_u \phi}(u', v) du'  \nonumber \\
        &\gtrsim  e^{-(2K_{-}+O(\beta))v} \int_{u_{\gamma}(v)}^{u} e^{(2K_{-}+O(\beta)) v_{\gamma}(u')} \partial_u \phi(u', v) du' \nonumber \\
        &\gtrsim e^{-(2K_{-}+O(\beta)) v} \int_{u_{\gamma}(v)}^{u}  e^{(2K_{-}+O(\beta)) v_{\gamma}(u')}
 \partial_u \phi(u', v_0) du' \nonumber \\
        & \gtrsim e^{-(2K_{-}+O(\beta))v} \int_{u_{\gamma}(v)}^{u}  x^{-\rho-\frac{1}{p}+O(\beta)}dx  \nonumber  \\
        &\ge C(p, \rho)
        e^{-(2K_{-}+O(\beta))v} \br{u_{\gamma}(v)^{1-\frac{1}{p}-\rho} - u^{1-\frac{1}{p}-\rho}}\nonumber \\
        &\gtrsim C(p, \rho)\br{ e^{-\br{\frac{2K_+}{p'} + O(\beta)}v} - u^{1-\frac{1}{p}-\rho} e^{-(2K_{-}+O(\beta))v} }, \label{LB_dvphi} 
    \end{align}
   for every $(u, v) \in J^+(\gamma)$, where $C(p, \rho) = (\rho + \frac{1}{p}-1)^{-1} > 0$, $p'$ is the conjugate exponent to $p$ and we used the sub-extremality condition $\rho = \frac{K_{-}}{K_+}> 1$.
    Hence, \eqref{UB_lambda} and \eqref{LB_dvphi} give
    \begin{equation} \label{blow_up_I}
        I(u) = -\int_{v_{\gamma}(u)}^{+\infty} \frac{(r \partial_v \phi)^2}{\lambda}(u, v')dv' \gtrsim \int_{v_{\gamma}(u)}^{+\infty} e^{\br{-\frac{4K_+}{p'} +2K_{-} + O(\beta)}v'}dv'.
    \end{equation}
    We emphasize that $\beta$ can be chosen sufficiently small compared to the initial data. If 
    \[
    \rho > \frac{2}{p'} = 2 \br{1 - \frac{1}{p}},
    \]
    then the last term in \eqref{blow_up_I} blows up for every $u \in (0, U]$ and we obtain a contradiction. In particular, in such a range of black hole parameters, mass inflation occurs. If $p=2$, then $\frac{1}{p'} = 1 - \frac{1}{p} = \frac12$ and thus mass inflation arises for all parameters of the reference black hole, in light of \eqref{basic_relation}.

    We now show that there is an alternative way to prove mass inflation, by exploiting the lower bound on $\partial_v \phi_{|\mathcal{H}^+}$ (rather than the lower bound on $\partial_u \phi$). When $p>2$, the (near-extremal) region $\mathcal{NE} \coloneqq \{1 < \rho < \frac{2}{p'} \}$ in the  space of parameters of the reference black hole is non--empty. However, the previous bounds do not cover this region.    
    In the next steps, we prove that mass inflation also occurs in a subregion of $\mathcal{NE}$ and that the size of such a subregion depends on how strict the lower bound on $\partial_v \phi_{|\mathcal{H}^+}$ is (see  fig.\ \ref{fig:inflation_lowerbound}).
    In fact, we first notice that  \eqref{dvphi_bounds_regular_region} gives:
    \[
        \br{r \partial_v \phi}(u, v) \gtrsim e^{-l(s)v}, \quad \forall\, (u, v) \in J^{-}(\mathcal{A}).
    \]
    Moreover, using the good signs of \eqref{appendix_rdvphi}:
    \[
    \partial_v \phi(u, v) \ge \partial_v \phi(u_{\mathcal{A}}(v), v) \gtrsim e^{-l(s)v}, \quad \forall\, (u, v) \in J^+(\mathcal{A}).
    \]
    Hence, if we use the latter and \eqref{UB_lambda} in the definition of $I(u)$:
    \[
        I(u) =- \int_{v_{\gamma}(u)}^{+\infty} \frac{(r \partial_v \phi)^2}{\lambda}(u, v')dv' \gtrsim \int_{v_{\gamma}(u)}^{+\infty} e^{\br{-2l(s) +2K_{-} + O(\beta)}v'}dv', \quad \forall\, u \in (0, U],
    \]
    where the latter diverges if
    \[
    l(s) < K_{-}. 
    \]     
\end{proof}
\bibliography{bibliography} 
\end{document}